\documentclass[11pt]{article}
\usepackage[T1]{fontenc}
\usepackage[margin=1in]{geometry}
\usepackage{mathtools,amssymb,amsthm,bm}
\usepackage{mathpazo,microtype,enumitem,xcolor}

\definecolor{referenceRed}{rgb}{0.58,0,0}
\definecolor{citationGreen}{rgb}{0,0.31,0.11}
\usepackage[colorlinks=true,linkcolor=referenceRed,citecolor=citationGreen,urlcolor=blue,pagebackref]{hyperref}
\usepackage[nameinlink,noabbrev]{cleveref}

\newtheorem{theorem}{Theorem}[section]
\newtheorem{lemma}[theorem]{Lemma}
\newtheorem{proposition}[theorem]{Proposition}
\newtheorem{corollary}[theorem]{Corollary}
\theoremstyle{definition}
\newtheorem{definition}[theorem]{Definition}
\newtheorem{innercustomlem}{Lemma}
\newenvironment{mylem}[1]{\renewcommand\theinnercustomlem{#1}\innercustomlem\itshape}
  {\endinnercustomlem}

\newcommand{\EE}{\mathbb E}
\newcommand{\PP}{\mathbb P}
\newcommand{\paren}[1]{\left(#1\right)}
\newcommand{\ind}{\mathbf 1}
\DeclareMathOperator{\poly}{poly}
\DeclareMathOperator{\Tr}{Tr}

\setlist{itemsep=3pt,topsep=5pt}
\allowdisplaybreaks[2]

\title{Color Coding for the Sherrington--Kirkpatrick Model}
\author{\begin{tabular}{c@{\hspace{3em}}c}
Alina Harbuzova & Saba Lepsveridze\\
{\small MIT, \texttt{hadought@mit.edu}} &
{\small MIT, \texttt{sabal@mit.edu}}\\[0.6em]
Mahbod Majid\thanks{Supported by the 2026 Apple Scholars in AI/ML PhD fellowship} & Ankur Moitra\\
{\small MIT, \texttt{mahbod@mit.edu}} &
{\small MIT, \texttt{moitra@mit.edu}}
\end{tabular}}
\date{}

\begin{document}
\backrefsetup{disable}
\pagenumbering{gobble}
\pagestyle{empty}
\maketitle
\thispagestyle{empty}
\begin{abstract}
We give a polynomial-time algorithm for approximating the partition function of mean-field mixed $p$-spin models to arbitrarily high accuracy throughout the second-moment regime. This improves the quasipolynomial running time of Bencs et al.~\cite{BHL25} and answers their open question. In particular, our result covers the entire replica-symmetric regime of the Sherrington--Kirkpatrick model. The algorithm is deterministic, runs in time polynomial in $n$ and $1/\varepsilon$, and succeeds for every typical realization of the disorder.

Our main technical contribution is a new application of color coding to the classical high-temperature expansion developed by Aizenman, Lebowitz, and Ruelle~\cite{ALR87} to study fluctuations of the partition function. We combine this approach with the zero-freeness from ~\cite{BHL25} to convert additive approximations into multiplicative ones.

\end{abstract}
\clearpage
\begingroup
\setcounter{tocdepth}{2}
\fontsize{10}{10.8}\selectfont
\tableofcontents
\endgroup
\clearpage
\pagestyle{plain}
\pagenumbering{arabic}
\hypersetup{pageanchor=true}
\backrefsetup{enable}

\section{Introduction}\label{sec:introduction}

Mean-field spin glasses are random probability distributions introduced half a century ago by Sherrington and Kirkpatrick~\cite{SK75} as idealized mathematical models for disordered magnetic materials. Since then, they have become fundamental objects of study at the intersection of statistical physics, high-dimensional inference, and average-case complexity. 

A central computational question in this setting is whether the partition function of a typical instance can be approximated efficiently to a prescribed accuracy. To be more precise, we consider \emph{mixed $p$-spin Ising models}, whose random \emph{Hamiltonians} are defined on  $\sigma \in \{\pm 1\}^n$ by

\begin{equation}\label{eq:model}
 H_G(\sigma)=\sum_{p=2}^{p_{\max}}
 \frac{\gamma_p}{n^{(p-1)/2}}
 \sum_{i_1,\ldots,i_p=1}^n
 G_{i_1,\ldots,i_p}\sigma_{i_1}\cdots\sigma_{i_p}.
\end{equation}

Here, $p_{\max} \ge 2$ is a fixed integer, $\gamma_2,\ldots,\gamma_{p_{\max}}$ are fixed real coefficients, and the entries $G_{i_1,\ldots,i_p}$ are independent standard Gaussians across all indices. Note that the original Sherrington--Kirkpatrick model corresponds to the special case $p_{\max}=2$ and $\gamma_2=1/\sqrt{2}$.

Given a Hamiltonian $H_G$ and a fixed inverse temperature $\beta \geq 0$, the corresponding \emph{Gibbs measure} on $\{\pm 1\}^n$ is defined by

\begin{equation*}
    d \mu_{G,\beta}(\sigma) \propto \exp\{\beta H_G(\sigma)\} \times d\rho(\sigma),
\end{equation*}
where normalization constant is the \emph{partition function} is
\begin{equation*}
    Z_G(\beta) = \mathbb{E}_{\sigma \sim \rho}\exp\{\beta H_G(\sigma)\}.
\end{equation*}
Here $\rho$ distributed uniformly on $\{\pm 1\}^n$. Given $G$, $\beta$, and $\varepsilon > 0$, our goal is to efficiently compute an estimate $\widehat Z_G(\beta)$ satisfying
\[
    \left|\frac{\widehat Z_G(\beta)}{Z_G(\beta)}-1\right| \le \varepsilon
\]
with high probability over $G$. We are also  interested in the range of  $\beta$ for which this task can be performed
efficiently.

\medskip

Much of the theory of spin glasses concerns their physical
properties, such as the limiting free energy and replica symmetry
breaking~\cite{Tal06,Pan13}. Over time, important connections
have emerged between the computational  and
physical phase transitions in these systems.

For example, for the hard-core model on bounded-degree graphs,
approximate counting is efficient below the uniqueness threshold
and hard in the worst case above it~\cite{Weitz06,SS14}.
Moreover, for the Sherrington--Kirkpatrick model, polynomial-time
sampling is possible for every fixed $\beta<1$, with vanishing error
in both normalized Wasserstein distance~\cite{AMS22,Cel22} and
total variation distance~\cite{LSS26}. For $\beta>1$, however,
disorder chaos rules out stable samplers even in normalized
Wasserstein distance~\cite{AMS22}.

For mixed $p$-spin models, a natural threshold arises from
the second moment of the partition function. To make this precise, define the mixture function
\[
    \xi(q)=\sum_{p=2}^{p_{\max}}\gamma_p^2q^p.
\]
Since  $\mathbb{E}_G[H_G(\sigma)H_G(\tau)]
    =n\xi({\langle\sigma,\tau\rangle}/{n})$, a Gaussian moment calculation and a large-deviation estimate for the overlap gives that
\begin{equation}\label{eq:partition-second-moment}
    \frac{\mathbb{E}_G[Z_G(\beta)^2]}
         {(\mathbb{E}_G Z_G(\beta))^2}=
    \mathbb{E}_{\sigma,\tau\sim\rho}
    \exp\!\left\{
        n\beta^2\xi\!\left(\frac{\langle\sigma,\tau\rangle}{n}\right)
    \right\} =
    \exp\!\left\{
        n\sup_{0\leq q\leq1}
        \bigl[\beta^2\xi(q)-I(q)\bigr]+o(n)
    \right\},
\end{equation}
where $\sigma$ and $\tau$ are independent and
\[
    I(q)=\frac{(1+q)\log(1+q)+(1-q)\log(1-q)}{2}
\]
is the entropy. This leads to the second-moment threshold
\begin{equation}\label{eq:second-moment-threshold}
    \beta_{\mathrm{2nd}}
    :=
    \left(\inf_{0 < q \leq 1}\frac{I(q)}{\xi(q)}\right)^{1/2}.
\end{equation}
For every fixed
$\beta>\beta_{\mathrm{2nd}}$, the second-moment ratio grows exponentially in $n$, while for 
$\beta<\beta_{\mathrm{2nd}}$, it remains bounded as $n\to\infty$~\cite{BHL25}. We call this range the \emph{second-moment regime}.

\subsection{Our Results}
\label{sec:our-results}

Our main result gives a deterministic polynomial-time algorithm for approximating the partition function of mixed $p$-spin Ising models throughout this second-moment regime,  with high probability over the disorder. 

\begin{theorem}
\label{thm:counting-intro}
Fix $\xi(q)=\sum_{p=2}^{p_{\max}}\gamma_p^2q^p$ and
$0<\beta_*<\beta_{\mathrm{2nd}}$. There exist a constant $C$,
a sequence\footnote{Here $\zeta_n$ bounds the probability that $Z_G$ has a complex zero in the disk at the origin with radius $\beta_*$, as given by the zero-freeness theorem of Bencs et al.~\cite{BHL25}.} $\zeta_n=o(1)$, and a deterministic algorithm with
the following guarantee.

The algorithm takes as input the disorder $G$ and parameters
$\varepsilon,\delta>0$, and outputs an estimator
$\widehat Z_G(\beta)$ for any given real $|\beta| \leq \beta_*$ satisfying
\[
\mathbb P_G\!\left[
\sup_{|\beta|\leq\beta_*}
\left|\frac{\widehat Z_G(\beta)}{Z_G(\beta)}-1\right|
\leq\varepsilon
\right]\geq1-\zeta_n-\delta.
\]
Evaluating $\widehat Z_G(\beta)$ requires at most
$(n/(\varepsilon\delta))^C$ operations in the exact-arithmetic
model.\footnote{Arithmetic operations, comparisons, and evaluations
of elementary functions have unit cost.}
\end{theorem}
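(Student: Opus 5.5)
The plan is to combine an additive approximation of the partition function with the zero-freeness of Bencs et al.~\cite{BHL25}. The additive approximation comes from the high-temperature (cluster) expansion of Aizenman--Lebowitz--Ruelle~\cite{ALR87}, evaluated efficiently by color coding; zero-freeness then upgrades additive error to multiplicative error.

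\textbf{Step 1: the expansion.} Write each monomial as $\exp\{\beta w_e\sigma_e\}=\cosh(\beta w_e)\bigl(1+\sigma_e\tanh(\beta w_e)\bigr)$, where $e$ ranges over hyperedges (index tuples) and $w_e$ is the rescaled Gaussian weight. Expanding the product and averaging over $\sigma\sim\rho$ gives
\[
Z_G(\beta)=\prod_e\cosh(\beta w_e)\sum_{F\ \text{even}}\prod_{e\in F}\tanh(\beta w_e),
\]
where the sum is over even subhypergraphs $F$ (every vertex has even degree). The prefactor is computable exactly. The sum decomposes into connected even components. The second-moment condition $\beta<\beta_{\mathrm{2nd}}$ is what makes the ALR fluctuation analysis work: in $L^2(G)$, the normalized quantity $Z_G/\EE Z_G$ is a sum of contributions of ``cycle-like'' components whose variance is bounded. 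The total $L^2$ contribution of components with more than $k$ edges is at most $e^{-ck}$, or at least $o(1)$ uniformly in $n$ once $k$ is large. I would choose $k=C\log(n/(\varepsilon\delta))$ and use Markov/Chebyshev to show that the truncation to components of size $\le k$ has additive error $\le \varepsilon'\,\EE Z_G$ with probability $1-\delta$. In fact this should be done for a polynomial in $\beta$: truncate the Taylor series of $\log Z_G(\beta)$ or of $Z_G$ as a polynomial in $\beta$ on a complex disk of radius slightly larger than $\beta_*$.

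\textbf{Step 2: color coding.} The truncated sum runs over connected even hypergraphs with at most $k=O(\log n)$ edges, of which there are $n^{O(\log n)}$. That count is the quasipolynomial barrier in \cite{BHL25}. Color the vertices randomly with $O(k)$ colors, or use a deterministic perfect hash family of size $2^{O(k)}\poly(n)=\poly(n)$. Then count only colorful structures by dynamic programming over subsets of colors, with the DP tracking the set of used colors and a current endpoint. Averaging over the family and dividing by the colorful probability $k!/k^k$ recovers the sum. The difficulty is that color coding counts trees and paths, while an even hypergraph is not path-like. I would therefore restrict to the dominant part of the expansion: in the second-moment regime, the $L^2$ mass concentrates on unions of vertex-disjoint cycles (for $p=2$) or their hypergraph analogues. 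Each cycle can be counted colorfully. Disjoint unions are then handled through the exponential formula, so one computes $\sum_\ell(\text{weighted colorful cycle counts})$ and exponentiates. This is consistent with the ALR picture, where $\log(Z/\EE Z)$ is asymptotically Gaussian and given by cycle sums. Error terms from non-simple structures should be negligible in $L^2$.

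\textbf{Step 3: additive to multiplicative.} By \cite{BHL25}, with probability $1-\zeta_n$, $Z_G$ has no zeros in a disk of radius $\beta_*'>\beta_*$. On that disk, $\log Z_G(\beta)-\log\EE Z_G(\beta)$ is analytic and, on the good event, bounded (by a Borell--Carathéodory argument from the $L^2$ bounds). Rather than approximating $Z_G$ additively, I would therefore approximate the low-order Taylor coefficients of $\log Z_G$, or the cycle sums directly. Analyticity and boundedness on the larger disk give geometric decay of the coefficients, so truncating at degree $O(\log(1/\varepsilon))$ yields $\varepsilon$-multiplicative accuracy uniformly over $|\beta|\le\beta_*$.

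\textbf{Main obstacle.} I expect the hardest step to be Step 2 together with its $L^2$ error control. One must show that the contributions color coding cannot handle, namely non-cycle even structures and overlapping components, are negligible throughout the whole second-moment regime rather than just at small $\beta$. One must also organize the computation so that only $O(\log n)$-length colorful structures are needed. I would try to prove that the variance of the discarded part is bounded by the tail of $\EE_{\sigma,\tau}\exp\{n\beta^2\xi(\langle\sigma,\tau\rangle/n)\}$ restricted to higher chaos degrees, which \eqref{eq:partition-second-moment} controls.
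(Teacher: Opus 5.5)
\textbf{Gap: dropping non-cycle even graphs limits the accuracy to about $n^{-1/2}$.} In Step~2 you keep only vertex-disjoint unions of cycles and assert that the remaining even structures ``should be negligible in $L^2$.'' At the accuracy the theorem demands, they are not.

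\begin{itemize}
\item Since $\EE_g[t_Ft_{F'}]=0$ for $F\neq F'$, the discarded part has squared $L^2$ norm $\sum_F \EE_g t_F^2$ over all non-cycle even $F$. There is no cancellation.
\item The figure-eights alone already give order $1/n$. These are two cycles of lengths $\ell_1,\ell_2\ge 3$ glued at one vertex, so they have $e=\ell_1+\ell_2$ edges and $v=e-1$ vertices. Each shape contributes $\asymp n^{v}(\beta^2/n)^{e}=\beta^{2e}/n$, so the total is $\Theta(1/n)$.
\item Hence your discarded part has $L^2$ size $\Theta(n^{-1/2})$ relative to $\EE_g Z_g$, while $Y_g$ is of order one. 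No choice of $k$ changes this.
\item So a cycles-only estimator cannot give $\varepsilon\ll n^{-1/2}$ in time $\poly(n/(\varepsilon\delta))$.
\end{itemize}

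More generally, $e-v=\tfrac12\sum_i(\deg_F(i)-2)\ge s(F)$, where $s(F)$ counts branching vertices (degree $\ge 4$). To reach squared error $e^{-L}$ you must \emph{compute} the even graphs with up to $b\approx L/\log(n/L)$ branching vertices, not drop them. Your ``main obstacle'' paragraph points at exactly this step, but the resolution is to enlarge what color coding handles, not to prove the remainder small.

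\textbf{How the paper closes it.}
\begin{itemize}
\item It keeps $\mathcal F_{K,b}$: even graphs with $|F|\le K=O(L)$ and fewer than $b$ branching vertices.
\item It bounds the rest by marking incidences at a fixed $b$-set with weight $\theta=\sqrt{1+c'n/b}$. This gives $\binom nb\theta^{-4b}O(1)\le C(Cb/n)^b$.
\item It evaluates colorful terms by enumerating the branching set $B$, which has $e^{O(L)}$ choices. Once $B$ is fixed, every other vertex has degree two. The graph then splits uniquely into terminal--terminal edges, paths between terminals (possibly returning to the same one), and cycles avoiding $B$.
\item All these pieces are summed by a transfer-matrix dynamic program with nilpotent color variables $y_c^2=0$ and variables $v_a$ recording terminal degrees.
\end{itemize}
Your ``hypergraph analogues of cycles'' for mixed models has the same problem and carries no useful mass. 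The fix is to keep a bounded number $r(F)<b$ of non-pair interactions, whose discarded mass is $\le C(CK/n)^{b/2}$, and to absorb their vertices into the terminal set.

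\textbf{Two further steps fail as written.}
\begin{itemize}
\item \emph{Perfect hash families.} They do not make ``average and divide by $k!/k^k$'' exact, because the fraction of functions in the family that are injective on a given set varies with the set. The paper instead picks balanced colorings one at a time by conditional expectations, so that $\sum_{i<t}\langle V_{h_i},V_{h_t}\rangle_w\le (t-1)\|P_{\mathcal F}\|_w^2$. This gives $\|\widehat P-P_{\mathcal F}\|_w^2\le 8^K\|P_{\mathcal F}\|_w^2/T$. It requires the dynamic program to accept coefficients that depend on a partial coloring.
\item \emph{Taylor coefficients of $\log Z_G$.} Approximating these in Step~3 reintroduces the multigraphs with high-degree vertices that defeat color coding. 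The paper never touches $\log Z_G$. It bounds the coefficient error in the weighted norm and lifts it to the entire function $\mathcal L_g f$, with $\EE_g|\mathcal L_g f(z)|^2\le C\|f\|_w^2$ on $|z|\le R_0$. Cauchy's formula then gives uniformity in $\beta$. Finally it divides by the high-probability lower bound on $|Y_g|$, obtained from zero-freeness, the second-moment bound, and Harnack's inequality.
\end{itemize}
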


In particular, for the SK model, $\beta_{\mathrm{2nd}}=1$, so our
result gives polynomial-time counting throughout the
high-temperature phase.

Our work crucially builds on the result of Bencs, Huang, Lee, Liu, and
Regts~\cite{BHL25}, who obtained a deterministic quasipolynomial-time algorithm for counting in the same regime. For every fixed radius $0 < \beta_* < \beta_{\mathrm{2nd}}$, they prove that there exists $\zeta_n = o(1)$ such that the partition function $Z_G(z)$ is nonzero in the complex disk $|z| \leq \beta_*$, with probability at least $1-\zeta_n$. Using this zero-freeness and Barvinok's interpolation method~\cite{Bar16book}, they approximate $Z_G$
by truncating the Taylor expansion of $\log Z_g$ at degree $C\log(n/\varepsilon)$.
Brute force computation of the Taylor coefficients takes $n^{C\log(n/\varepsilon)}$ time. \Cref{thm:counting-intro} improves this running time to polynomial.

Our main contribution is to reduce the runtime of this algorithm. The natural idea is to use color coding~\cite{AYZ95}, which can efficiently evaluate sums indexed by sparse graphs via dynamic programming without enumerating their terms individually. The difficulty is that, even for the SK model, the Taylor expansion used in~\cite{BHL25} contains terms indexed by multigraphs with high-degree vertices, making it unclear how to apply color coding directly.

We instead apply color coding to the high-temperature expansion of Aizenman, Lebowitz, and Ruelle~\cite{ALR87}, originally introduced to study fluctuations of the SK partition function. For SK, this expansion writes the partition function as a weighted sum over even subgraphs of the complete graph. Aizenman, Lebowitz, and Ruelle showed that unions of vertex-disjoint cycles give the leading contribution to the sum. While this suffices to study the limiting fluctuations, to obtain a prescribed accuracy, we also retain graphs with few edges and few branching vertices, namely vertices of degree at least four. Once the branching vertices are fixed, every retained graph decomposes into paths joining them and cycles disjoint from them. Color coding and dynamic programming then allow us to evaluate the total weight of these graphs efficiently. We explain the argument in more detail in \Cref{sec:overview}.

For mixed $p$-spin models, we additionally control the number of nonpair interactions. After fixing their incident vertices together with the branching vertices, the remaining computation again reduces to paths and cycles. The full argument is in \Cref{app:mixed}. 

\subsection{Acknowledgements}

The authors explored several ways to apply color coding to the
polynomial used by Bencs et al.~\cite{BHL25} to improve their
quasipolynomial running time to polynomial. In the course of this
exploration, OpenAI's GPT-6 Astra identified the high-temperature
graph expansion as a more convenient object for color coding.
The central argument based on this choice is due to GPT-6 Astra.
AI assistance was also used to develop arguments and to
prepare the manuscript. The authors take responsibility for the 
mathematical claims, proofs, and citations.

Alina Harbuzova is supported by Schwarzman College of Computing PhD Fellowship.

\paragraph{Concurrent work.}
Schmidhuber and Zlokapa~\cite{SZ26} concurrently and independently obtained deterministic polynomial-time counting algorithms for SK throughout \(\beta<1\), using closely related graph-expansion and color-coding techniques. For mixtures without any quadratic interactions, however, they use different approach based on cavity method that reaches the static transition threshold $\beta_{\mathrm {stat}} \geq \beta_{\mathrm{2nd}}$. Our result extends color coding argument to general finite mixtures, including quadratic and higher-order interactions together.

The recent OpenAI mathematics release also includes
manuscripts on the spectral gap and cutoff of SK Glauber
dynamics~\cite{OAI:A-spectral-gap-throughout-the-high-temperature-Sherrington-Kirkpatrick-phase-September-24-2026,OAI:Cutoff-throughout-the-high-temperature-Sherrington-Kirkpatrick-phase-September-24-2026},
obtaining $O_\beta(n\log n)$ worst-start mixing at fixed total-variation
accuracy for every fixed $0\le\beta<1$, with high probability over the
disorder.
Their proofs use stochastic localization and estimates for posterior Gibbs measures. The spectral-gap argument also yields randomized polynomial-time approximate counting by making its estimates uniform on compact subcritical temperature intervals and applying standard annealing. Thus these results provide a sampling-based route to counting, whereas our algorithm is deterministic and directly evaluates the high-temperature expansion using color coding. Conversely, our counting theorem does not establish a spectral gap or rapid mixing of Glauber dynamics.

\section{Technical overview}\label{sec:overview}

In this section, we explain the argument behind Theorem~\ref{thm:counting-intro}. For clarity, we present the argument for the SK model; the modifications required for mixed \(p\)-spin models are described in \Cref{app:mixed}.

We fix a target interval $|\beta|\le\beta_*$ with $0<\beta_*<1$.
Let $\EE_\sigma$ denote the uniform average over $\{\pm 1\}^n$ and recall that our goal is to estimate the partition function $$Z_g(\beta) = \EE_{\sigma}e^{\beta H_g(\sigma)}\,,$$ where $H_g(\sigma)=n^{-1/2}\sum_{i<j}g_{ij}\sigma_i\sigma_j$
and $(g_{ij})_{i<j}$ are independent standard Gaussians. Our algorithm approximates instead the \emph{normalized partition function} $$Y_g(\beta) = \frac {Z_g(\beta)}{\EE_g Z_g(\beta)}\,.$$

Since the normalization factor $\EE_g Z_g(\beta) = e^{(n-1)\beta^2/4}$ is explicit and easily computable, an estimate for $Y_g(\beta)$ gives the estimate for $Z_g(\beta)$ with the same multiplicative approximation error. 

\bigskip

We first express the normalized partition function as an explicit prefactor times a sum over graphs, using the high-temperature expansion of Aizenman, Lebowitz, and Ruelle~\cite{ALR87,ALR88}. Denote the set of edges of a complete graph on $\{1, \ldots, n\}$ by $E$. We can view the Hamiltonian $H_g(\sigma)$ as a sum over the edges $\{i, j\} \in E$. For each edge $\{i, j\}$, 
\[
 e^{\beta g_{ij}\sigma_i\sigma_j / \sqrt{n}}
 =\cosh(\beta g_{ij}/\sqrt{n})\paren{1 + \sigma_i \sigma_j\tanh (\beta g_{ij}/\sqrt{n}) }\,.
\]
Then, denoting $t_{ij}(\beta)\coloneqq\tanh(\beta g_{ij} / \sqrt{n})$, we can express
\begin{equation}\label{eq:prefactor}
Y_g(\beta) = \frac{\prod_{\{i,j\}\in E}\cosh(\beta g_{ij} / \sqrt{n})}{\EE_g Z_g(\beta)} \cdot \underbrace{\EE_\sigma\prod_{\{i,j\}\in E}(1+t_{ij}(\beta)\sigma_i\sigma_j)}_{\eqcolon P(t)}\,.
\end{equation}
The prefactor is positive and computable in polynomial time at each real temperature, so this reduces the task of approximating $Y_g(\beta)$ to approximating $P(t(\beta))$. For a subset of edges $F \subseteq E$, write $t_F\coloneqq\prod_{\{i,j\}\in F}t_{ij}$, with $t_\varnothing\coloneqq1$. Then, 
\begin{equation}
 P(t) =\EE_\sigma\prod_{\{i,j\}\in E}(1+t_{ij}\sigma_i\sigma_j) =\sum_{F\subseteq E}t_F\prod_{i=1}^n\EE_\sigma\sigma_i^{\deg_F(i)}
 =\sum_{F\subseteq E \text{ even}}t_F\,.\label{eq:ov-graph-expansion}
\end{equation}
Here a subgraph \(F\) is called \emph{even} if every vertex has even degree in \(F\). 

\medskip

The main novelty is to construct an efficiently computable deterministic estimate $\widehat P$ satisfying 
\begin{equation}\label{eq:overview-estimator-additive}
    \sup_{|\beta| \leq \beta_*}\EE_g\left|\widehat P(t(\beta))-P(t(\beta))\right|^2 \leq n^{-D}
\end{equation}
for any given constant $D>0$. Before describing $\widehat P$, we explain how the coefficient estimate underlying the additive estimate in \eqref{eq:overview-estimator-additive} lead to the desired multiplicative guarantee in \Cref{thm:counting-intro}.

In \Cref{sec:analysis}, we convert the estimate leading to \eqref{eq:overview-estimator-additive} into a uniform error guarantee on $\widehat Y_g$, which is obtained by multiplying $\widehat P$ by the explicit prefactor in \eqref{eq:prefactor}:
\begin{equation}\label{eq:overview-estimator-additive2}
    \EE_g \sup_{|\beta| \leq \beta_*}\left|\widehat Y_g(\beta)-Y_g(\beta)\right|^2 \leq n^{-D}\,.
\end{equation}
Finally, to obtain the multiplicative guarantee $\bigl| \frac{\widehat Y_g(\beta)}{Y_g(\beta)} -1 \bigr|\leq \varepsilon$, we use Markov's inequality to bound
\begin{equation*}
    \PP_g\Big[ \sup_{|\beta| \leq \beta_*}\big|\widehat Y_g-Y_g\big| \geq \varepsilon \inf_{|\beta| \leq \beta_*}\big|Y_g\big| \Big]
    \leq \PP_g\Big[ \inf_{|\beta| \leq \beta_*}\big|Y_g\big| \leq \gamma \Big]
    + \frac{\EE_g \sup_{|\beta| \leq \beta_*} \big|\widehat Y_g-Y_g\big|^2}{\varepsilon^2\gamma^2}
\end{equation*}
for every $\varepsilon,\gamma>0$. We then use the zero-freeness theorem of \cite{BHL25} to bound $\PP_g\bigl[ \inf_{|\beta| \leq \beta_*}|Y_g| \leq \gamma \bigr]$, thereby reducing our task to bounding $\sup_{|\beta| \leq \beta_*}\EE_g\big|\widehat P(t(\beta))-P(t(\beta))\big|^2$.

\medskip

We now construct an efficiently computable estimate $\widehat P$. \cite{ALR87} prove that the expansion of $P$ in \eqref{eq:ov-graph-expansion} is dominated by unions of disjoint cycles. To achieve a prescribed accuracy guarantee, we refine this approximation by including graphs $F$ with a small number of \emph{branching vertices}, that is, vertices of degree at least \(4\). 

Note that the monomials corresponding to distinct graphs $F \neq F'$ are orthogonal under the Gaussian disorder due to symmetry :  $\EE_g [t_F t_{F'}] = 0$. Consequently, for any family of graphs $\mathcal F$, we have 
$$
\EE_g \Bigl| \sum_{F \in \mathcal F} t_F \Bigr|^2 = \sum_{F \in \mathcal F} \EE_g[t_F^2] = \sum_{F \in \mathcal F} \prod_{\{i,j\} \in F }\EE_g \tanh^2(\beta g_{ij} /\sqrt{n}) \leq \sum_{F \in \mathcal F} \left(\frac {\beta^2} n \right)^{|F|}\,,
$$
Now, for a fixed even graph $H$ with $v$ vertices and $e$ edges, let $\mathcal F_H$ denote its copies on $[n]$. Then,
$$
\EE_g \Bigl| \sum_{F \in \mathcal F_H} t_F \Bigr|^2 \leq n^v \left(\frac {\beta^2} n \right)^{e} = \beta^{2e} n^{v-e}\,.
$$
Note that every branching vertex contributes to $e-v$ as
$$
e - v = \frac 12 \sum_{i: \deg_H(i)>0} (\deg_H(i)-2) \geq |\{ i: \deg_H(i) \geq 4 \}|\,.
$$
Thus $e=v$ for precisely disjoint union of cycles, while every other graph has $e > v$, that introduces an additional factor $n^{v-e}$ in the second-moment bound. For our approximation, we retain graphs with at most $K$ edges and fewer than $b$ branching vertices. Thus, we define
\[
 P_{K,b}(t)\coloneqq\sum_{F\in\mathcal F_{K,b}}t_F\,, \qquad \text{where }\ \mathcal F_{K,b}\coloneqq\{F\subseteq E:F\text{ even},\ |F|\le K,\ |\{i: \deg_F(i)\geq 4\}|<b\}\,.
\]
In \Cref{sec:truncation}, we bound the expected squared error \(\mathbb E_g|P(t)-P_{K,b}(t)|^2\) uniformly over real \(|\beta|\le\beta_*\), as stated in \Cref{lem:sk-truncation}. In particular, for every fixed \(D>0\), we can make this error at most \(n^{-D}\) by choosing suitable cutoffs
\[
K=O_D(\log n)
\qquad\text{and}\qquad
b=O_D(1).
\]

Note that if we could efficiently compute $P_{K,b}(t)$, we would be done. However, naive enumeration of $\mathcal F_{K,b}$ takes quasipolynomial time, since it contains as many graphs. Instead, we construct an efficiently computable deterministic estimate $\widehat P$ of $P_{K,b}$.

\medskip

To this end, we use the color coding method introduced by Alon, Yuster, and Zwick~\cite{AYZ95}. Let $h:[n]\to[K]$ be a uniformly random balanced coloring, and call a graph \emph{colorful} if its vertices have distinct colors. Denoting by $p_v$ the probability that a fixed set of $v$ vertices is colorful, we construct the following estimator:
\begin{equation}\label{eq:color-estimator-sketch-ov}
V_h\coloneqq\sum_{F\in\mathcal F_{K, b}}
\frac{\ind\{F\text{ colorful under }h\}}
     {p_{|V(F)|}}t_F.
\end{equation}
To obtain a more accurate estimate, we sample $T$ balanced colorings $h_1,\dots,h_T$ independently and set
$$
\widehat P \coloneq \frac1T\sum_{t=1}^T V_{h_t}\,.
$$
In \Cref{sec:estimator}, we show that
$$\EE_{h_1,\dots,h_T}\EE_g|\widehat P-P_{K,b}|^2\leq e^{O(K)}/T \leq n^{-D}$$
for any prescribed $D >0$, where the last inequality follows from taking a sufficiently large $T=\poly(n)$ in the regime $K=O(\log n)$.

To obtain a deterministic estimate, we replace the random colorings by deterministic ones, using the method of conditional expectations~\cite{ES73,Rag88}, see \Cref{subsec:color-derandomization}.

\bigskip

It remains to show that the estimate $\widehat P$ can be computed efficiently in polynomial time. In particular, it suffices to show that for any given coloring $h$, $V_h$ can be computed in polynomial time.

Under a fixed coloring $h$, we group colorful graphs $F$ in $\mathcal F_{K, b}$ according to their set $B(F)$ of branching vertices. Since $|B(F)|<b=O(1)$, there are at most $O(n^b) = \poly(n)$ such groups. Note that $V(F)\setminus B(F)$ decomposes into paths adjacent to the vertices of $B$ and cycles disjoint form $B$.
This allows us to evaluate the weighted sum in each group using dynamic programming in $\poly(n)e^{O(K)}$ time, see \Cref{subsec:color-dp}. In the regime $K=O(\log n)$ this gives the desired polynomial-time algorithm.

\paragraph{Organization of the paper.}
 In \Cref{sec:analysis}, we prove the theorem for pure SK model using the truncation and estimation bounds established in the subsequent sections. Specifically, \Cref{sec:truncation} bounds the error from restricting the graph expansion to even graphs with few edges and few branching vertices, while \Cref{sec:estimator} presents the deterministic color-coding algorithm. We analyze the estimator in \Cref{subsec:color-estimator}, derandomize the choice of colorings in \Cref{subsec:color-derandomization}, and show how to evaluate the estimator efficiently through weighted sums of paths and cycles in \Cref{subsec:color-dp}.

\Cref{app:mixed} extends the proof to mixed $p$-spin models. We derive the corresponding high-temperature expansion in \Cref{app:mixed-expansion}, establish the truncation bound in \Cref{app:mixed-truncation}, and extend the color-coding estimator and its dynamic programming implementation in \Cref{app:mixed-color-coding}. We then prove a high-probability lower bound on the normalized partition function in \Cref{app:mixed-lower-bound} and combine these ingredients to complete the proof in \Cref{app:mixed-completion}.

\section{Proof of the main theorem}\label{sec:analysis}
In this section, we specialize to the pure Sherrington--Kirkpatrick model. We prove the \Cref{thm:counting-intro} in this case  using the results established in subsequent sections. Recall from \Cref{sec:overview} that our main goal is to estimate the graph expansion 
$$
P(t) = \sum_{\substack{F\subseteq E \\ F\text{ even}}}t_F\,, \qquad \text{with }t_F = \prod_{\{i,j\}\in F} t_{ij}\,,
$$
where, for real inverse temperature $\beta$, we write $t_{ij} = t_{ij}(\beta) = \tanh(\beta g_{ij} / \sqrt{n})$. \Cref{sec:truncation} and \Cref{sec:estimator} construct $\widehat P$, which also takes the form of a multilinear polynomial in $t = (t_{ij})_{\{i,j\} \in E}$, $$\widehat P = \sum_{\substack{F\subseteq E\\F\text{ even}}} \widehat P_F t_F\,.$$  To reason about multilinear polynomials $f(t)=\sum_Ff_Ft_F$ such as $P$ and $\widehat P$,  we introduce a weighted norm below.
\begin{definition}\label{def:w_norm}
    For $w > 0$ and a multilinear polynomial $f(t)=\sum_Ff_Ft_F$, we define the weighted norm and inner product to be 
\begin{equation}\label{eq:coefficient-norm}
 \|f\|_w^2\coloneqq\sum_F|f_F|^2w^{|F|},\qquad
 \langle f,h\rangle_w\coloneqq\sum_F f_F\overline{h_F}w^{|F|}.
\end{equation}
\end{definition}

Throughout the proof, we fix constants $\beta_*$, $\lambda$ and $R_0$, such that
\begin{equation}\label{eq:w_def}
 \beta_*<R_0<\sqrt\lambda<1,\qquad \text{and define }w\coloneqq\lambda/n\,.
\end{equation}

Key observation is that for any $\beta$ such that $|\beta| < \beta_*$, the entries $t_{ij}(\beta)$ are  odd functions of the centered Gaussians $g_{ij}$, so for any $F \ne F'$, we have $\EE_g[t_F(\beta)t_{F'}(\beta)]=0$. Hence,
\begin{align*}
 \EE_g|f(t)|^2 &=\sum_{F,F'}f_F\overline{f_{F'}}
       \EE_g[t_F t_{F'}]
       =\sum_F|f_F|^2\prod_{\{i,j\}\in F}\EE_g\tanh^2(\beta g_{ij} / \sqrt{n})\\
 &\le\sum_F|f_F|^2(\beta^2/n)^{|F|} \leq \sum_F|f_F|^2 w^{|F|} = \|f\|_w^2\,,
\end{align*}
where the last inequality holds for any $w \geq \beta^2 / n$. In particular, we have that
$$
\EE_g |\widehat P(t(\beta)) - P(t(\beta))|^2 \leq \|\widehat P - P\|_w^2\,.
$$
In the remainder of the paper, we fix constants $\beta_*, R_0,$ and $\lambda$ such that $\beta_* < R_0 <\sqrt{\lambda} < 1$ and set 
\begin{equation*}
    w := \frac{\lambda}{n}.
\end{equation*}

Combined, \Cref{lem:sk-truncation} and \Cref{prop:sk-approximation} produce an estimate $\widehat P(t)$ satisfying, for every $L \geq \log n$, 
\begin{equation}\label{eq:comb_estimates}
    \|\widehat P - P\|_w^2 \leq e^{-L}\,.
\end{equation}
Moreover, $\widehat P$ can be computed in $\poly(n) e^{O(L)}$ arithmetic operations. The role of this section is to show that the approximation in \eqref{eq:comb_estimates} is sufficient for the multiplicative estimation of the partition function and yields the result in \Cref{thm:counting-intro}. 

Recall that the normalized partition function can be expressed as 
$$
Y_g(\beta) = \frac{\prod_{\{i,j\}\in E}\cosh(\beta g_{ij} / \sqrt{n})}{\EE_g Z_g(\beta)} \cdot P(t)\,,
$$
where the prefactors can be computed in polynomial time. Accordingly, for an estimate $\widehat P (t)$, our algorithm computes 
\begin{equation}\label{eq:y-hat-def}
\widehat Y_g(\beta) = \frac{\prod_{\{i,j\}\in E}\cosh(\beta g_{ij} / \sqrt{n})}{\EE_g Z_g(\beta)} \cdot \widehat P(t)\,.
\end{equation}

\begin{proposition}
\label{prop:coefficient-reduction}
Fix \(0<\beta_*<1\), and let \(w=\lambda/n\) be as above. There exist
constants $c > 0, C \geq 1$ and a sequence \(\zeta_n\to 0\), depending only on
\(\beta_*\), such that the following holds for all sufficiently large \(n\). Let \(0<\varepsilon,\delta<1/2\), and let
$
\widehat P(t)=\sum_{F\subseteq E}\widehat P_F t_F
$
be a deterministic multilinear polynomial whose coefficients are independent of $g$ and $\beta$, satisfying
\[
\|\widehat P-P\|_w^2
\le C^{-1} \varepsilon^2\delta^c.
\]
For each disorder realization $g$ and $|\beta|\leq \beta_*$, in additional $O(n^2)$ arithmetic operations, one can compute $\widehat Y_g(\beta)$ satisfying
\[
\mathbb P_g\left[
\sup_{|\beta|\le\beta_*}
\left|
\frac{\widehat Y_g(\beta)}{Y_g(\beta)}-1
\right|
\le \varepsilon
\right]
\ge 1-\zeta_n-\delta.
\]
\end{proposition}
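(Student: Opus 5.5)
The plan follows the reduction sketched in \Cref{sec:overview}. There are three ingredients: a uniform-in-$\beta$ second-moment bound on $\widehat Y_g-Y_g$, a high-probability lower bound on $\inf_{|\beta|\le\beta_*}|Y_g(\beta)|$ from the zero-freeness theorem of \cite{BHL25}, and Markov's inequality to combine them. The computational claim is immediate. Given $\widehat P(t)$, the prefactor $\prod_{\{i,j\}\in E}\cosh(\beta g_{ij}/\sqrt n)/e^{(n-1)\beta^2/4}$ costs $O(n^2)$ operations, and so does forming $t_{ij}(\beta)$ as inputs. So the content lies in the probability estimate.

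\textbf{Step 1: uniform $L^2$ error.} Write $A_g(\beta)$ for the prefactor and set $D(\beta)=\widehat P(t(\beta))-P(t(\beta))=\sum_F(\widehat P_F-P_F)t_F(\beta)$. Pointwise in $\beta$, orthogonality gives $\EE_g|D(\beta)|^2\le\|\widehat P-P\|_w^2$. To move the supremum inside the expectation, I extend $\beta$ to a complex disk $|z|\le R_0$. There $t_{ij}(z)$ is analytic as long as $|z g_{ij}|/\sqrt n<\pi/2$, which fails only on an event of probability $e^{-\Omega(n)}$ (intersect with $\max|g_{ij}|\le\sqrt{n}$). I then use subharmonicity: $\sup_{|\beta|\le\beta_*}|D(\beta)|^2$ is bounded by $C$ times the average of $|D|^2$ over a circle of radius $r\in(\beta_*,R_0)$. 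This reduces the task to bounding $\EE_g|D(z)|^2$ for complex $|z|\le R_0$.

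For complex $z$, orthogonality still holds, because $t_{ij}(z)$ is odd in $g_{ij}$ and cross terms $\EE[t_F\overline{t_{F'}}]$ vanish for $F\ne F'$. Moreover $\EE|\tanh(zg/\sqrt n)|^2\le(1+o(1))|z|^2/n\le\lambda/n=w$, using $R_0<\sqrt\lambda$. Truncating $g$ keeps the odd symmetry, so this is fine. The same argument controls $A_g$: on the disk, $\EE_g\sup|A_g|^2$ is bounded by a constant $C'$. This follows since $\log A_g=\sum(\log\cosh(\beta g_{ij}/\sqrt n)-\beta^2/(2n))+O(1)$ has bounded exponential moments, giving $\EE e^{c\sum(g_{ij}^2-1)\beta^2/n}=O(1)$ uniformly. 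Combining the two bounds, e.g.\ by Cauchy--Schwarz in the form $\EE\sup|A D|\le(\EE\sup|A|^2)^{1/2}(\EE\sup|D|^2)^{1/2}$, gives
\[\EE_g\sup_{|\beta|\le\beta_*}|\widehat Y_g-Y_g|\le C''\|\widehat P-P\|_w.\]
This first-moment version suffices for Markov's inequality.

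\textbf{Step 2: lower bound on $|Y_g|$.} Here I invoke \cite{BHL25}. With probability $1-\zeta_n$, $Z_g$ has no zeros in $|z|\le R_0$, so $\log Y_g$ is analytic there. Their argument (bounded second moment, $\EE Y_g=1$, $\EE|Y_g|^2=O(1)$ on the disk) gives a tail bound for $\log Y_g$. Via a Borel--Carathéodory/Harnack estimate, this yields $\PP[\inf_{|\beta|\le\beta_*}Y_g\le\gamma]\le\zeta_n+C\gamma^{1/c'}$. Equivalently, one can use the known convergence of $\log Y_g$ to a Gaussian of bounded variance uniformly on the interval. I expect this step to be the main obstacle. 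It requires converting zero-freeness plus moment bounds into a quantitative polynomial lower-tail bound uniform over $\beta$, and its exponent determines $c$.

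\textbf{Step 3: combine.} Choose $\gamma$ with $C\gamma^{1/c'}=\delta/2$, i.e.\ $\gamma\asymp\delta^{c'}$. By Markov's inequality,
\[\PP\Big[\sup|\widehat Y_g-Y_g|\ge\varepsilon\gamma\Big]\le\frac{C''\|\widehat P-P\|_w}{\varepsilon\gamma}\le\delta/2,\]
provided $\|\widehat P-P\|_w^2\le C^{-1}\varepsilon^2\delta^{c}$ with $c=2c'+2$. On the complement of both bad events, $\sup|\widehat Y_g/Y_g-1|\le\varepsilon$. Finally, absorb the $e^{-\Omega(n)}$ truncation event into $\zeta_n$.
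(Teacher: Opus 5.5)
Your Steps 1 and 3 are correct, but Step 1 takes a different route from the paper.

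\textbf{Step 1 versus the paper.} You keep the prefactor $A_g$ and the $\tanh$-polynomial separate. To complexify, you therefore truncate to $\max_{i<j}|g_{ij}|\le\sqrt n$, which keeps $\tanh(zg_{ij}/\sqrt n)$ away from its poles at $\pm i\pi/2$. This works:
\begin{itemize}
\item orthogonality survives the symmetric truncation;
\item $|\tanh u|^2\le|u|^2(1+O(|u|^2))$ on $|u|\le 1$ gives $\EE_g\bigl[|t_{ij}(z)|^2\ind\{|g_{ij}|\le\sqrt n\}\bigr]\le(1+O(1/n))|z|^2/n\le w$;
\item Cauchy--Schwarz against $\EE_g\sup_{|\beta|\le\beta_*}A_g(\beta)^2=O(1)$ gives a first-moment bound, which is all Markov needs. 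The $A_g$ bound follows from $\log\cosh x\le x^2/2$ together with monotonicity in $\beta^2$ to handle the supremum.
\end{itemize}
The paper instead uses the lift $\mathcal L_g f$ of \Cref{lem:entire-lift}. Since $\cosh\cdot\tanh=\sinh$, prefactor and polynomial merge into one \emph{entire} function that agrees with $\widehat Y_g-Y_g$ on the real axis. Its complex second moment is computed exactly:
$\EE_g|\mathcal L_gf(z)|^2=\cosh(|z|^2/n)^{n(n-1)/2}\sum_F|a_F|^2\tanh(|z|^2/n)^{|F|}$.
This gives an $L^2$ bound with no truncation event, no $1+O(1/n)$ slack (hence no large-$n$ requirement), and no Cauchy--Schwarz. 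Your version pays an extra $e^{-\Omega(n)}$ in $\zeta_n$ and needs $\delta^{2c'+2}$ rather than $\delta^{2c'+1}$ in the hypothesis. Neither difference matters.

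\textbf{Step 2 is incomplete.} You name the right tools, but you call this step the main obstacle and never derive the bound. Also, the normalization you cite, $\EE_g Y_g=1$, is not the one that is used. The argument is short.
\begin{itemize}
\item By \Cref{fact:overlap-moment}, $\EE_g|Y_g(z)|^2\le(1-|z|^2)^{-1/2}$ for complex $z$; it depends only on $|z|$.
\item Apply the same circle-average estimate you used in Step 1, on a circle of radius $T\in(R_1,1)$. This gives $\EE_g\sup_{|z|\le R_1}|Y_g|^2\le C_U$. Hence, by Markov, $\sup_{|z|\le R_1}|Y_g|\le M:=\sqrt{C_U/\alpha}$ with probability at least $1-\alpha$.
\item Intersect with the zero-free event of \cite{BHL25}. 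There $u=\log M-\log|Y_g|$ is nonnegative and harmonic on $|z|<R_1$, and $u(0)=\log M$ because $Y_g(0)=1$ for \emph{every} $g$.
\item Harnack's inequality (equivalently, Borel--Carath\'eodory for $\log Y_g$ with $\log Y_g(0)=0$) gives $|Y_g(z)|\ge M^{-2\beta_*/(R_1-\beta_*)}$ on $|z|\le\beta_*$.
\end{itemize}
This is exactly the bound $\PP_g[\inf_\beta Y_g\le\gamma]\le\zeta_n+C\gamma^{1/c'}$ that you posited, with $c'=\beta_*/(R_1-\beta_*)$; see \Cref{lem:lower-bound}.

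Drop the CLT alternative. Convergence in distribution of $\log Y_g(\beta)$ at fixed $\beta$ gives no uniformity over $|\beta|\le\beta_*$. It also gives no error term independent of $\gamma$, whereas the statement requires $\zeta_n$ to depend only on $\beta_*$ and not on $\delta$.
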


Assuming this proposition, we first complete the proof of the main theorem for the SK model.

\begin{proof}[Proof of \Cref{thm:counting-intro} for Sherrington--Kirkpatrick Model]
    There exists a constant $\kappa> 0$, such that for any $\log n \leq L \leq \kappa n$, \Cref{lem:sk-truncation} and \Cref{prop:sk-approximation}, applied with an appropriate choice of $L' = O(L)$, can be combined via a triangle inequality to produce an estimate $\widehat P$ satisfying 
    $$
    \| \widehat P - P\|_w^2 \leq e^{-L}\,.
    $$
    Moreover, $\widehat P$ can be computed in $\poly(n)e^{O(L)}$ arithmetic operations. Let $c, C > 0$ be the constants in \Cref{prop:coefficient-reduction}, and choose 
    $$
    L = \lceil \max \{\log n, \log C + 2 \log(1/\varepsilon) + c \log (1/\delta) \} \rceil\,.
    $$
    Then $e^{-L} \leq C^{-1} \varepsilon^2 \delta^c$, so \Cref{prop:coefficient-reduction} implies 
    \[
    \mathbb P_g\left[
    \sup_{|\beta|\le\beta_*}
    \left|
    \frac{\widehat Y_g(\beta)}{Y_g(\beta)}-1
    \right|
    \le \varepsilon
    \right]
    \ge 1-\zeta_n-\delta.
    \]
    Moreover, $\poly(n) e^{O(L)} + O(n^2) = \poly(n, \varepsilon^{-1}, \delta^{-1})$, which is the running time claimed in \Cref{thm:counting-intro}.
\end{proof}

\medskip
The proof of \Cref{prop:coefficient-reduction} in \Cref{subsec:weighted_norm_to_mult_guarantee} has two parts. The first (\Cref{lem:entire-lift}) converts coefficient error intro a uniform additive error: 
$$
\EE_g \sup_{|\beta| \leq \beta_*} \left|\widehat Y_g(\beta) - Y_g(\beta)\right|^2 \leq C \|\widehat P - P\|^2_w\,.
$$
The second (\Cref{lem:lower-bound}, proved in \Cref{subsec:unif_lb_normalized_partition}) gives a uniform high probability lower bound on the modulus of the normalized partition function $Y_g$. This follows from the zero-freeness theorem of \cite{BHL25}, a second-moment upper bound in \Cref{fact:overlap-moment}, and Harnack's inequality. Markov's inequality then yields \Cref{prop:coefficient-reduction}.

\subsection{Controlling the approximation error}\label{subsec:weighted_norm_to_mult_guarantee}

For deterministic multilinear $f(t)=\sum_Fa_Ft_F$, define
\begin{equation}\label{eq:tr-lift}
 \mathcal L_gf(z)\coloneqq e^{-(n-1)z^2/4}
 \sum_Fa_F\prod_{\{i,j\}\in F}\sinh(zg_{ij}/\sqrt{n})
                  \prod_{\{i,j\}\in E\setminus F}\cosh(zg_{ij}/\sqrt{n}).
\end{equation}

\begin{lemma}\label{lem:entire-lift}
$\mathcal L_gf(z)$ is entire and $\sup_{|z|\le R_0}\EE_g|\mathcal L_gf(z)|^2
 \le e^{R_0^4/4}\|f\|_w^2$.

\end{lemma}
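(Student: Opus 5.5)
The plan is to compute $\EE_g|\mathcal L_gf(z)|^2$ essentially exactly, using the same orthogonality that gave $\EE_g|f(t)|^2\le\|f\|_w^2$ at real temperatures. The point is that for complex $z$ the monomials $\prod_{F}\sinh\prod_{E\setminus F}\cosh$ are still orthogonal in $L^2(g)$. Entirety is immediate: $E$ is finite, so $\mathcal L_gf$ is a finite linear combination of products of the entire functions $\sinh(zg_{ij}/\sqrt n)$, $\cosh(zg_{ij}/\sqrt n)$ and $e^{-(n-1)z^2/4}$.

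\emph{Orthogonality.} Write $\phi_F(z)=\prod_{e\in F}\sinh(zg_e/\sqrt n)\prod_{e\in E\setminus F}\cosh(zg_e/\sqrt n)$. Expanding $|\sum_F a_F\phi_F|^2$, each cross term factorizes over edges because the $g_e$ are independent. For $F\neq F'$ there is an edge $e\in F\triangle F'$ whose factor is $\EE[\sinh(zg_e/\sqrt n)\overline{\cosh(zg_e/\sqrt n)}]$. This integrand is odd in $g_e$, so its expectation is zero. Hence
\[
\EE_g|\mathcal L_gf(z)|^2=|e^{-(n-1)z^2/4}|^2\sum_F|a_F|^2\prod_{e\in F}\EE|\sinh(zg/\sqrt n)|^2\prod_{e\in E\setminus F}\EE|\cosh(zg/\sqrt n)|^2 .
\]

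\emph{Per-edge computation.} Set $z=x+iy$ and $u=zg/\sqrt n$. We use
\[
|\sinh u|^2=\tfrac12\bigl(\cosh(2\,\mathrm{Re}\,u)-\cos(2\,\mathrm{Im}\,u)\bigr),\qquad |\cosh u|^2=\tfrac12\bigl(\cosh(2\,\mathrm{Re}\,u)+\cos(2\,\mathrm{Im}\,u)\bigr),
\]
together with the Gaussian identities $\EE\cosh(2xg/\sqrt n)=e^{2x^2/n}$ and $\EE\cos(2yg/\sqrt n)=e^{-2y^2/n}$. Since $|E|=n(n-1)/2$, the prefactor $|e^{-(n-1)z^2/4}|^2=e^{-(n-1)(x^2-y^2)/2}$ splits as a factor $e^{-(x^2-y^2)/n}$ per edge. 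Distributing it, each edge outside $F$ contributes exactly $\cosh(|z|^2/n)$ and each edge in $F$ contributes exactly $\sinh(|z|^2/n)$. Therefore
\[
\EE_g|\mathcal L_gf(z)|^2=\sum_F|a_F|^2\sinh(|z|^2/n)^{|F|}\cosh(|z|^2/n)^{|E|-|F|}.
\]

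\emph{Bounding.} Apply $\sinh a\le a\cosh a$ for $a\ge0$, and then $\cosh a\le e^{a^2/2}$. For $|z|\le R_0$ this gives the bound
\[
\sum_F|a_F|^2(R_0^2/n)^{|F|}\,e^{|E|R_0^4/(2n^2)}\le e^{R_0^4/4}\sum_F|a_F|^2(R_0^2/n)^{|F|}.
\]
Finally $R_0^2<\lambda$ gives $(R_0^2/n)^{|F|}\le w^{|F|}$, so the right-hand side is at most $e^{R_0^4/4}\|f\|_w^2$, as claimed.

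The only step needing care is the per-edge computation: checking that the complex prefactor exactly cancels the Gaussian growth to leave $\cosh(|z|^2/n)$ and $\sinh(|z|^2/n)$. Everything else is bookkeeping.
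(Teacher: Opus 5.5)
Your proposal is correct and follows the paper's proof: the same per-edge Gaussian computation gives $\cosh(|z|^2/n)^{|E|}\sum_F|a_F|^2\tanh(|z|^2/n)^{|F|}$ (identical to your $\sinh^{|F|}\cosh^{|E|-|F|}$ form), and the same bounds $\tanh a\le a$ (equivalently $\sinh a\le a\cosh a$) and $\cosh a\le e^{a^2/2}$ then finish the argument. The one addition is that you justify the orthogonality of the complex-$z$ summands via parity in each $g_e$, which the paper leaves implicit.
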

\begin{proof}
Entirety follows from the finite sum of products of entire functions.

For a standard Gaussian $g_{ij}$, writing $z=s+it$ gives
\[
 \EE|\cosh(zg_{ij}/\sqrt{n})|^2=e^{(s^2-t^2)/n}\cosh(|z|^2/n),\qquad
 \EE|\sinh(zg_{ij}/\sqrt{n})|^2=e^{(s^2-t^2)/n}\sinh(|z|^2/n).
\]
Therefore, 
\begin{equation}\label{eq:tr-lift-norm}
 \EE_g|\mathcal L_gf(z)|^2
 =\cosh(|z|^2/n)^{n(n-1)/2}
       \sum_F|a_F|^2\tanh(|z|^2/n)^{|F|}.
\end{equation}

Moreover, for $|z|\le R_0$, the inequalities $\tanh x\le x$ and
$\log\cosh x\le x^2/2$ for $x\ge0$ give
\[
 \tanh(|z|^2/n)\le R_0^2/n\le w,\qquad
 \cosh(|z|^2/n)^{n(n-1)/2}
 \le\exp\left(\frac{R_0^4(n-1)}{4n}\right)
 \le e^{R_0^4/4}\,,
\]
where recall that we chose the constant $R_0$ to satisfy $\beta_* < R_0 < \sqrt{\lambda}$.
Plugging these bounds into \eqref{eq:tr-lift-norm} proves the lemma.
\end{proof}

The graph expansion proves $\mathcal L_gP=Y_g$. Denote
$f\coloneqq\widehat P-P$ and note that $\mathcal L_gf=\mathcal L_g\widehat P-Y_g$. 
\begin{lemma}\label{lem:cauchy-sup}
    There exists a constant $C_{\mathrm{err}}>0$, such that
    $$
    \EE_g\sup_{|z|\le\beta_*}|\mathcal L_g\widehat P(z)-Y_g(z)|^2 \le C_{\mathrm{err}}\|\widehat P-P\|_w^2\,.
    $$
\end{lemma}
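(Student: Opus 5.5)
The plan is to upgrade the pointwise second-moment bound of \Cref{lem:entire-lift} to a bound on the supremum. Since $f=\widehat P-P$ has coefficients independent of $g$, \Cref{lem:entire-lift} applies to $f$. The upgrade uses subharmonicity (the mean value inequality) for the entire function $\mathcal L_g f$ on a slightly larger disk. The gap $\beta_*<R_0$ supplies the room. By linearity of $\mathcal L_g$ in $f$ and the identity $\mathcal L_g P = Y_g$, we have $\mathcal L_g\widehat P - Y_g = \mathcal L_g f$. So it suffices to bound $\EE_g\sup_{|z|\le\beta_*}|\mathcal L_g f(z)|^2$.

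First, fix $r\coloneqq R_0-\beta_*>0$. For any $z_0$ with $|z_0|\le\beta_*$, the closed disk $D(z_0,r)$ lies inside $\{|z|\le R_0\}$. Since $\mathcal L_g f$ is entire, $|\mathcal L_g f|^2$ is subharmonic. The area mean value inequality therefore gives
\[
|\mathcal L_g f(z_0)|^2\le \frac{1}{\pi r^2}\int_{D(z_0,r)}|\mathcal L_g f(z)|^2\,dA(z)\le \frac{1}{\pi r^2}\int_{|z|\le R_0}|\mathcal L_g f(z)|^2\,dA(z).
\]
The right-hand side does not depend on $z_0$. Taking the supremum over $|z_0|\le\beta_*$ yields the pathwise bound
\[
\sup_{|z|\le\beta_*}|\mathcal L_g f(z)|^2\le (\pi r^2)^{-1}\int_{|z|\le R_0}|\mathcal L_g f|^2\,dA.
\]

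Next, take $\EE_g$ of both sides and exchange expectation and integral by Tonelli, which is valid because the integrand is nonnegative and jointly measurable in $(g,z)$. Then apply \Cref{lem:entire-lift} pointwise on $|z|\le R_0$:
\[
\EE_g\sup_{|z|\le\beta_*}|\mathcal L_g f|^2\le \frac{1}{\pi r^2}\cdot \pi R_0^2\cdot e^{R_0^4/4}\|f\|_w^2.
\]
This gives the lemma with $C_{\mathrm{err}} = (R_0/(R_0-\beta_*))^2e^{R_0^4/4}$, which depends only on the fixed constants.

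No step is really an obstacle. The only points to check are technical. We need joint measurability of $(g,z)\mapsto \mathcal L_g f(z)$, which holds because it is a finite sum of continuous functions. We also need the subharmonic mean value inequality for $|h|^2$ with $h$ holomorphic; this follows from Cauchy's formula and Cauchy--Schwarz, or from the fact that $|h|^2$ is subharmonic. Using the area average rather than a circle average is what avoids having to control the supremum over radii.
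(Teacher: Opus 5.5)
Your proof is correct and is essentially the paper's argument. Both proofs bound the supremum pathwise by an average of $|\mathcal L_g f|^2$ over a fixed region inside $|z|\le R_0$, then apply Tonelli and \Cref{lem:entire-lift}. The paper uses Cauchy's integral formula on the single circle $|z|=R_0$ together with Cauchy--Schwarz, so it needs the pointwise bound only on that circle. You use the area sub-mean-value inequality, enlarged to the full disk. Both routes give the same constant $(1-\beta_*/R_0)^{-2}e^{R_0^4/4}$.
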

\begin{proof} We use the following standard consequence of Cauchy's integral formula. If $F$ is holomorphic on a neighborhood
of the closed disk $\{|z|\le T\}$ and $0 < R < T$, 
\begin{equation}\label{eq:an-cauchy-sup1}
 \sup_{|z|\le R}|F(z)|^2
 \le (1-R/T)^{-2}
 \int_0^{2\pi}|F(Te^{i\theta})|^2\,\frac{d\theta}{2\pi}.
\end{equation}
    Apply \eqref{eq:an-cauchy-sup1} to the entire function $\mathcal L_gf(z)$ with
$R=\beta_*$ and $T=R_0$, and take expectations. The bound
in \Cref{lem:entire-lift}, with constant $C_0$, then gives
\begin{equation}\label{eq:coefficient-uniform}
 \begin{aligned}
 \EE_g\sup_{|z|\le\beta_*}|\mathcal L_g\widehat P(z)-Y_g(z)|^2=\EE_g\sup_{|z|\le\beta_*}|\mathcal L_gf(z)|^2
 &\quad\le(1-\beta_*/R_0)^{-2}
   \int_0^{2\pi}\EE_g|\mathcal L_gf(R_0e^{i\theta})|^2
       \frac{d\theta}{2\pi}\\
 &\quad\le C_0(1-\beta_*/R_0)^{-2}\|f\|_w^2
 \le C_{\mathrm{err}}\|\widehat P-P\|_w^2.
 \end{aligned}
\end{equation}
\end{proof}

We use a lower bound on the normalized partition function $Y_g(z)$, uniform over the complex disk $|z|\le\beta_*$. This lemma is proved in the subsequent \Cref{subsec:unif_lb_normalized_partition}.
\begin{lemma}\label{lem:lower-bound}
Fix a radius $R_1$ with $\beta_*<R_1<1$. There are constants $c,C>0$ such that, for every $0<\alpha<1/2$,
\[
 \PP_g\!\left[\inf_{|z|\le\beta_*}|Y_g(z)|\ge C^{-1}\alpha^c\right]
 \ge1-\zeta_n-\alpha\,,
\]
where $\zeta_n\to0$ depends only on $R_1$ and $n$.
\end{lemma}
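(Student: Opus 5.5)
The plan is to combine three ingredients. The first is the zero-freeness theorem of \cite{BHL25}: with probability at least $1-\zeta_n$, $Y_g$ has no zeros in the closed disk $|z|\le R_1$. The second is a second-moment bound, via \Cref{fact:overlap-moment}, showing that $\EE_g|Y_g(z)|^2$ is bounded uniformly for $|z|\le R_2$, where $R_1<R_2<1$. The third is Harnack's inequality, applied to the harmonic function $\log|Y_g|$ on the zero-free disk. Since $Y_g(0)=1$, harmonicity gives $\log|Y_g(0)|=0$ as a mean-value anchor. What remains is to control how negative $\log|Y_g|$ can become inside the smaller disk $|z|\le\beta_*$.

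First, on the event of zero-freeness, $u(z)\coloneqq\log|Y_g(z)|$ is harmonic on a neighborhood of $|z|\le R_1$. Let $M\coloneqq\sup_{|z|\le R_1}|Y_g(z)|$. Then $M-$ hmm, rather $\log M - u\ge 0$ is a nonnegative harmonic function, and $\log M\ge 0$ because $|Y_g(0)|=1$. Harnack's inequality on the disk of radius $R_1$ gives, for $|z|\le\beta_*$,
\[
\log M-u(z)\le \frac{R_1+\beta_*}{R_1-\beta_*}\bigl(\log M-u(0)\bigr)=\frac{R_1+\beta_*}{R_1-\beta_*}\log M .
\]
Setting $c_1\coloneqq \frac{R_1+\beta_*}{R_1-\beta_*}-1$, this yields
\[
\inf_{|z|\le\beta_*}|Y_g(z)|\ge M^{-c_1}.
\]
So it suffices to show that $M$ is at most polynomial in $1/\alpha$ with probability at least $1-\alpha$.

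Second, I bound $M$ in expectation. By the Cauchy-type estimate \eqref{eq:an-cauchy-sup1} with $R=R_1$ and $T=R_2$, $\EE_g M^2\le (1-R_1/R_2)^{-2}\sup_{|z|=R_2}\EE_g|Y_g(z)|^2$. For complex $z=s+it$, I compute
\[
\EE_g|Z_g(z)|^2=\EE_{\sigma,\tau}\prod_{i<j}\EE\exp\bigl\{g_{ij}(z\sigma_i\sigma_j+\bar z\tau_i\tau_j)/\sqrt n\bigr\},
\]
which produces $|\EE_g Z_g(z)|^2$ times $\EE_{\sigma,\tau}\exp\{|z|^2\langle\sigma,\tau\rangle^2/(2n)+O(1)\}$. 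Here $|\EE_g Z_g(z)|=e^{(n-1)s^2/4}$ differs from $e^{(n-1)\Re(z^2)/4}$ by a factor of $e^{(n-1)t^2/2}$. I must check the normalization carefully: since $Y_g$ divides by $e^{(n-1)z^2/4}$, the modulus is $e^{(n-1)(s^2-t^2)/4}$. The cross terms should then combine exactly, as in \Cref{lem:entire-lift} (which is the case $f=P$, with $\|P\|_w^2$ a sum over even graphs). Indeed, \Cref{lem:entire-lift} applied with radius $R_2$ in place of $R_0$ gives
\[
\EE_g|Y_g(z)|^2\le e^{R_2^4/4}\sum_{F\text{ even}}(R_2^2/n)^{|F|}.
\]
This sum is the second-moment quantity bounded by \Cref{fact:overlap-moment} uniformly in $n$ for $R_2<1$. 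Hence $\EE_g M^2\le C_2$.

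Finally, Markov's inequality gives $\PP[M>(C_2/\alpha)^{1/2}]\le\alpha$. On the intersection of this event with zero-freeness, we get $\inf_{|z|\le\beta_*}|Y_g|\ge (C_2/\alpha)^{-c_1/2}=C^{-1}\alpha^{c}$ with $c=c_1/2$, and the failure probability is at most $\zeta_n+\alpha$. The main obstacle is the second step. There I must ensure that the even-graph sum is bounded uniformly, i.e.\ that \Cref{fact:overlap-moment} applies at a radius $R_2$ with $R_1<R_2<1$. I must also arrange the zero-free radius from \cite{BHL25}, which is stated in terms of $R_1$, to sit strictly inside a disk where the Harnack domain is valid; taking $\beta_*<R_1<R_2<1$ handles both requirements.
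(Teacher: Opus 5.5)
Your proof is correct and follows essentially the same route as the paper. Cauchy's estimate at an intermediate radius $R_1<R_2<1$, the second-moment bound and Markov's inequality give $\sup_{|z|\le R_1}|Y_g|\le (C_2/\alpha)^{1/2}$. Harnack's inequality for $\log M-\log|Y_g|$ on the zero-free disk, anchored at $Y_g(0)=1$, then yields the same exponent $c=\beta_*/(R_1-\beta_*)$.

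The detour through \Cref{lem:entire-lift}, including the slip $|\EE_g Z_g(z)|=e^{(n-1)s^2/4}$, is unnecessary: \Cref{fact:overlap-moment} directly gives $\EE_g|Y_g(z)|^2\le(1-R_2^2)^{-1/2}$ on $|z|=R_2$. Also, Harnack only needs zero-freeness on the open disk $|z|<R_1$, not on a neighborhood of the closed disk.
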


Fix $0<\varepsilon,\delta<1/2$ and put $\alpha\coloneqq\delta/2$,
so that each of the two estimates below may fail with probability
at most $\alpha$.
Let $C_{\mathrm{err}}\ge1$ denote the constant in \Cref{lem:cauchy-sup} and let $c, C>0$ be constants from \Cref{lem:lower-bound}.
By Markov's inequality and \Cref{lem:cauchy-sup},

 \begin{equation}\label{eq:markov_final_mult}
 \PP_g\!\left[
   \sup_{|z|\le\beta_*}|\mathcal L_g\widehat P(z)-Y_g(z)|
      >\varepsilon C^{-1}\alpha^c\right]\le\frac{\EE_g\sup_{|z|\le\beta_*}
          |\mathcal L_g\widehat P(z)-Y_g(z)|^2}
          {\varepsilon^2C^{-2}\alpha^{2c}}\le\frac{C_{\mathrm{err}}\|\widehat P-P\|_w^2}{\varepsilon^2C^{-2}\alpha^{2c}}\leq\alpha\,,
 \end{equation}
where the last inequality comes from choosing the constants in \Cref{prop:coefficient-reduction} so that $\|\widehat P-P\|_w^2 \leq C^{-2} C_{\mathrm{err}}^{-1} \varepsilon^2 \alpha^{1+2c}$.

On the event that both \Cref{lem:lower-bound} and the bound in \eqref{eq:markov_final_mult} hold, which happens with probability at least
$1-\zeta_n-2\alpha=1-\zeta_n-\delta$,
\[
 \sup_{|\beta|\le\beta_*}
 \left|\frac{\mathcal L_g\widehat P(\beta)}{Y_g(\beta)}-1\right|
 \le\frac{\varepsilon C^{-1}\alpha^c}{C^{-1}\alpha^c}=\varepsilon.
\]
For real $\beta$, $\mathcal L_g\widehat P(\beta) = \widehat Y_g(\beta)$, as defined in \eqref{eq:y-hat-def}, which concludes the proof.

\subsection{Lower bound for the normalized partition function}\label{subsec:unif_lb_normalized_partition}

\begin{mylem}{\ref{lem:lower-bound}}
Fix a radius $R_1$ with $\beta_*<R_1<1$. There are constants $c,C>0$ such that, for every $0<\alpha<1/2$,
\[
 \PP_g\!\left[\inf_{|z|\le\beta_*}|Y_g(z)|\ge C^{-1}\alpha^c\right]
 \ge1-\zeta_n-\alpha\,,
\]
where $\zeta_n\to0$ depends only on $R_1$ and $n$.
\end{mylem}

The proof of \Cref{lem:lower-bound} at the end of this section combines a second-moment bound, Cauchy's integral formula, and Markov's
inequality into a uniform upper bound on $|Y_g|$ in the larger
disk $|z|\le R_1$ with high probability (\Cref{lem:upper-bound}).
Harnack's inequality then combines this upper bound with
zero-freeness (\Cref{fact:zero-free}) and the normalization $Y_g(0)=1$ to yield the
desired lower bound on $|z|\le\beta_*$.

\medskip

The zero-free theorem of Bencs et al.~\cite{BHL25} gives the following proposition; we verify the change of the normalization below.
\begin{proposition}[\cite{BHL25}, Theorem 1.2]\label{fact:zero-free}
The function $Y_g$ has no zeros in $|z|<R_1$ with probability at
least $1-\zeta_n$, where $\zeta_n\to0$ depends only on $R_1$ and $n$.
\end{proposition}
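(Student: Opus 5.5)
The plan is to reduce directly to the zero-freeness theorem of \cite{BHL25}, Theorem 1.2, which is stated for the unnormalized partition function. Specialized to SK, it says: for every fixed radius $R_1<\beta_{\mathrm{2nd}}=1$ there is $\zeta_n\to0$ such that, with probability at least $1-\zeta_n$, $Z_g(z)\neq0$ for all complex $|z|<R_1$ (or $|z|\le R_1$). So the only thing to verify is that the passage from $Z_g$ to $Y_g(z)=Z_g(z)/\EE_g Z_g(z)$ neither creates nor destroys zeros in this disk.

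\textbf{Step 1: $Y_g$ is entire.} The function $Z_g(z)=\EE_\sigma e^{zH_g(\sigma)}$ is a finite average of exponentials, so it is entire in $z$. By the Gaussian moment computation, the normalization is $\EE_g Z_g(z)=e^{(n-1)z^2/4}$ as a polynomial identity in $z$, since each term $e^{zH_g(\sigma)}$ has expectation $e^{z^2\binom n2/(2n)}$. Thus
\[
Y_g(z)=e^{-(n-1)z^2/4}Z_g(z).
\]
This is exactly the analytic continuation $\mathcal L_gP$ appearing in \eqref{eq:tr-lift}: expanding $e^{zg_{ij}\sigma_i\sigma_j/\sqrt n}=\cosh+\sigma_i\sigma_j\sinh$ and averaging over $\sigma$ recovers the even-subgraph sum. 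In particular, $Y_g$ is entire.

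\textbf{Step 2: the zero sets coincide.} The factor $e^{-(n-1)z^2/4}$ never vanishes on $\mathbb C$. Hence $Y_g$ and $Z_g$ have the same zeros, and zero-freeness of $Z_g$ on $|z|<R_1$ transfers verbatim to $Y_g$ on the same event, with the same $\zeta_n$.

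\textbf{Step 3: matching conventions.} The SK Hamiltonian here is $n^{-1/2}\sum_{i<j}g_{ij}\sigma_i\sigma_j$. This agrees in law with \eqref{eq:model} for $p_{\max}=2$, $\gamma_2=1/\sqrt2$, after symmetrizing $G_{ij}+G_{ji}$ and dropping the diagonal terms. The diagonal terms contribute only a $\sigma$-independent factor $e^{z\gamma_2 n^{-1/2}\sum_i G_{ii}}$, which is nonvanishing, so it also does not affect zeros. The mixture is $\xi(q)=q^2/2$, giving $\beta_{\mathrm{2nd}}=1$, so any $R_1<1$ is admissible in \cite{BHL25}.

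The main obstacle is Step 3: one has to check carefully that the disorder normalization and the variable $z$ in \cite{BHL25} match ours, possibly up to a constant rescaling of the radius. I would first check this by comparing second-moment thresholds. If their theorem is stated only for an open disk, I would apply it with a slightly larger radius $R_1'\in(R_1,1)$ to obtain the closed-disk statement.
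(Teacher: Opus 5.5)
Your proposal is correct and follows the paper's argument: the paper writes the ordered-tuple Hamiltonian of \cite{BHL25} as $\frac{1}{\sqrt{2n}}\sum_{i,j}G_{ij}\sigma_i\sigma_j = H_g(\sigma)+X$, with $g_{ij}=(G_{ij}+G_{ji})/\sqrt2$ and $X=\frac{1}{\sqrt{2n}}\sum_i G_{ii}$, so that $\widetilde Y_G(z)=e^{zX-z^2/4}Y_g(z)$. Since these multipliers never vanish, the zero-freeness transfers to $Y_g$ with the same $\zeta_n$, exactly as in your Steps 2--3. The only difference is that the paper defines $g$ directly from $G$, which gives an exact coupling in which the $g_{ij}$ have the right law and the two events coincide.
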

Indeed, their ordered-tuple SK Hamiltonian can be written as
\begin{equation}\label{eq:bhl-normalization}
 \widetilde H_G(\sigma)
 =\frac1{\sqrt{2n}}\sum_{i,j=1}^nG_{ij}\sigma_i\sigma_j
 =H_g(\sigma)+X,\qquad
 g_{ij}\coloneqq\frac{G_{ij}+G_{ji}}{\sqrt2},\quad
 X\coloneqq\frac1{\sqrt{2n}}\sum_iG_{ii}.
\end{equation}
Here the $G_{ij}$ are independent standard Gaussians, so the
$g_{ij}$, $i<j$, are independent standard Gaussians and
$X\sim N(0,1/2)$ is independent of them. In particular,
\begin{equation}\label{eq:bhl-partition-normalization}
 \widetilde Z_G(z)=e^{zX}Z_g(z),\qquad
 \widetilde Y_G(z)\coloneqq e^{-nz^2/4}\widetilde Z_G(z)
 =e^{zX-z^2/4}Y_g(z).
\end{equation}
These multipliers never vanish, so the zero-free theorem of \cite{BHL25} transfers to $Y_g$ without loss. 

\medskip

\begin{lemma}\label{fact:overlap-moment}
For every $0\le T<1$ and $|z|\le T$,
\begin{equation}\label{eq:an-second-moment}
 \EE_g|Y_g(z)|^2
 \le\frac{e^{-|z|^2/2}}{\sqrt{1-|z|^2}}
 \le\frac1{\sqrt{1-T^2}},
\end{equation}

\end{lemma}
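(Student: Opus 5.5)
Write $Y_g(z)=e^{-(n-1)z^2/4}Z_g(z)$, where $Z_g(z)=\EE_\sigma e^{zH_g(\sigma)}$ and $H_g(\sigma)=n^{-1/2}\sum_{i<j}g_{ij}\sigma_i\sigma_j$. The plan is to compute $\EE_g|Y_g(z)|^2$ exactly as an average over two independent replicas, and then bound that average by the overlap moment generating function.

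\textbf{Step 1: the exact formula.} Expanding the square modulus gives
\[
 |Z_g(z)|^2=\EE_{\sigma,\tau}\exp\{zH_g(\sigma)+\bar zH_g(\tau)\}.
\]
By Fubini we may take $\EE_g$ inside. For fixed $\sigma,\tau$, the quantity $zH_g(\sigma)+\bar zH_g(\tau)$ is a centered complex Gaussian, so $\EE e^{W}=e^{\EE W^2/2}$ for $W=\sum_{i<j} g_{ij}a_{ij}$ with complex $a_{ij}$. Here $a_{ij}=n^{-1/2}(z\sigma_i\sigma_j+\bar z\tau_i\tau_j)$, hence
\[
 a_{ij}^2=n^{-1}\bigl(z^2+\bar z^2+2|z|^2\sigma_i\sigma_j\tau_i\tau_j\bigr).
\]
Summing over $i<j$ and writing $\rho_i=\sigma_i\tau_i$ and $q=\sum_i\rho_i$, we get $\sum_{i<j}\rho_i\rho_j=(q^2-n)/2$. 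Therefore
\[
 \EE_g|Z_g|^2=\EE_{\sigma,\tau}\exp\Bigl\{\tfrac{n-1}{4}(z^2+\bar z^2)+\tfrac{|z|^2}{2n}(q^2-n)\Bigr\}.
\]
The first term is cancelled exactly by $|e^{-(n-1)z^2/4}|^2=e^{-(n-1)\Re(z^2)/2}$. This leaves
\[
 \EE_g|Y_g(z)|^2=e^{-|z|^2/2}\,\EE_\rho \exp\Bigl\{\tfrac{|z|^2}{2n}q^2\Bigr\},
\]
where $\rho$ is uniform on $\{\pm1\}^n$.

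\textbf{Step 2: the overlap moment.} Set $s=|z|^2<1$. We use the Hubbard–Stratonovich identity $e^{sq^2/(2n)}=\EE_{\gamma}e^{\sqrt{s/n}\,\gamma q}$ with $\gamma\sim N(0,1)$. Exchanging expectations gives
\[
 \EE_\rho e^{sq^2/(2n)}=\EE_\gamma\cosh(\sqrt{s/n}\,\gamma)^n\le \EE_\gamma e^{s\gamma^2/2}=(1-s)^{-1/2},
\]
where the inequality uses $\cosh x\le e^{x^2/2}$. This yields the first inequality in \eqref{eq:an-second-moment}.

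\textbf{Step 3: the second inequality.} It follows because $s\mapsto e^{-s/2}(1-s)^{-1/2}$ is bounded by $(1-s)^{-1/2}$, which is increasing in $s$. Hence for $|z|\le T$ the bound is at most $(1-T^2)^{-1/2}$.

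The only delicate point is the complex Gaussian identity, specifically making sure that the $\Re(z^2)$ terms cancel exactly against the normalization rather than only in modulus. This is handled by the explicit computation $\EE W^2=\sum_{i<j} a_{ij}^2$, which is valid for complex coefficients by analytic continuation of the real moment generating function.
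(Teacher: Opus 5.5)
Your proposal is correct and follows the paper's argument: the replica computation reduces the second moment to $e^{-|z|^2/2}\EE_\rho\exp\{|z|^2 q^2/(2n)\}$, and Hubbard--Stratonovich together with $\cosh x\le e^{x^2/2}$ then gives the bound $(1-|z|^2)^{-1/2}$. You also write out the complex Gaussian integration and the exact cancellation of the real part of $z^2$ against the normalization, which the paper leaves implicit.
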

\begin{proof}
Let $\sigma,\tau$ be independent uniform spin configurations and
let $\rho_i=\sigma_i\tau_i$, so that $\rho_i$ are independent uniform
signs. Gaussian integration, with the self terms canceled by the
normalization, gives
\[
\begin{aligned}
 \EE_g|Y_g(z)|^2
 =\EE_\rho\exp\left(
       \frac{|z|^2}{n}\sum_{i<j}\rho_i\rho_j\right)=e^{-|z|^2/2}\EE_\rho\exp\left(
       \frac{|z|^2}{2n}\Bigl(\sum_i\rho_i\Bigr)^2\right),
\end{aligned}
\]
where we used
$2\sum_{i<j}\rho_i\rho_j=(\sum_i\rho_i)^2-n$.

For $|z|^2<1$ and an independent $h\sim N(0,1)$,
Gaussian integration and $\cosh x\le e^{x^2/2}$ give
\[
 \EE_\rho\exp\left(\frac {|z|^2}{2n}\Bigl(\sum_i\rho_i\Bigr)^2\right)
 =\EE_h\cosh\left(|z| h / \sqrt{n} \right)^n
 \le\EE_he^{|z|^2 h^2/2}=\frac{1}{\sqrt{1-|z|^2}}.
\]
This proves the bound for every $n$.
\end{proof}

The second-moment bound in \Cref{fact:overlap-moment} and Cauchy's integral formula give the
following uniform upper bound with high probability.

\begin{lemma}
\label{lem:upper-bound}
Fix $0<R_1<1$. There is a constant $C_U\ge1$, depending only
on $R_1$, such that, for every $0<\alpha<1/2$,
\[
 \PP_g\!\left[
   \sup_{|z|\le R_1}|Y_g(z)|\le\sqrt{C_U/\alpha}
 \right]\ge1-\alpha.
\]
\end{lemma}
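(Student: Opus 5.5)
The plan is to mirror the proof of \Cref{lem:cauchy-sup}, now with $Y_g$ itself in place of the error $\mathcal L_g f$, using the second-moment bound of \Cref{fact:overlap-moment}. Since $Y_g(z)$ is entire in $z$ for each $g$ (it is $e^{-(n-1)z^2/4}$ times a finite average of exponentials), I would pick an intermediate radius $T$ with $R_1<T<1$, say $T=(1+R_1)/2$. Then I would apply the Cauchy-type mean value inequality \eqref{eq:an-cauchy-sup1} with $R=R_1$:
\[
 \sup_{|z|\le R_1}|Y_g(z)|^2\le (1-R_1/T)^{-2}\int_0^{2\pi}|Y_g(Te^{i\theta})|^2\,\frac{d\theta}{2\pi}.
\]

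Next I would take expectations over $g$ and use Fubini/Tonelli, which is legitimate since the integrand is nonnegative and jointly measurable. \Cref{fact:overlap-moment} holds for every $|z|\le T<1$ uniformly in $n$, so it gives
\[
 \EE_g\sup_{|z|\le R_1}|Y_g(z)|^2\le (1-R_1/T)^{-2}(1-T^2)^{-1/2}\eqqcolon C_U .
\]
This constant depends only on $R_1$, and we may enlarge it so that $C_U\ge1$.

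Finally, Markov's inequality applied to the nonnegative random variable $S\coloneqq\sup_{|z|\le R_1}|Y_g(z)|^2$ gives $\PP_g[S>C_U/\alpha]\le\alpha$. Equivalently, $\sup_{|z|\le R_1}|Y_g(z)|\le\sqrt{C_U/\alpha}$ with probability at least $1-\alpha$, which is the claim.

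No step is a genuine obstacle. The only points needing care are measurability of the supremum and the justification of \eqref{eq:an-cauchy-sup1}. Measurability is fine: by continuity, the supremum can be taken over a countable dense set. For \eqref{eq:an-cauchy-sup1}, the Cauchy formula gives $|F(z)|\le (1-R/T)^{-1}\int |F(Te^{i\theta})|\,d\theta/2\pi$ for $|z|\le R$, since $|Te^{i\theta}-z|\ge T-R$; Cauchy--Schwarz then yields the squared form. It is stated earlier in the paper, so I would simply invoke it.
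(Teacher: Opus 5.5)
Your proposal is correct and matches the paper's proof step for step. It uses the same choice $T=(1+R_1)/2$, the Cauchy sup-inequality, Tonelli combined with \Cref{fact:overlap-moment}, and Markov's inequality, and it arrives at the same constant $C_U=(1-R_1/T)^{-2}(1-T^2)^{-1/2}$. That constant is automatically at least $1$, so no enlargement is needed.
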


\begin{proof}
First, let $0\le R<T$ and let $F$ be holomorphic on a neighborhood
of the closed disk $\{|z|\le T\}$. For $|z|\le R$,
Cauchy's integral formula gives
\[
 F(z)=\int_0^{2\pi}F(Te^{i\theta})
       \frac{Te^{i\theta}}{Te^{i\theta}-z}\,
       \frac{d\theta}{2\pi}.
\]
By Cauchy--Schwarz and $|Te^{i\theta}-z|\ge T-R$,
\begin{equation}\label{eq:an-cauchy-sup}
 \sup_{|z|\le R}|F(z)|^2
 \le (1-R/T)^{-2}
 \int_0^{2\pi}|F(Te^{i\theta})|^2\,\frac{d\theta}{2\pi}.
\end{equation}

Now set $T=(1+R_1)/2$, so that $R_1<T<1$.
Since $Y_g$ is entire, we may apply
\eqref{eq:an-cauchy-sup} with $F=Y_g$ and $R=R_1$.
Taking expectations, interchanging expectation and integration
by Tonelli's theorem, and applying
\Cref{fact:overlap-moment}, we obtain
\[
 \EE_g\sup_{|z|\le R_1}|Y_g(z)|^2
 \le (1-R_1/T)^{-2}
 \int_0^{2\pi}\EE_g|Y_g(Te^{i\theta})|^2\,
       \frac{d\theta}{2\pi}
 \le \frac{(1-R_1/T)^{-2}}{\sqrt{1-T^2}}
 \coloneqq C_U.
\]
This constant satisfies $C_U\ge1$ and depends only on $R_1$.
Setting $M\coloneqq\sqrt{C_U/\alpha}$, Markov's inequality gives
\[
 \PP_g\!\left[\sup_{|z|\le R_1}|Y_g(z)|>M\right]
 \le \frac{\EE_g\sup_{|z|\le R_1}|Y_g(z)|^2}{M^2}
 \le \alpha,
\]
which proves the claim.
\end{proof}

\begin{proof}[Proof of Lemma~\ref{lem:lower-bound}]
Let $C_U$ be the constant from Lemma~\ref{lem:upper-bound}
and set $M\coloneqq\sqrt{C_U/\alpha}$.
By that lemma, \Cref{fact:zero-free}, and the union
bound, with probability at least $1-\zeta_n-\alpha$,
the function $Y_g$ has no zeros in $|z|<R_1$ and satisfies
$\sup_{|z|\le R_1}|Y_g(z)|\le M$.

On this event, the function
\[
 u(z)\coloneqq\log M-\log|Y_g(z)|
\]
is nonnegative and harmonic in $|z|<R_1$.
Since $Y_g(0)=1$, we have $u(0)=\log M$.
Harnack's inequality therefore gives
\[
 u(z)\le\frac{R_1+\beta_*}{R_1-\beta_*}\log M
 \qquad (|z|\le\beta_*).
\]
Rearranging and exponentiating, we obtain
\[
 \inf_{|z|\le\beta_*}|Y_g(z)|
 \ge M^{-2\beta_*/(R_1-\beta_*)}
 = C_U^{-c}\alpha^c,
 \qquad c\coloneqq\frac{\beta_*}{R_1-\beta_*}>0.
\]
Taking $C=C_U^c$ proves the lemma.
\end{proof}

\section{Truncation bounds}\label{sec:truncation}

As outlined in \Cref{sec:overview}, we approximate $P$ by truncating the expansion to $P_{K,b}$. This section upper bounds the resulting squared error in the weighted coefficient norm\footnote{Defined in \Cref{sec:analysis}, \eqref{eq:coefficient-norm}} (\Cref{lem:sk-truncation}). Concretely, for a graph $F \subseteq E$ denote the set of supported vertices $V(F)\coloneqq\{i:\deg_F(i)>0\}$ and the set of \emph{branching vertices}, i.e. those of degree at least $4$, and their number as
\begin{align*}
    B(F)\coloneqq\{i:\deg_F(i)\ge4\},\qquad s(F)\coloneqq|B(F)|\,.
\end{align*}
We truncate the expansion $P(t) = \sum_{\substack{F\subseteq E\\F\text{ even}}}t_F$ by only keeping the terms with at most $K$ edges and fewer than $b$ branching vertices. Define the retained family and its polynomial by
\[
 \mathcal F_{K,b}\coloneqq\{F\subseteq E:F\text{ even},\ |F|\le K,\ s(F)<b\},
 \qquad P_{K,b}(t)\coloneqq\sum_{F\in\mathcal F_{K,b}}t_F.
\]
Recall that the edge weight $w$ satisfies $w=\lambda/n$ for a fixed $0<\lambda<1$ (\eqref{eq:w_def}).
\begin{lemma}\label{lem:sk-truncation}
There exists $\kappa>0$ such that

for $n\ge2$ and $\log n\le L\le\kappa n$, there are integer cutoffs $K \geq 0, b \geq 1$ with $K=O(L)$ and $b = O \left(\frac L {\log(n/L)}\right)$, satisfying
\begin{equation}\label{eq:graph-cutoffs}
 \|P-P_{K,b}\|_w^2\le e^{-L}.
\end{equation}
All constants depend only on $\lambda$.
\end{lemma}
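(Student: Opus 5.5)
Since $P$ has coefficient $1$ on every even $F$, the quantity to control is
\[
\|P-P_{K,b}\|_w^2=\sum_{F\text{ even},\,F\notin\mathcal F_{K,b}} w^{|F|}.
\]
We split the excluded graphs into two classes: those with $|F|>K$, and those with $|F|\le K$ and $s(F)\ge b$. Each class will get a weighted count bounded by $e^{-L}/2$.

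The basic tool is a counting bound for even graphs organized by excess. For an even graph $F$ with $v$ vertices and $e$ edges, set $r\coloneqq e-v\ge s(F)$, as in \Cref{sec:overview}. We bound the number of even $F\subseteq E$ with $e$ edges and excess $r$ by an expression of the form
\[
n^{e-r}\,(C e^2/n)^{r}\cdot(\text{cycle-union factor}).
\]
To obtain it, decompose $F$ into its branching set $B$ (at most $2r$ vertices, since $\sum_i(\deg_F(i)-2)=2r$) together with the degree-$2$ paths between branching vertices and the cycles disjoint from $B$. The number of path segments is at most $\sum_{i\in B}\deg_F(i)/2\le 3r$. Choose the labels of the branch vertices, then the pairing of path ends, then the ordered internal vertices. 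The cycle parts contribute a count comparable to the cycle count of \cite{ALR87}: the generating function $\sum_{\text{cycle unions}} w^{|F|}\le \exp\big(\sum_{k\ge3}\lambda^k/(2k)\big)=O_\lambda(1)$, because $\lambda<1$. The upshot we aim for is
\[
\sum_{F\text{ even},\,e-v=r,\,|F|=e} w^{|F|}\le C_\lambda\,\lambda^{e'}\Big(\frac{C e^2}{n}\Big)^{r}
\]
for some suitably generous form, with a geometric factor $\lambda^{e}$ on the total edge count. Making this count clean, so that the $e^{2r}$ (or $e^{O(r)}$) loss stays polynomial in $e$ per unit of excess, is the main obstacle. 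First I would encode each graph by an Eulerian-type walk decomposition into at most $r+\#\text{cycles}$ closed trails, since choosing the trail start points and the junctions gives exactly such factors.

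With this bound in hand, the two classes are handled as follows. For $|F|>K$: summing $\lambda^{e}(Ce^2/n)^r$ over $r\ge0$ converges as long as $e\le\kappa' \sqrt n$. For larger $e$ up to $\binom n2$, I would instead use a crude bound: the total weight $\sum_{|F|=e}w^{e}\le\binom{n^2/2}{e}(\lambda/n)^e\le(e\lambda n/(2e))^e$, which is useless for large $e$. This is why the lemma restricts to $L\le\kappa n$, and we only need the tail sum up to the relevant scale. So I would rather refine the bound to $(Ce/n)^r$ times $e^{O(r)}$, and handle very large $e$ via the exact identity $\sum_{F\text{ even}}w^{|F|}=\EE_\sigma\prod(1+w\sigma_i\sigma_j)\le \EE_\rho\, e^{\frac{\lambda}{2n}(\sum\rho_i)^2}=O(1)$, combined with a Chernoff-type tilt $w\mapsto w(1+\eta)$. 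Concretely, $\sum_{|F|>K}w^{|F|}\le(1+\eta)^{-K}\sum_F (w(1+\eta))^{|F|}=O(1)(1+\eta)^{-K}$ with $\lambda(1+\eta)<1$, so $K=O(L)$ suffices. This tilting argument is clean and I would use it for the first class.

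For the second class ($|F|\le K$, $s\ge b$), we have $r\ge b$, and the counting bound gives a weight of at most $C_\lambda\sum_{r\ge b}(C K^2/n)^r$. Choosing $b\approx L/\log(n/K^2)$ works when $K^2\ll n$, that is when $L\ll\sqrt n$. To reach the stated range $b=O(L/\log(n/L))$, I would sharpen the per-excess factor to $CK/n$: attaching one extra path, or one extra branching vertex, costs a choice among $O(K)$ existing vertices of $F$ instead of $K^2$ position pairs. This is achieved by charging each branch vertex to a location on a spanning trail, giving $(CK/n)^{s}$ directly in terms of $s$. Taking $\kappa$ small makes $CK/n\le 1/2$, and then $b=\lceil (L+\log(2C_\lambda))/\log(n/(CK))\rceil=O(L/\log(n/L))$ gives the second half of $e^{-L}$.
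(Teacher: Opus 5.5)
You handle the edge tail $|F|>K$ with the same tilting argument as the paper. Your final choice of $b$ would also work \emph{if} you had $\sum_{|F|\le K,\ s(F)\ge b}w^{|F|}\le C\sum_{s\ge b}(CK/n)^s$. The gap is that this estimate is never established, and the reason you give for it is not correct.

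The bound you actually formulate, $(Ce^2/n)^r$ per unit of excess, reaches only $L\ll\sqrt n$, as you note yourself. The proposed sharpening, that an extra branching vertex costs only a choice among $O(K)$ existing vertices, fails as a counting statement. A trail encoding must record both \emph{where} along the trail each revisit occurs and \emph{which} earlier vertex it hits, and the resulting loss is genuine. For instance, choose a set $B$ of $s$ vertices, a $4$-regular simple kernel on $B$, and $2s$ connecting paths of total length $e$. This already produces about $\binom ns(cs^2)^s\binom{e-1}{2s-1}n^{e-2s}$ distinct even graphs. So at fixed $e$ their total weight is of order $\lambda^e\bigl(Ce^2/(sn)\bigr)^s$ up to lower-order factors, which exceeds $\lambda^e(Ce/n)^s$ by about $(e/s)^s$.

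A usable bound therefore needs an extra gain of order $1/r$ per unit of excess. There are two natural sources:
\begin{itemize}
\item the $1/r!$ symmetry factor: keeping $\binom er\le(3e/r)^r$ in the Eulerian encoding gives $\lambda^e(3e^2/(rn))^r$ for a connected even graph, and $(CK^2/(bn))^b$ is then small enough because $K^2/b=O(L\log(n/L))$;
\item summing the polynomial factors in $e$ against $\lambda^{|F|}$, which effectively forces $e=O(r)$.
\end{itemize}
Your sketch uses neither. The bookkeeping for several non-cycle components is also left open.

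The paper avoids enumeration altogether. For a fixed $S\subseteq[n]$ with $|S|=b$, it tilts the weights to $q_{ij}=w\theta^{\ind\{i\in S\}+\ind\{j\in S\}}$. Every even $F$ with $\deg_F\ge4$ on $S$ then gains a factor $\theta^{4b}$. Gaussian linearization bounds the tilted even-graph sum by $\EE_\rho\exp\bigl(\tfrac w2(\sum_i\theta^{\ind\{i\in S\}}\rho_i)^2\bigr)\le(1-w(b\theta^2+n-b))^{-1/2}$. With $\theta^2=1+c'n/b$ this stays bounded, and the union bound over $S$ gives $\binom nb\theta^{-4b}\le(Cb/n)^b$. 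The $1/b$ per branching vertex that your sketch lacks comes from $\theta^4\ge(c'n/b)^2$ against $\binom nb\le(3n/b)^b$. This needs no restriction on $|F|$ and covers the whole range $\log n\le L\le\kappa n$.
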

\begin{proof}
We expand
    \begin{equation}\label{eq:truncation_ub_two_terms}
    \|P-P_{K,b}\|_w^2 = \sum_{\substack{F \subseteq E \text{ even}\\ F \not\in \mathcal F_{K,b}}} w^{|F|} \le\sum_{\substack{F\text{ even}\\|F|>K}}w^{|F|}
    +\sum_{\substack{F\text{ even}\\s(F)\ge b}}w^{|F|}
    \end{equation}
    and bound the two terms separately. First, by an argument analogous to that in the proof of \Cref{fact:overlap-moment}, we upper bound the total weighted norm $\|P\|_w^2$. Let $\rho_1,\dots,\rho_n$ be independent uniform signs:
    \begin{equation}\label{eq:graph-mass}
        \|P\|_w^2 = \EE_\rho \prod_{i < j} (1 + w \rho_i \rho_j) \leq \EE_{\rho} \exp\left( w \sum_{i < j} \rho_i \rho_j\right) = e^{-\lambda/2} \EE_{\rho} \exp\left( \frac{\lambda}{2n} \left(\sum_{i} \rho_i\right)^2\right) \leq \frac{1}{\sqrt{1-\lambda}}\,.
    \end{equation}
    Choose $\theta>1$ with $\theta\lambda<1$ and apply the above bound with weight $\theta w$ to obtain
\begin{equation}\label{eq:graph-edge-tail}
 \sum_{\substack{F\text{ even}\\|F|>K}}w^{|F|}
 \le\theta^{-K}(1-\theta\lambda)^{-1/2}
 = Ce^{-cK} \le e^{-L}/2\,,
\end{equation}
where $c$ and $C$ are constants depending only on $\lambda$ and the last inequality comes from choosing a sufficiently large $K=O(L)$ for a given $L$.

To bound the second term, fix a particular set $S \subseteq [n]$ of $b$ vertices and let 
$$
M_S \coloneq \sum_{\substack{ F \subseteq E \text{ even}\\ \deg_F(i)\geq 4 \forall i \in S }} w^{|F|}\,,
$$
so that $\sum_{\substack{F\text{ even}\\s(F)\ge b}}w^{|F|} \leq \sum_{S \in \binom{[n]}{b}} M_S$. Fix some $\theta > 1$ satisfying $w\theta^2 < 1$ and let $q_{ij} = w \theta^{\ind\{i\in S\} + \ind\{j\in S\}}$. Then, again letting $\rho_1,\dots,\rho_n$ be independent uniform signs, we can bound
\begin{align*}
    M_S \theta^{4b} \leq \sum_{F\subseteq E \text{ even}} \prod_{\{i,j\} \in F} q_{ij} \leq \EE_\rho \exp \left( \sum_{i<j} q_{ij} \rho_i \rho_j \right) \leq \EE_\rho \exp \left( \frac w 2 \left( \sum_i \rho_i \theta^{\ind\{i\in S\}} \right)^2 \right)\,.
\end{align*}
For an independent $h \sim \mathcal N(0,1)$, we can further bound the right hand side as 
\begin{align*}
    \EE_\rho \exp \left( \frac w 2 \left( \sum_i \rho_i \theta^{\ind\{i\in S\}} \right)^2 \right) &= \EE_h \EE_\rho \exp \left( \sqrt{w} h \left( \sum_i \rho_i \theta^{\ind\{i\in S\}} \right) \right)\\
    &= \EE_h \cosh (\sqrt{w} \theta h)^b \cosh(\sqrt{w} h)^{n-b}\\
    &\leq \EE_h \exp \left( \frac w 2 (b \theta^2 + n-b) h^2 \right) = (1 - w (b \theta^2 + n-b))^{-1/2}\,.
\end{align*}
The last step is valid for the choice of $\theta = \sqrt{1 + \frac {c'n}{b}}$ for a constant $c'>0$, such that $w (b \theta^2 + n-b) = \lambda (c'+1) < 1$. Note that this choice of $\theta$ also satisfies $\theta > 1$ and $w \theta^2 < 1$ for sufficiently small $c'$. We conclude by bounding 
$$\sum_{\substack{F\text{ even}\\s(F)\ge b}}w^{|F|} \leq \sum_{S \in \binom{[n]}{b}} M_S \leq \binom n b (1 - \lambda (c'+1))^{-1/2} \theta^{-4b} \leq (1 - \lambda (c'+1))^{-1/2} \left( \frac{en}{b \theta^4} \right)^b \le C(Cb/n)^b$$
for a choice of constant $C$ only dependent on $\lambda$. Finally, set $b \coloneq \lceil \frac {L + \log(2C)}{\log(n/ (CL + C\log(2C)))}\rceil$, which is $\leq L + \log (2C)$ for sufficiently small $\kappa$. Then, 
$
b \log \frac{n}{C b} \geq  L + \log(2C)
$, so 
\begin{equation}\label{eq:graph-branching_tail}
\sum_{\substack{F\text{ even}\\s(F)\ge b}}w^{|F|} \leq C(Cb/n)^b \leq C e^{-L - \log (2C)} = e^{-L}/2\,.
\end{equation}
Plugging the bounds in \eqref{eq:graph-edge-tail} and \eqref{eq:graph-branching_tail} into \eqref{eq:truncation_ub_two_terms}, we obtain 
\[
 \|P-P_{K,b}\|_w^2
 \le\sum_{\substack{F\text{ even}\\|F|>K}}w^{|F|}
    +\sum_{\substack{F\text{ even}\\s(F)\ge b}}w^{|F|}
 \le e^{-L}.
\]
\end{proof}

\section{Deterministic estimator via color coding}
\label{sec:estimator}
In this section, we construct and analyze a deterministic estimator for the truncated graph expansion from \Cref{sec:truncation}. Our goal is to find an efficiently computable estimator $\widehat P$
such that the squared weighted coefficient error
$\|\widehat P-P_{K,b}\|_w^2$ is small. Write
$\mathcal F\coloneqq\mathcal F_{K,b}$ and
$P_{\mathcal F}\coloneqq P_{K,b}$.

\begin{proposition}
\label{prop:sk-approximation}
For every integer $L\ge\log n$, one can construct an estimator $\widehat P_{\mathcal F}$ satisfying
\[
 \|\widehat P_{\mathcal F}-P_{\mathcal F}\|_w^2\le e^{-L}.
\]
Moreover, $\widehat P_{\mathcal F}$ can be computed in $\poly(n)e^{O(L)}$ arithmetic operations. 
\end{proposition}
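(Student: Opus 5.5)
We will build $\widehat P_{\mathcal F}$ by color coding with balanced colorings, derandomize by conditional expectations, and evaluate each colored sum by dynamic programming over paths and cycles. Here $\mathcal F=\mathcal F_{K,b}$ with $K=O(L)$ and $b=O(1)$ fixed by \Cref{lem:sk-truncation}.

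\textbf{Step 1 (estimator and variance).} Fix a coloring $h:[n]\to[K]$, chosen uniformly among balanced colorings. For $F\in\mathcal F$ put $X_F(h)=\ind\{F\text{ colorful}\}/p_{|V(F)|}$, so $\EE_h X_F=1$. Note that $|V(F)|\le|F|\le K$ for even $F$, so colorful graphs exist. Set $V_h=\sum_{F\in\mathcal F}X_F(h)t_F$, so $V_h-P_{\mathcal F}$ has coefficients $X_F(h)-1$. The norm $\|\cdot\|_w$ is a weighted $\ell^2$ norm on coefficients, so for $T$ independent colorings
\[
\EE_{h_1,\dots,h_T}\Bigl\|\tfrac1T\textstyle\sum_s V_{h_s}-P_{\mathcal F}\Bigr\|_w^2=\tfrac1T\sum_{F\in\mathcal F}\operatorname{Var}(X_F)\,w^{|F|}\le \tfrac1T\max_{v\le K}p_v^{-1}\,\|P\|_w^2 .
\]
For balanced colorings $p_v^{-1}\le e^{O(K)}$, and $\|P\|_w^2\le(1-\lambda)^{-1/2}$ by \eqref{eq:graph-mass}. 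Hence the error is at most $e^{O(K)}/T$, and $T=e^{O(L)}$ gives error at most $e^{-L}$.

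\textbf{Step 2 (derandomization).} We choose $h_1,\dots,h_T$ sequentially. This uses the same pessimistic quantity $\Phi=\sum_F w^{|F|}\bigl(\tfrac1T\sum_s X_F(h_s)-1\bigr)^2$, whose conditional expectation given a partial assignment must be computable. Rather than derandomize a fully random $h$, we fix colors one vertex at a time. The conditional expectation expands into pair terms $\sum_F w^{|F|}X_F(h_s)X_F(h_{s'})$ and single terms, and these are again weighted sums over $\mathcal F$ with constraints on colors. The single terms are computed by the DP of Step 3. The cross terms ask that $F$ be colorful under two colorings simultaneously, and we handle them by running the DP on the product color set $[K]^2$ with distinctness enforced in each coordinate. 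Each DP state then tracks two used-color subsets, which costs $4^K=e^{O(L)}$.

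As an alternative, we could replace random colorings by an explicit $(n,K)$-perfect hash family of size $e^{O(K)}\log n$. We would then average $V_h$ over the family, reweighting so that each $F$ receives coefficient exactly $1$. Uniform reweighting gives coefficient $N_F/(|\mathcal H|p)$, which is not exactly $1$, so this route needs care. We therefore plan to use conditional expectations, since the cost of a wrong coefficient is controlled by the variance identity above.

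\textbf{Step 3 (dynamic programming for $V_h$).} Enumerate the branching set $B$ with $|B|<b$, which gives $n^{O(1)}$ choices. Every $F$ with $B(F)=B$ decomposes into vertex-disjoint pieces: paths whose internal vertices have degree $2$ and whose endpoints lie in $B$ (endpoints may coincide, giving loops at $B$), and cycles avoiding $B$. The degree of each $u\in B$ must be even and at least $4$. Colorfulness means all the pieces use disjoint color sets. For each color subset $S$ we compute, by the standard color-coding DP, the weighted sums of colorful paths between $u,v\in B$ and of colorful cycles, with weight $\prod t_{ij}$ and internal vertices outside $B$. Each such computation costs $\poly(n)2^K$.

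We then combine the pieces by subset convolution over $S\subseteq[K]\setminus h(B)$. An outer DP processes the unordered pairs or loops $\{u,v\}$ in a fixed order and tracks the current degree vector on $B$ (capped, with $O(K)^b$ states) and the used colors. Cycles are added last, via a "set of cycles" DP that takes each cycle's minimum color as its anchor to avoid overcounting. Multiple parallel paths between the same pair must be unordered, so within a pair we process them as a multiset with a fixed canonical ordering of their color sets. The edge count $|F|\le K$ is tracked as well. Finally we multiply by $1/p_{|V(F)|}$, tracking $|V(F)|=|B|+|S|$.

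\textbf{Main obstacle.} The delicate point is that each $F$ must be counted exactly once in Step 3. This requires handling the multiplicity of parallel paths and loops at branching vertices, and ensuring that the enumerated $B$ is exactly $B(F)$. The latter holds because non-$B$ vertices have degree exactly $2$ and $B$-vertices have degree at least $4$, both enforced by the degree state. The remaining steps are routine.
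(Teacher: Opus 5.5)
Your proposal is correct and follows essentially the paper's route. The shared ingredients are the balanced color-coding estimator $V_h$ with $1/p_{|V(F)|}$ reweighting, and derandomization by conditional expectations in which the only non-constant part of your potential is the cross-overlap $\sum_{i<s}\langle V_{h_i},V_{h_s}\rangle_w$ that the paper controls (because $\langle V_H,P_{\mathcal F}\rangle_w$ and $\|V_H\|_w^2$ are the same for every balanced coloring). That cross-overlap is evaluated by the DP under the full coloring $h_i$, tracking the $\chi$-colors and the number of unassigned vertices of $V(F)$. The last shared ingredient is the terminal-paths-plus-cycles DP over branching sets, where your subset convolution with canonical ordering replaces the paper's nilpotent formal variables and $\mathcal H_B^m/m!$.

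There are two bookkeeping slips. First, $b=O(L/\log(n/L))$ need not be $O(1)$, so count branching sets by $\sum_{j<b}\binom{n}{j}\le e^{O(b\log(en/b))}=e^{O(L)}$ rather than $n^{O(1)}$. Second, the single terms for a partially assigned coloring cannot be computed by the full-coloring DP of Step 3, but this is harmless since they are constant and need not be computed.
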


We begin with a sketch of the argument.  Naively evaluating $P_{\mathcal F}$ by enumerating its terms takes
$n^{O(K)}$ operations, giving only a quasipolynomial algorithm when
$K=\Theta(\log n)$. To overcome this issue, we use color coding method,
introduced by Alon, Yuster, and Zwick~\cite{AYZ95}.

To sketch the method, consider a uniformly random balanced coloring $h:[n]\to[K]$ (balanced means that the color classes have sizes differing by at most one). We call a graph $F$ \emph{colorful} under $h$ if its vertices $V(F)$ have different colors. We then consider the sum
\begin{equation}\label{eq:color-estimator-sketch}
 V_h\coloneqq\sum_{F\in\mathcal F}
 \frac{\ind\{F\text{ colorful under }h\}}
      {p_{|V(F)|}}t_F.
\end{equation}
Here $p_v$ is the probability that a fixed vertex set of size $v$
is colorful. This makes sure that
\[
 \EE_h V_h=P_{\mathcal F}.
\]

Although $V_h$ may still contain $n^{O(K)}$ terms, colorfulness
allows us to evaluate it without enumerating those terms.
In \Cref{subsec:color-dp}, we construct a dynamic program that
computes $V_h$ exactly using
\[
 \poly(n)e^{O(K)}\sum_{j=0}^{b-1}\binom nj
\]
operations. This is polynomial in $n$ when
$K=O(\log n)$ and $b\log(e n/b)=O(\log n)$, which is the regime we are working in.

Once we can evaluate the sum for one coloring, we can repeat
the procedure for $T$ independent balanced colorings
$h_1,\ldots,h_T$ and set
\[
 \widehat P_{\mathcal F}\coloneqq\frac1T\sum_{t=1}^T V_{h_t}.
\]
Unbiasedness and independence imply
\begin{equation}\label{eq:color-average-error-sketch}
 \EE_{h_1,\ldots,h_T}
 \|\widehat P_{\mathcal F}-P_{\mathcal F}\|_w^2 =
 \frac1T\EE_h\|V_h-P_{\mathcal F}\|_w^2\le
 \frac{O(8^K)}{T}
\end{equation}
where the inequality follows from a simple computation. Thus, for any $\eta>0$, choosing $T \gtrsim 8^K/\eta$ makes the expected
squared coefficient error at most $\eta$.

This sketches a randomized algorithm for approximating
$P_{\mathcal F}$. In our real algorithm, however, the colorings are
chosen deterministically while preserving the same error bound.
We use the method of conditional expectations
(see, for example,~\cite{ES73, Rag88}).

The section is organized as follows.
In \Cref{subsec:color-estimator}, we reduce the proof of
\Cref{prop:sk-approximation} to constructing suitable colorings
deterministically and efficiently evaluating their associated polynomials $V_h$. In \Cref{subsec:color-derandomization}, we construct these colorings using the method of conditional expectations.
Finally, in \Cref{subsec:color-dp}, we give a dynamic program for
evaluating $V_h$.

\subsection{The reduction}
\label{subsec:color-estimator}

In this section, we will prove \Cref{prop:sk-approximation} given \Cref{lem:color-overlap} and \Cref{lem:colorful-evaluator} below.

\begin{lemma}
\label{lem:color-overlap}
For every integer $T\ge1$, we can deterministically construct
balanced colorings $h_1,\ldots,h_T$ such that, for each
$t=1,\ldots,T$,
\begin{equation}\label{eq:det-greedy}
 \sum_{i<t}\langle V_{h_i},V_{h_t}\rangle_w
 \le (t-1)\|P_{\mathcal F}\|_w^2..
\end{equation}
The construction uses $\poly(n,T)e^{O(K)}
 \sum_{j=0}^{b-1}\binom nj$ operations.
\end{lemma}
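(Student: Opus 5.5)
The plan is the method of conditional expectations, applied one coloring at a time and one vertex at a time. First I would check the randomized baseline. Fix previously chosen $h_1,\dots,h_{t-1}$. For a uniformly random balanced coloring $h$, unbiasedness gives $\EE_h V_h=P_{\mathcal F}$ coefficientwise. The inner product $\langle V_{h_i},\cdot\rangle_w$ is linear in its second argument, so
\[
\EE_h\sum_{i<t}\langle V_{h_i},V_h\rangle_w=\sum_{i<t}\langle V_{h_i},P_{\mathcal F}\rangle_w .
\]
Next I need $\langle V_{h_i},P_{\mathcal F}\rangle_w\le\|P_{\mathcal F}\|_w^2$. Since $P_{\mathcal F}$ has coefficients $\ind\{F\in\mathcal F\}$, this inner product equals $\sum_{F\in\mathcal F}\ind\{F\text{ colorful under }h_i\}w^{|F|}/p_{|V(F)|}$. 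That quantity is not obviously at most $\|P_{\mathcal F}\|_w^2$ for a fixed $h_i$. So I would strengthen the induction and maintain the invariant $\langle V_{h_i},P_{\mathcal F}\rangle_w\le\|P_{\mathcal F}\|_w^2$ for each chosen $h_i$. Equivalently, I would choose each $h_t$ to make the potential
\[
\Phi_t(h)\coloneqq\sum_{i<t}\langle V_{h_i},V_h\rangle_w-(t-1)\langle V_h,P_{\mathcal F}\rangle_w
\]
together with the condition $\langle V_h,P_{\mathcal F}\rangle_w\le\|P_{\mathcal F}\|_w^2$ hold simultaneously.

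The cleanest way to handle both conditions is to derandomize a single potential whose mean is at most $(t-1)\|P_{\mathcal F}\|_w^2$. I would take
\[
\Psi_t(h)=\sum_{i<t}\langle V_{h_i},V_h\rangle_w+(t-1)\bigl(\langle V_h,P_{\mathcal F}\rangle_w-\|P_{\mathcal F}\|_w^2\bigr).
\]
Its expectation is $\sum_{i<t}\langle V_{h_i},P_{\mathcal F}\rangle_w$, which is at most $(t-1)\|P_{\mathcal F}\|_w^2$ by the invariant. Achieving $\Psi_t(h_t)\le(t-1)\|P_{\mathcal F}\|_w^2$ alone does not separate the two terms, however. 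As a fallback, I would simply choose $h_t$ to satisfy the stated inequality \eqref{eq:det-greedy}. At the same time I would argue that the invariant is automatic up to the greedy choice, by also bounding $\langle V_{h_i},P_{\mathcal F}\rangle_w$ by $\|P_{\mathcal F}\|_w^2$ through a separate conditional-expectation step at stage $i$; this amounts to running the greedy on a sum of two nonnegative-excess potentials. The main obstacle is this bookkeeping, and I expect some combination of the two potentials to work.

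The derandomization itself proceeds as follows. The colors $h(1),h(2),\dots,h(n)$ are assigned sequentially, subject to the balanced-count constraint. At each step I pick the color minimizing the conditional expectation of the potential given the partial assignment. This keeps the conditional expectation nonincreasing, so the final deterministic value is at most the initial mean.

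Computing these conditional expectations is the technical core. Each term is a sum over $F,F'$ with matching coefficients; the inner products are diagonal, so $F=F'$. Hence
\[
\langle V_{h_i},V_h\rangle_w=\sum_{F\in\mathcal F}\ind\{F\text{ colorful under }h_i\}\,\frac{w^{|F|}}{p_{|V(F)|}^2}\,\ind\{F\text{ colorful under }h\}.
\]
Under a partial assignment, the conditional probability that $V(F)$ is colorful depends only on three quantities: the already-fixed colors in $V(F)$, whether they are distinct, and the number of unfixed vertices of $V(F)$ together with the remaining color-class capacities. Since $V(F)$ has at most $K$ vertices, this is a symmetric function. I would evaluate the sums by the same dynamic program as in \Cref{subsec:color-dp}. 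That program enumerates the fewer than $b$ branching vertices and then runs a path/cycle DP over subsets of colors, here of $h_i$-colors, augmented with a record of the set of $h$-colors used by already-fixed vertices and the count of unfixed vertices. Tracking the used color subset for both colorings costs $e^{O(K)}$ states. Summing over the $n$ vertices, $K$ color choices and $T$ stages gives the claimed $\poly(n,T)e^{O(K)}\sum_{j<b}\binom nj$ bound.
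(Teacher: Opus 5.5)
Your plan has a genuine gap at the point you flag yourself. The baseline $\EE_H\sum_{i<t}\langle V_{h_i},V_H\rangle_w=\sum_{i<t}\langle V_{h_i},P_{\mathcal F}\rangle_w$ gives the required bound only if each $\langle V_{h_i},P_{\mathcal F}\rangle_w\le\|P_{\mathcal F}\|_w^2$, and neither of your remedies establishes this. The method of conditional expectations controls one potential at a time.

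Suppose you greedily minimize $\Psi_t$ and obtain $\Psi_t(h_t)\le(t-1)\|P_{\mathcal F}\|_w^2$. This bounds $\sum_{i<t}\langle V_{h_i},V_{h_t}\rangle_w$ by $(t-1)\|P_{\mathcal F}\|_w^2$ only if $\langle V_{h_t},P_{\mathcal F}\rangle_w\ge\|P_{\mathcal F}\|_w^2$. That is the opposite of the invariant you want to carry to the next stage. So the two conditions you impose on $h_t$ can hold together only with equality. Greedily minimizing a sum of two potentials also never forces each one below its own mean. As written, the inductive step is unjustified.

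The missing idea is that there is nothing to maintain: $\langle V_h,P_{\mathcal F}\rangle_w=\|P_{\mathcal F}\|_w^2$ holds exactly for every balanced coloring $h$. To see this, set $\phi(S)\coloneqq\sum_{F\in\mathcal F,\,V(F)=S}w^{|F|}$. Then $\langle V_h,P_{\mathcal F}\rangle_w=\sum_{S}\ind\{h\text{ injective on }S\}\,\phi(S)/p_{|S|}$.

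\begin{itemize}
\item Since $\mathcal F$ and the weights $w^{|F|}$ are invariant under relabeling vertices, $\phi(S)=\phi_{|S|}$ depends only on $|S|$.
\item The number of $v$-sets on which $h$ is injective is $\sum_{C\subseteq[K],\,|C|=v}\prod_{c\in C}n_c=p_v\binom nv$. This depends only on the multiset of class sizes, so it is the same for every balanced $h$.
\end{itemize}

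Hence the sum equals $\sum_v\phi_v\binom nv=\|P_{\mathcal F}\|_w^2$. This is the paper's remark that every balanced coloring retains a fraction $p_v$ of the supports of size $v$.

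Consequently $\EE_H\sum_{i<t}\langle V_{h_i},V_H\rangle_w=(t-1)\|P_{\mathcal F}\|_w^2$ exactly, and your $\Psi_t$ coincides with $\sum_{i<t}\langle V_{h_i},V_h\rangle_w$. The rest of your plan is exactly the paper's argument:
\begin{itemize}
\item the vertex-by-vertex greedy with prescribed class sizes;
\item the observation that the conditional colorfulness probability depends only on $(|V(F)|,|V(F)\cap[k]|,\chi(V(F)\cap[k]))$;
\item evaluation by the branching-set and path--cycle dynamic program with edge inputs $t_{ij}=w$.
\end{itemize}
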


We prove \Cref{lem:color-overlap} in
\Cref{subsec:color-derandomization} with an efficient evaluator from \Cref{subsec:color-dp} as an input.
\begin{lemma}
\label{lem:colorful-evaluator}
Given any coloring $h:[n]\to[K]$, the polynomial $V_h$ can
be evaluated exactly using
\[
 \poly(n)e^{O(K)}
 \sum_{j=0}^{b-1}\binom nj
\]
arithmetic operations.
\end{lemma}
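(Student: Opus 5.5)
We group the colorful graphs in $\mathcal F$ by their branching set and evaluate each group by a dynamic program over paths and cycles. Since $1/p_{|V(F)|}$ depends only on $v=|V(F)|$, it suffices to compute, for each $v\le 2K$ (an even graph with at most $K$ edges has at most $K$ supported vertices), the sum $\sum t_F$ over colorful $F\in\mathcal F$ with $|V(F)|=v$.

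Fix a set $B\subseteq[n]$ with $|B|=j<b$; there are $\sum_{j<b}\binom nj$ choices. Every $F$ with $B(F)=B$ has all vertices outside $B$ of degree $0$ or $2$. Hence $F$ decomposes uniquely into two parts: (i) a collection of internally vertex-disjoint paths whose endpoints lie in $B$ (possibly equal, giving closed walks through a single vertex of $B$, or single edges inside $B$), with interiors in $[n]\setminus B$; and (ii) vertex-disjoint cycles in $[n]\setminus B$ avoiding all path interiors.

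The conditions are then checked as follows.
\begin{itemize}
\item Evenness and $B(F)=B$ are conditions on the degree each vertex of $B$ receives from the path endpoints: each degree must be even and at least $4$.
\item Colorfulness means all vertices, including $B$, get distinct colors. So we first check that $h$ is injective on $B$ and remove the colors $h(B)$.
\item The remaining structure is encoded by the set of colors used, which is standard color coding.
\end{itemize}

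The dynamic program proceeds in three steps.
\begin{enumerate}
\item \emph{Paths.} For every ordered pair $(a,a')\in B^2$ and every color set $S\subseteq[K]\setminus h(B)$, compute $W_{a,a'}(S)=\sum t_{\text{path}}$ over paths from $a$ to $a'$ whose interior vertices lie outside $B$ and are colored bijectively by $S$. This is the usual recursion over (current endpoint, used color set), taking $\poly(n)2^K$ operations. Closed paths $a=a'$ need at least two interior vertices, since $F$ is simple.
\item \emph{Cycles.} Similarly compute $C(S)$, the total weight of single cycles in $[n]\setminus B$ colored exactly by $S$, with $|S|\ge3$. To count each cycle once, root it at its minimal vertex and divide by $2$ for orientation.
\item \emph{Combination.} Combine these pieces by a subset-convolution DP over color sets. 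Because the color sets are disjoint, the pieces are automatically vertex-disjoint and colorful. We also track the number of edges and the degree vector on $B$.
\end{enumerate}

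The main obstacle is bookkeeping multiplicities so that each $F$ is counted exactly once. Two sources of overcounting must be handled. First, paths are unordered, and two paths with the same endpoints must not be double counted. Second, there are at most $\binom j2$ single-edge paths inside $B$, each allowed at most once.

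To handle this, we process the at most $j^2$ unordered endpoint pairs $\{a,a'\}$ in a fixed order. For each pair, we choose a multiset of paths whose color sets are disjoint. We canonicalize each multiset by requiring the color sets to be added in increasing order of their minimal color, which removes the ordering of identical-endpoint paths. Cycles are handled the same way.

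The state of the DP is (used colors $\subseteq[K]$, edge count $\le K$, degree vector on $B$ truncated at a level recording parity and whether the degree is at least $4$). This state space has size $2^K\cdot K\cdot O(1)^{j}$. Since $j<b=O(L/\log(n/L))$, the factor $O(1)^j$ is $e^{O(K)}$ for $K=\Theta(L)$, or is absorbed into $\poly(n)$. Each transition enumerates subsets, costing $3^K$ in total. Finally, the number of vertices equals $j+|\text{used colors}|$, so we read off the $v$-grouped sums, weight them by $1/p_v$, and sum over all $B$. The total cost is $\poly(n)e^{O(K)}\sum_{j<b}\binom nj$.
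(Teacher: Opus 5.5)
Your proposal is correct and follows essentially the paper's route: enumerate the branching set $B$, decompose $F$ into terminal-to-terminal paths (direct edges, paths with interior vertices, returning paths) and terminal-free cycles, combine color-disjoint pieces while tracking edge count and terminal degrees, and weight by $1/p_v$ with $v=|B|+|\text{used colors}|$. The differences are only implementational---you remove overcounting by canonical ordering, where the paper uses nilpotent color variables $y_c^2=0$ and the truncated exponential $\sum_m \mathcal H_B^m/m!$, and you cap degrees at a finite state rather than storing the exact vector $\boldsymbol d$ (remember that $W_{a,a}(S)$ counts each returning path once per orientation, hence twice, and that $|B|\le K/2$ gives $O(1)^{|B|}=e^{O(K)}$ directly without appealing to the cutoffs).
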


We prove \Cref{lem:colorful-evaluator} in
\Cref{subsec:color-dp}. 

\begin{proof}[Proof of \Cref{prop:sk-approximation}]
We first lower bound on $p_v$. Let $n_1,\ldots,n_K$ be the color-class sizes. Since they are balanced, we have $n_c\ge n/(2K)$, so
\[
 p_v
 =   {\binom nv}^{-1}\sum_{\substack{S\subseteq[K] \\ |S|=v}}
   \prod_{c\in S}n_c
 \ge
 {\binom nv}^{-1}{\binom Kv \left(\frac{n}{2K}\right)^{v}}
 \ge \frac{(K)_v}{2^vK^v}
 \ge \frac{K!}{2^KK^K}
 \ge (2e)^{-K}
 >8^{-K},
\]
Next, note that every balanced coloring retains a fraction $p_v$ of the
supports of size $v$, so the normalization in
\eqref{eq:color-estimator-sketch} gives
\[
 \langle V_h,P_{\mathcal F}\rangle_w
 =\|P_{\mathcal F}\|_w^2.
\]
Since the coefficients of $V_h$ lie in $[0,8^K]$ and its
monomials belong to $\mathcal F$, we also have
\[
 \|V_h\|_w^2
 \le 8^K\langle V_h,P_{\mathcal F}\rangle_w
 =8^K\|P_{\mathcal F}\|_w^2.
\]

Take the colorings supplied by \Cref{lem:color-overlap}
and consider $\widehat P_{\mathcal F}$ as above. 
First observe that by the identity above, $\langle\widehat P_{\mathcal F},P_{\mathcal F}\rangle_w
 =\frac1T\sum_{t=1}^T
   \langle V_{h_t},P_{\mathcal F}\rangle_w
 =\|P_{\mathcal F}\|_w^2.$ Hence, expanding the squared error
\[
\begin{aligned}
 \|\widehat P_{\mathcal F}-P_{\mathcal F}\|_w^2
 = \|\widehat P_{\mathcal F}\|_w^2 - \|P_{\mathcal F}\|_w^2 
 &=
 \frac1{T^2}\sum_{t=1}^T\|V_{h_t}\|_w^2
 +\frac2{T^2}\sum_{t=1}^T\sum_{i<t}
   \langle V_{h_i},V_{h_t}\rangle_w
 -\|P_{\mathcal F}\|_w^2\\
 &\le
 \left(
 \frac{8^K}{T}
 +\frac2{T^2}\sum_{t=1}^T(t-1)-1
 \right)\|P_{\mathcal F}\|_w^2=
 \frac{8^K-1}{T}\|P_{\mathcal F}\|_w^2
 \le \frac{C8^K}{T}.
\end{aligned}
\]
where the last inequality uses
$\|P_{\mathcal F}\|_w^2\le\|P\|_w^2\le C$
from \eqref{eq:graph-mass}. Hence, choosing $T=C8^K e^L$  gives the required error bound.

Finally, \Cref{lem:color-overlap} constructs the colorings
deterministically, and \Cref{lem:colorful-evaluator} evaluates
each associated polynomial. The total construction and
evaluation cost is
\[
 \poly(n,T)e^{O(K)}
 \sum_{j=0}^{b-1}\binom nj.
\]
Our cutoff bounds $K=O(L)$, $\log T=O(L)$, and $\sum_{j=0}^{b-1}\binom nj
 \le \exp \{O(b\log(e n/b))\}
 =e^{O(L)},$ imply that the total number of operations is at most $\poly(n)e^{O(L)}$.
\end{proof}

\subsection{Deterministic selection of colorings}
\label{subsec:color-derandomization}

In this section, we show how to control the  overlap of each new coloring with the colorings already chosen.
A uniformly random balanced coloring typically has small overlap with previous colors. We turn this observation into a deterministic construction by assigning colors vertex by vertex, always preserving a bound on the conditional expectation. The main idea is that these conditional expectations can be computed efficiently by dynamic programming from \Cref{subsec:color-dp}.

\begin{proof}[Proof of \Cref{lem:color-overlap}]
Fix color-class sizes $n_1,\ldots,n_K$ differing by at most one
and summing to $n$. All random colorings below are uniform among
colorings with these prescribed sizes.

Suppose $h_1,\ldots,h_{t-1}$ have already been chosen. Define
\[
 S(h)\coloneqq\sum_{i<t}\langle V_{h_i},V_h\rangle_w,
\]
and let $H$ be a fresh random coloring. Unbiasedness and the
identity $\langle V_h,P_{\mathcal F}\rangle_w
=\|P_{\mathcal F}\|_w^2$ from the preceding subsection give
\begin{equation}\label{eq:color-initial-overlap}
 \EE_H S(H)
 =\sum_{i<t}\langle V_{h_i},P_{\mathcal F}\rangle_w
 =(t-1)\|P_{\mathcal F}\|_w^2.
\end{equation}
Thus it suffices to choose $h$ whose overlap is at most this
average. 

We now assign colors to the vertices one by one in the order $1,\ldots,n$, while maintaining the bound on the conditional expectation. Initially $\chi_0 = \varnothing$ is an empty coloring. For a partial assignment $\chi_k:[k]\to[K]$ that has a balanced completion, write
\[
 \Phi(\chi_k)\coloneqq
 \EE_H\!\left[S(H)\mid H|_{[k]}=\chi_k\right].
\]
Note that using the law of total expectation, we have
\begin{align*}
    \Phi_t(\chi_k) &= \sum_{c \in [K]} \mathbb{P}_H\!\left[H(k+1)=c\mid H|_{[k]}=\chi_k\right]\times \EE_H\!\left[S(H)\mid H|_{[k]}=\chi_k \text{ and } H(k+1) = c \right] \\
    & \geq \min_{c \in [K]}\EE_H\!\left[S(H)\mid H|_{[k]}=\chi_k \text{ and } H(k+1) = c \right] 
\end{align*}
Therefore, we extend $\chi_k$ from $[k]$ to $[k+1]$ by choosing $\chi_{k+1}(k+1)=c$, for the color $c$ with the smallest conditional expectation, so we guarantee that 
\begin{equation*}
    \Phi(\chi_{k+1}) \leq \Phi(\chi_k).
\end{equation*}
Note that if at every step, we can compute the conditional expectations exactly and efficiently, we would guarantee that at the end of this process, we have a deterministic coloring $\chi_n$, that satisfies $S(\chi_n) \leq (t-1)\|P_{\mathcal F}\|_w^2$.

To this end, it suffices that given a coloring $h:[n]\to[K]$ and a partial coloring
$\chi:[k]\to[K]$ admitting a balanced completion, we can compute $\EE_H[\langle V_h,V_H\rangle_w
              \mid H|_{[k]}=\chi]$ efficiently. Note that 
\begin{equation*}
  \EE_H[\langle V_h,V_H\rangle_w
              \mid H|_{[k]}=\chi] =   \sum_{\substack{F\in\mathcal F\\F\text{ colorful under }h}}
   \frac{w^{|F|}}{p_{|V(F)|}^2}
   \PP_H\!\left[H\text{ injective on }V(F)
                  \mid H|_{[k]}=\chi\right].
\end{equation*}
The probabilities $\PP_H[H\text{ injective on }V(F)\mid H|_{[k]}=\chi]$ only depend on the colors $C = \chi( V(F) \cap [k] )$ and the size $\ell =|V(F) \setminus k|$, and they are explicitly computable. This means that the conditional expectation, can be expressed as
\begin{equation}\label{eq:conditional-overlap}
    \EE_H[\langle V_h,V_H\rangle_w
              \mid H|_{[k]}=\chi] =   \sum_{\substack{F\in\mathcal F\\F\text{ colorful under }h}} \alpha (|V(F)|, |V(F) \cap [k]|, \chi(V(F) \cap [k]))\, w ^{|F|},
\end{equation}
where $\alpha : [n] \times [n] \times 2^{[K]} \to \mathbb{R}$ is an explicit coefficients that can be computed in $\poly(n)e^{O(K)}$ time. Therefore, \Cref{prop:colorful-coefficients}, applied with
$U=[k]$, coefficient table $\alpha$, and edge inputs $t_{ij}=w$,
computes each conditional overlap exactly using
\[
 \poly(n)e^{O(K)}\sum_{j=0}^{b-1}\binom nj
\]
arithmetic operations. This concludes the proof.
\end{proof}

\subsection{Dynamic programming for a fixed coloring}
\label{subsec:color-dp}

We have now established the correctness of the algorithm. It only remains to prove that we can efficiently evaluate the sum $V_h$ from \eqref{eq:color-estimator-sketch} and the conditional overlap under partial coloring \eqref{eq:conditional-overlap}. We do so by dynamic programming that avoids naive explicit enumeration of all terms.

\Cref{prop:colorful-coefficients} provides a dynamic programming procedure for evaluating a general class of sums that include both \eqref{eq:color-estimator-sketch} and \eqref{eq:conditional-overlap} as special cases.

\begin{proposition}\label{prop:colorful-coefficients}
Let $n \ge K, b\ge1$. Fix a full coloring $h:[n]\to[K]$, a set
$U\subseteq[n]$, and a partial coloring $\chi:U\to[K]$.
Let $\alpha : [n] \times [n] \times 2^{[K]} \to \mathbb{R}$ be any supplied coefficients. Then, for arbitrary $t=(t_{e})_{e \in E}$, the polynomial
\begin{equation}\label{eq:colorful-alpha-polynomial}
 P(t)\coloneqq
 \sum_{\substack{F\in\mathcal F_{K,b}\\F\text{ colorful under }h}}
 \alpha\bigl(|V(F)|,|V(F)\cap U|,\chi(V(F)\cap U)\bigr)\prod_{e \in F}t_e
\end{equation}
can be evaluated exactly using at most 
\[
 \poly(n)e^{O(K)}\sum_{j=0}^{b-1}\binom nj
\]
arithmetic operations.
\end{proposition}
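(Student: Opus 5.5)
We enumerate the branching set and then run a colour-coding dynamic program over paths and cycles. First we enumerate the set $B=B(F)$ of branching vertices over all subsets $B\subseteq[n]$ with $|B|<b$; this accounts for the factor $\sum_{j<b}\binom nj$. We also guess the degree $d_u\in\{4,6,\dots\}$ of each $u\in B$. Since $\sum_u d_u\le 2K$, there are at most $e^{O(K)}$ such guesses.

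Fix $B$ and the degrees. An even graph $F$ with $B(F)=B$ has all vertices outside $B$ of degree exactly $2$. Hence $F$ decomposes uniquely into two parts:
\begin{itemize}
\item edge-disjoint \emph{$B$-paths}: maximal trails whose interior vertices lie in $[n]\setminus B$ and whose endpoints lie in $B$ (possibly equal, possibly a single edge inside $B$);
\item vertex-disjoint cycles avoiding $B$.
\end{itemize}
Colourfulness under $h$ means that all vertices of $F$, including those in $B$, get distinct colours. So we first require $h$ to be injective on $B$ and remove the colours $h(B)$ from the palette. The only global data the coefficient $\alpha$ needs are three things: $|V(F)|$, $|V(F)\cap U|$, and the colour set $\chi(V(F)\cap U)$. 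Each of these is additive over the pieces, because the pieces have disjoint vertex sets apart from $B$, whose contribution is known.

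For a single piece we compute, by a standard Alon--Yuster--Zwick dynamic program, a table of weighted sums $\prod_e t_e$. The table is indexed by:
\begin{itemize}
\item the $h$-colour set $S\subseteq[K]$ used by its interior vertices;
\item the $\chi$-colour set $S'\subseteq[K]$ of its interior vertices lying in $U$;
\item its number of edges;
\item for paths, the endpoints $u,u'\in B$.
\end{itemize}
Paths are built by extending one vertex at a time from $u$, with state (current vertex, $S$, $S'$, length). Each extension requires the new vertex's colour to be fresh, which forces the path to be simple. The state count is $n\cdot 4^K\cdot K$, which is $\poly(n)e^{O(K)}$.

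Cycles need care, because a cycle is counted once per starting point and direction. We handle this by rooting each cycle at its vertex of minimum colour, closing back to the root, and dividing by $2$ for orientation. Cycles of length $\le 2$ are excluded, since $F$ is a simple graph. Note that colourfulness over all of $F$ makes the cycles automatically vertex-disjoint from each other.

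Next we assemble the pieces by a subset-convolution dynamic program over colour sets, with state $(S_{\mathrm{tot}},S'_{\mathrm{tot}},\text{edges},\text{residual degree vector on }B)$. Pieces are added in a canonical order so that each unordered family of pieces is counted once. For paths, we process them in lexicographic order of endpoint pairs and use the colour set to break ties. For cycles, we always add the cycle containing the smallest unused colour, which is the standard trick for exact partition enumeration.

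The residual degree vector lives in a set of size $e^{O(K)}$, because $\sum d_u\le 2K$ and $|B|\le K$. At the end we require all residual degrees to be zero, the edge count to be at most $K$, and colour sets to be disjoint at every step. We then multiply by $\alpha(|B|+|S_{\mathrm{tot}}|,\ |B\cap U|+|S'_{\mathrm{tot}}|,\ \chi(B\cap U)\cup S'_{\mathrm{tot}})$. This uses the fact that $|V(F)\setminus B|=|S_{\mathrm{tot}}|$, which holds by colourfulness. Summing over $B$ and the degree guesses yields \eqref{eq:colorful-alpha-polynomial}.

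\textbf{Main obstacle.} The delicate step is uniqueness of the decomposition when assembling paths between branching vertices. Different paths joining the same pair $u,u'$ must be edge-disjoint, and two single-edge paths on the same $\{u,u'\}$ would create a multi-edge. We exclude this by allowing at most one direct $B$–$B$ edge per pair, tracked as an extra $\binom{|B|}{2}$-bit mask, which costs $2^{O(b^2)}\le e^{O(K)}$ when $b^2=O(K)$. Alternatively, we can enumerate the edge set $F[B]$ inside $B$ up front.

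Parallel non-trivial paths are automatically distinct, since their interiors are disjoint and nonempty. To avoid overcounting identical multisets of paths, we order the paths with the same endpoint pair canonically by the minimum colour of their interiors. The running time is therefore $\poly(n)e^{O(K)}\sum_{j<b}\binom nj$.
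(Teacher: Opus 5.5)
Your architecture is the paper's: enumerate $B$, split $F$ into terminal-to-terminal paths and $B$-avoiding cycles, and merge the pieces by a colour-coded subset DP. Replacing the paper's $1/(2\ell)$ and $1/m!$ normalisations by canonical orderings is fine. The step that fails is your handling of direct edges inside $B$:

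\begin{enumerate}
\item The bit mask costs a factor $2^{\binom{|B|}{2}}$, which is $e^{O(K)}$ only when $|B|=O(\sqrt K)$.
\item Enumerating $F[B]$ up front is no better. For $|B|\approx K/2$, the labelled $4$-regular graphs on $B$ alone number $e^{\Theta(K\log K)}$.
\end{enumerate}

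The proposition is claimed for arbitrary $b$, so $|B|$ ranges up to $\min(b-1,K/2)$. Even in the paper's application, $b^2/K\asymp L/\log^2(n/L)$, which is unbounded once $L\ge\log^3 n$. So your argument does not give the bound $\poly(n)e^{O(K)}\sum_{j<b}\binom nj$.

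No mask is needed. Only the residual degree vector matters, so it suffices to visit each of the $\binom{|B|}{2}$ pairs once and let it contribute zero or one edge. This is exactly the paper's factor $\prod_{a<a'}(1+xt_{aa'}v_av_{a'})$, which costs $\binom{|B|}{2}$ multiplications of $e^{O(K)}$-size tables. Your own lexicographic sweep over endpoint pairs already visits each pair exactly once, so the $0/1$ choice can simply be made there.

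Second, the argument you pass to $\alpha$ is wrong in general. The map $\chi$ need not be injective; in the application it is a prefix of a balanced colouring of $[n]$ with $K$ colours. So the number of interior vertices of $F$ lying in $U$ can exceed $|S'_{\mathrm{tot}}|$, and $|B\cap U|+|S'_{\mathrm{tot}}|$ is not $|V(F)\cap U|$. For an arbitrary table $\alpha$ this produces wrong coefficients. You must carry a separate counter $j\le K$ of interior vertices in $U$ (the paper's $\tau^j$). You must also merge $\chi$-colour sets by plain union, never requiring them to be disjoint (the paper's $z_c^2=z_c$).

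A smaller omission concerns returning paths $u\to\cdots\to u$. They need at least two interior vertices, and since your path DP generates each one in both orientations, they must be halved. These correspond to the range $\ell\ge2$ and the factor $\tfrac12$ in the paper's $\mathcal H_B$.
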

\begin{proof}[Proof of \Cref{prop:colorful-coefficients}]
Fix $h,U,\chi$ and the coefficients $\alpha$. Recall that each
graph $F\in\mathcal F_{K,b}$ has at most $K$ edges, and its
branching set $ B(F)\coloneqq\{v\in V(F):\deg_F(v)\ge4\}$ has size less than $b$.

First, we can enumerate all possible sets $B(F)$. There are at
most $\sum_{r=0}^{b-1}\binom nr$ such sets. Therefore, it is
enough to show that, for each fixed set $B$, we can evaluate
\[
 P_B(t)\coloneqq
 \sum_{\substack{F\in\mathcal F_{K,b}\\
                 F\text{ colorful under }h\\ B(F)=B}}
 \alpha\bigl(|V(F)|,|V(F)\cap U|,\chi(V(F)\cap U)\bigr)
 \prod_{e\in F}t_e
\]
using $\poly(n)e^{O(K)}$ arithmetic operations. For the rest of the proof, fix $B$ and write $q=|B|$.
We may assume that $h$ is injective on $B$ and that $q\le K/2$, since otherwise $P_B(t)=0$. Indeed, every vertex in $B$ has degree at least four, whereas the sum of all degrees is at most $2K$.

We call the vertices in $B$ \emph{terminal vertices}. Consider a
graph $F$ appearing in the sum above. Every non-terminal vertex
of $F$ has degree exactly two. Thus, starting from a terminal,
we can follow an edge through non-terminal vertices until we
reach another terminal, possibly the one we started from.
This partitions the edge set of $F$ into the following pieces:
\begin{enumerate}
 \item direct edges between terminal vertices;
 \item paths between distinct terminals, with at least one non-terminal vertex;
 \item paths returning to their starting terminal;
 \item disjoint cycles containing no terminal vertices.
\end{enumerate}
Since $F$ is colorful, its non-terminal
vertices have distinct colors, all outside $h(B)$, so they must belong to the set
\[
 I\coloneqq\{i\in[n]\setminus B:h(i)\notin h(B)\}.
\]

Now computing $P_B(t)$ amounts to summing the weights of all valid
arrangements of these pieces. We will first group together graphs $F$ that share the following data:
\[
 \begin{aligned}
 e&=|F|, &
 A&=h(V(F)\setminus B), &
 \boldsymbol d&=(\deg_F(a))_{a\in B},\\
 j&=|(V(F)\setminus B)\cap U|, &
 C&=\chi((V(F)\setminus B)\cap U).
 \end{aligned}
\]
These data determine all three arguments of $\alpha$, so it suffices to compute the sum of the edge weights $\prod_{e\in F}t_e$ over all $F$ that share these data.
We apply the common coefficient $\alpha$ only after this sum has
been computed. This is useful because $\alpha$ could be  arbitrary.

To carry out this grouping while assembling the pieces, we use commuting formal variables
\[
 x,\qquad (y_c)_{c\in[K]\setminus h(B)},\qquad
 (v_a)_{a\in B},\qquad \tau,\qquad (z_c)_{c\in[K]}.
\]
We represent the data $(e,A,\boldsymbol d,j,C)$ by the formal monomial
\[
 x^e y_Av^{\boldsymbol d}\tau^jz_C,
 \qquad \text{ where } \;\;
 y_A\coloneqq\prod_{c\in A}y_c,\quad
 v^{\boldsymbol d}\coloneqq\prod_{a\in B}v_a^{d_a},\quad
 z_C\coloneqq\prod_{c\in C}z_c.
\]

The variables are formal and never assigned any numerical values. We store their coefficients, so that the resulting formal polynomial is simply a convenient way to represent a whole table of weighted sums as one algebraic object. Multiplying monomials will then combine the data of two pieces and discard incompatible choices.

Concretely, we impose the relations $ y_c^2=0$ and $z_c^2 = z_c$ because $F$ must be colorful and $C$ should record the set of repetitions.  We also only retain the monomials satisfying
\[
 0\le e,j\le K,\qquad
 A\subseteq[K]\setminus h(B),\qquad C\subseteq[K], \qquad 
 \boldsymbol d\in\mathbb Z_{\ge0}^{B},\qquad
 |\boldsymbol d|\coloneqq\sum_{a\in B}d_a\le2K.
\]
Every graph of interest satisfies these bounds, so the number of possible data combinations (or monomials) are at most $e^{O(K)}.$

We now construct the weighted sum of individual pieces. For $i\in I$,  we put
\[
 \omega_i\coloneqq
 \begin{cases}
  \tau z_{\chi(i)},&i\in U,\\
  1,&i\notin U.
 \end{cases}
\]
Let $W$ and $D$ be $I \times I$ matrices defined by 
\[
 W_{ij}\coloneqq
 \begin{cases}xt_{ij},&i\ne j,\\0,&i=j,\end{cases}
 \qquad \quad 
 D_{ij}\coloneqq
 \begin{cases}0 ,&i\ne j,\\y_{h(i)}\omega_i,&i=j,\end{cases}
\]
And finally, define $(r_a)_{a \in B}$ to be $I$ indexed vector such that 
\[
 (r_a)_i\coloneqq xt_{ai},\quad \text{ and } \quad 
 r(v)\coloneqq\sum_{a\in B}v_a r_a.
\]
With matrices $W$ and $D$, and vectors $(r_a)_{a \in B}$,  we can now conveniently count weighted paths and cycles among interior vertices. For example, for $\ell\ge1$ we have
\[
 \begin{aligned}
 r_a^{\mathsf T}(DW)^{\ell-1}Dr_{a'}
 ={}&\sum_{\substack{i_1,\ldots,i_\ell\in I\\
                     i_s\ne i_{s+1}\ (1\le s<\ell)}}
 x^{\ell+1}t_{ai_1}
 \left(\prod_{s=1}^{\ell-1}t_{i_si_{s+1}}\right)t_{i_\ell a'}
 \prod_{s=1}^{\ell}y_{h(i_s)}\omega_{i_s}.
 \end{aligned}
\]
Each term follows a walk from $a \in B$ to $a' \in B$ with $\ell$ interior vertices. The relations $y_c^2=0$ remove every walk whose interior
colors repeat. Consequently, the surviving terms are precisely the distinct colorful paths from $a$ to $a'$ that use $\ell$ internal vertices, and this sum also includes returning paths when $a=a'$ and $\ell\ge2$.

We now sum over all possible choices of a single piece, leaving
direct edges between terminals for later. Each piece contributes
its edge weight together with the formal variables recording its
data. Counting each piece once, we obtain the polynomial
\begin{equation}\label{eq:piece-polynomial}
 \mathcal H_B\coloneqq{}
 \sum_{\ell=3}^{K}\frac{\Tr((DW)^\ell)}{2\ell}
 +\sum_{\substack{a,a'\in B\\a<a'}}
       v_av_{a'}r_a^{\mathsf T}Dr_{a'}+\frac12r(v)^{\mathsf T}
       \left(\sum_{\ell=2}^{K}(DW)^{\ell-1}D\right)r(v).
\end{equation}
Let us explain the three terms. The first term describes all simple cycles with $\ell$ interior vertices. Every undirected cycle appears $2\ell$ times in the trace, once for each starting vertex and traversal direction, so we divide by $2\ell$ to fix overcounting.

The second term counts paths $a - i - a'$ with exactly one interior vertex $i$. Their endpoints must be distinct, and $a<a'$ counts each path once.
The last term counts paths with at least two interior vertices.
Expanding the two copies of $r(v)$ chooses their terminal endpoints.
Each path appears twice, in each orientation, so the factor $1/2$ counts every such piece once. Terms with more than $K$ edges are discarded by our
truncation rule.

Next, we want to assemble these pieces in all possible ways. Accordingly, define
\begin{equation}\label{eq:piece-assembly}
 \mathcal G_B\coloneqq
 \prod_{\substack{a,a'\in B\\a<a'}}(1+xt_{aa'}v_av_{a'})
 \sum_{m=0}^{K}\frac{\mathcal H_B^m}{m!}.
\end{equation}
The product $\prod_{a<a'}(1+xt_{aa'}v_av_{a'})$ sums over all
choices of direct edges within  $B$. On the other hand, each
term in $\mathcal H_B^m$ chooses an ordered list of $m$ pieces with at least one interior vertex. The relations $y_c^2=0$ ensure that surviving pieces have disjoint
interior color sets and are therefore distinct. Each collection
appears in exactly $m!$ orders, so dividing by $m!$ counts it once.

Thus, $\mathcal G_B$ records the weighted sums for all possible
arrangements of pieces. Since the decomposition into pieces is unique, for any retained data $(e,A,\boldsymbol d,j,C)$, the coefficient
\[
 [x^e y_Av^{\boldsymbol d}\tau^jz_C]\mathcal G_B
\]
is the sum of the edge weights of all colorful graphs with
branching set $B$ and the given data. Multiplying this coefficient by
\[
 \alpha\bigl(|B|+|A|,\ |B\cap U|+j,\ \chi(B\cap U)\cup C\bigr)
\]
and summing over the retained terms gives $P_B(t)$.

Finally, each polynomial involved has at most $M=e^{O(K)}$
coefficients. Addition takes at most $M$ arithmetic operations,
and multiplication takes at most $2M^2$. The matrices have order
at most $n$, and the expressions above require only polynomially
many table additions and multiplications in $n$ and $K$.
Hence, computing $\mathcal G_B$ and applying the final weights
takes $\poly(n)e^{O(K)}$ arithmetic operations. Summing over
the possible choices of $B$ gives the claimed bound.
\end{proof}

\begin{proof}[Proof of \Cref{lem:colorful-evaluator}]
Apply \Cref{prop:colorful-coefficients} with $U=\varnothing$,
the empty map $\chi$, and $\alpha(v,m,C)=1/p_v$.
Then \eqref{eq:colorful-alpha-polynomial} is exactly $V_h$.
The probabilities $p_v$ are positive and computable in $\poly(n,K)$ operations.
\end{proof}

\appendix

\section{Extension to mixed p-spin models}\label{app:mixed}

\subsection{The ALR expansion}\label{app:mixed-expansion}

In this section we give the high-temperature expansion of Aizenman, Lebowitz, and Ruelle~\cite{ALR87,ALR88} for the mixed $p$-spin model in \eqref{eq:model}. We first group together the terms of the
Hamiltonian that give the same spin monomial. This also accounts for
the repeated indices in \eqref{eq:model}.

For an ordered tuple $\boldsymbol i=(i_1,\ldots,i_p)$, let
$A(\boldsymbol i)\subseteq[n]$ be the set of indices occurring an odd
number of times. Note that $\sigma_{i_1}\dots \sigma_{i_p} = \sigma_{A(\boldsymbol i)}$, where for a set $A$ we denote $\sigma_A\coloneqq\prod_{i\in A}\sigma_i$. For $A\subseteq[n]$, define
\begin{equation}\label{eq:mixed-couplings}
 J_A\coloneqq\sum_{p=2}^{p_{\max}}
 \frac{\gamma_p}{n^{(p-1)/2}}
 \sum_{\substack{\boldsymbol i = (i_1,\dots,i_p)\in[n]^p\\A(\boldsymbol i)=A}}
 G_{i_1,\ldots,i_p}
 \qquad\text{and}\qquad
 v_A\coloneqq\EE_G J_A^2\,.
\end{equation}
The couplings $J_A$ are independent centered Gaussians, since the sums defining
them use disjoint entries of $G$. Let $E$ be the collection of nonempty interaction sets with positive variance:
\[
 E\coloneqq\{A\subseteq[n]:1\le |A|\le p_{\max},\ v_A>0\}\,.
\]
Then
\begin{equation}\label{eq:mixed-hamiltonian}
 H_G(\sigma)=J_\varnothing+\sum_{A\in E}J_A\sigma_A
\end{equation}

For real $\beta$, let $t_A(\beta)\coloneqq\tanh(\beta J_A)$. Similarly to the SK model case, for every hyperedge $A \in E$, 
\[
 e^{\beta J_A\sigma_A}
 =\cosh(\beta J_A)(1+t_A(\beta)\sigma_A)\,,
\]
so we can rewrite the normalized partition function 
$$
Y_G(\beta) = \frac{Z_G(\beta)} {\EE_G Z_G(\beta)} = \frac{e^{\beta J_\varnothing}\prod_{A \in E}\cosh(\beta J_A)}{\EE_G Z_G(\beta)} \cdot \underbrace{\EE_\sigma\prod_{A\in E}(1+t_{A}(\beta)\sigma_A)}_{\eqcolon P(t)}\,.
$$
The prefactor is positive for real $\beta$ and can be computed in $O(n^{p_{\max}})$ operations: indeed, Gaussian integration gives $\EE_G Z_G(z)=e^{n\xi(1)z^2/2}$. This reduces the task of approximating $Y_G(\beta)$ to approximating $P(t(\beta))$.
For a subset of hyperedges $F\subseteq E$, define $t_F\coloneqq\prod_{A\in F}t_A$, with $t_\varnothing\coloneqq1$, and
\[
 \deg_F(i)\coloneqq|\{A\in F:i\in A\}|,
 \qquad V(F)\coloneqq\{i:\deg_F(i)>0\}.
\]
We call $F$ \emph{even} if every $\deg_F(i)$ is even. The same identity used in \eqref{eq:ov-graph-expansion} now gives
\begin{equation}\label{eq:mixed-alr}
 P(t) = \EE_\sigma\prod_{A\in E}(1+t_{A}\sigma_A) = \sum_{\substack{F\subseteq E\\F\text{ even}}}t_F\,.
\end{equation}
Fix constants $\lambda, R_0$, such that
\[
 \beta_*<R_0<\sqrt\lambda<\beta_{\mathrm{2nd}}.
\]

All constants in this appendix may depend on the mixture $\xi$ and
these fixed parameters. We use the coefficient norm in \Cref{def:w_norm}, with
an interaction-dependent weight in place of $w=\lambda/n$.
\begin{definition}
    Let $w_A\coloneqq\lambda v_A$ and $w_F\coloneqq\prod_{A\in F}w_A$. For a multilinear polynomial $f(t) = \sum_{F \subseteq E} f_F t_F$, define 
    \begin{equation}\label{eq:mixed-coefficient-norm}
 \|f\|_w^2\coloneqq\sum_{F\subseteq E}|f_F|^2w_F,\qquad
 \langle f,h\rangle_w\coloneqq\sum_{F\subseteq E}f_F\overline{h_F}w_F\,.
\end{equation}
\end{definition}
The orthogonality argument in \Cref{sec:analysis} applies without change: if
$F\ne F'$, an interaction in their symmetric difference contributes
an odd function of an independent centered Gaussian. Hence, for
real $|\beta|\le\beta_*$,
\begin{equation}\label{eq:mixed-real-error}
 \EE_G|f(t(\beta))|^2
 =\sum_F|f_F|^2\prod_{A\in F}\EE_G\tanh^2(\beta J_A)
 \le\sum_F|f_F|^2\prod_{A\in F}\beta^2v_A
 \le\|f\|_w^2.
\end{equation}
In particular, for any deterministic multilinear polynomial $\widehat P$, 
$$
\EE_G |\widehat P(t(\beta)) - P(t(\beta))|^2 \leq \|\widehat P - P\|_w^2\,.
$$

\subsection{Truncation bounds}\label{app:mixed-truncation}

For a subhypergraph $F \subseteq E$, as in \Cref{sec:truncation}, write
\[
 B(F)\coloneqq\{i:\deg_F(i)\ge4\},\qquad s(F)\coloneqq|B(F)|.
\]
Additionally, we keep track of the number of interactions that are not pairs:
\[
 r(F)\coloneqq|\{A\in F:|A|\ne2\}|.
\]
Define the retained family and its polynomial by
\begin{equation}\label{eq:mixed-retained-family}
 \mathcal F_{K,b}\coloneqq
 \{F\subseteq E:F\text{ even},\ |F|\le K,\ s(F)<b,\ r(F)<b\},
 \qquad
 P_{K,b}(t)\coloneqq\sum_{F\in\mathcal F_{K,b}}t_F.
\end{equation}

\begin{lemma}\label{lem:mixed-truncation}
There are constants $\kappa>0$ and $n_0$ such that, for $n\ge n_0$
and $\log n\le L\le\kappa n$, there are integer cutoffs $K\ge0$
and $b\ge1$ with $K=O(L)$ and $b=O\left(\frac{L}{\log(n/L)}\right)$, satisfying
\[
 \|P-P_{K,b}\|_w^2\le e^{-L}\,.
\]
All constants depend only on the fixed mixture and $\lambda$.
\end{lemma}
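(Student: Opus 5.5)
The plan is to follow the proof of \Cref{lem:sk-truncation}. By a union bound,
\[
\|P-P_{K,b}\|_w^2\le\sum_{\substack{F\text{ even}\\|F|>K}}w_F+\sum_{\substack{F\text{ even}\\s(F)\ge b}}w_F+\sum_{\substack{F\text{ even}\\ |F|\le K,\ r(F)\ge b}}w_F ,
\]
and I would aim to make each of the three terms at most $e^{-L}/3$. The basic tool is the mass identity with a tilt. For weights $q_A\ge0$, let $\rho$ be uniform signs. Then
\[
\sum_{F\text{ even}}\prod_{A\in F}q_A=\EE_\rho\prod_{A\in E}(1+q_A\rho_A)\le\EE_\rho\exp\Bigl(\sum_A q_A\rho_A\Bigr).
\]
For the untilted weights $q_A=\theta w_A$, the identity $\EE_G[H_G(\sigma)H_G(\tau)]=n\xi(\langle\sigma,\tau\rangle/n)$ gives
\[
\sum_{A\in E}v_A\rho_A=n\xi(R)-v_\varnothing,\qquad R=\tfrac1n\textstyle\sum_i\rho_i .
\]
Hence the exponential moment equals, up to a constant factor, $\EE_\rho e^{\theta\lambda n\xi(R)}$. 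Choose $\theta>1$ with $\theta\lambda<\beta_{\mathrm{2nd}}^2$. Then $\sup_q[\theta\lambda\xi(q)-I(q)]$ is attained only at $q=0$ with a strictly negative quadratic, since $I(q)\sim q^2/2$ and $\theta\lambda\gamma_2^2<1/2$ follows from the definition of $\beta_{\mathrm{2nd}}$. So this moment is $O(1)$ by a Laplace/large-deviation estimate on the binomial $R$, in the same way the second-moment ratio stays bounded. This gives $\|P\|_w^2=O(1)$, and the edge tail is at most $C\theta^{-K}$, so $K=O(L)$ suffices.

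For the branching tail, fix $S\in\binom{[n]}{b}$ and tilt $q_A=w_A\theta^{|A\cap S|}$. Since every $i\in S$ has degree at least $4$, we get $M_S\theta^{4b}\le\EE_\rho\exp(\sum_A q_A\rho_A)$. Rewriting the exponent through the tensor structure of \eqref{eq:mixed-couplings}, it becomes $\lambda n\,\xi$ evaluated at a reweighted overlap $R_\theta=\frac1n(\sum_{i\notin S}\rho_i+\theta\sum_{i\in S}\rho_i)$, up to diagonal corrections of lower order. With $\theta^2=1+c'n/b$, the $p=2$ part is handled exactly as in the SK computation via Hubbard--Stratonovich, giving the factor $(1-\lambda\gamma_2^2(1+c'))^{-1/2}$. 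For the $p\ge3$ parts, the deterministic bound $|\theta\sum_{i\in S}\rho_i|/n\le\theta b/n=O(\sqrt{b/n})$ shows the $S$-contribution shifts $R$ by $O(\sqrt{b/n})$. This should perturb the large-deviation exponent only by $O(1)$ in the regime $b\le\kappa n$. We would then conclude, as before, that the tail is at most $\binom nb\theta^{-4b}O(1)\le C(Cb/n)^b$, and the same choice of $b$ as in \Cref{lem:sk-truncation} gives $e^{-L}/3$.

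For the nonpair tail I would tilt only the nonpair interactions, $q_A=w_A\theta^{\ind\{|A|\ne2\}}$. Because we already restrict to $|F|\le K$, it suffices to control the degree-$\le K$ part of the generating function. This is at most $\theta^{-b}\sum_{k\le K}\frac1{k!}\EE_\rho\bigl(\lambda n\xi_\theta(R)\bigr)_+^k$ after symmetrization, where $\xi_\theta$ multiplies the $p\ge3$ coefficients by $\theta$; the odd-size sets coming from repeated indices have variance $O(n^{-1})$ and would be absorbed similarly. Using the binomial moment bound $\EE R^{2m}\le(Cm/n)^m$, the $p\ge3$ terms contribute like $\theta\,n(k/n)^{3/2}$. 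So with $\theta=c\sqrt{n/K}$ the moments stay comparable to the untilted ones, and $\|P\|_w^2=O(1)$ closes the bound. This gives the tail $O((CK/n)^{b/2})$, which is again $\le e^{-L}/3$ with $b=O(L/\log(n/L))$.

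I expect the main obstacle to be the two tilted large-deviation estimates for $p\ge3$. There the tilt can be polynomially large, so the naive bound $e^{\theta n\xi(R)}$ is useless for $|R|$ of order one. This is why I would keep the restriction to degree $\le K$ (respectively the deterministic $O(b)$ shift) and use only moments of $R$ up to order $O(K)$, rather than the full exponential moment. If the direct moment comparison fails, I would instead first sum over the $\ge b$ nonpair hyperedges explicitly, each carrying weight $O(n^{-(|A|-1)})$, and bound the completions to an even graph through the branching-style tilt on their vertex set.
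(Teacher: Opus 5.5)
\textbf{There is a gap in the branching tail.} This is the step where the mixed model differs from SK. After the tilt you must bound a single exponential moment $\EE_\rho e^{\lambda n\xi(R_\theta)}$, with $R_\theta=(U+\theta V)/n$, $U=\sum_{i\notin S}\rho_i$ and $V=\sum_{i\in S}\rho_i$. You split the exponent by interaction order (Hubbard--Stratonovich for $p=2$, a deterministic shift for $p\ge3$) but never say how the two estimates are recombined.

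\begin{itemize}
\item After Hubbard--Stratonovich on $(U+\theta V)^2$, the $\rho$-average no longer factorizes, because of the $p\ge3$ terms.
\item A generic split such as H\"older inflates the quadratic and higher-order parts of the bulk exponent by conjugate factors. That generally needs more than $\lambda<\beta_{\mathrm{2nd}}^2$: near the threshold, $\EE e^{\lambda n\xi(U/n)}$ is bounded only through the joint balance of all orders of $\xi$ against the entropy.
\item Your $O(1)$ perturbation claim is also false in the stated range. The deterministic bound $\theta|V|/n\le\theta b/n\asymp\sqrt{b/n}$ only controls the pure-$S$ cubic term $\lambda\gamma_3^2 n(\theta V/n)^3$ by $\asymp b^{3/2}n^{-1/2}$. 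This is unbounded once $b\gg n^{1/3}$, which the lemma allows since $L$ may be as large as $\kappa n$. At order-one bulk overlaps, the $U$--$V$ cross terms are of order $\sqrt{nb}$.
\end{itemize}
An $e^{O(b)}$ loss would be affordable in $C(Cb/n)^b$, but your sketch does not establish even that.

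\textbf{The missing idea is to decouple by vertex set rather than by interaction order, as the paper does.} Convexity of $\xi$ on $[0,\infty)$ gives $\xi(|U+\theta V|/n)\le(1+\eta)\,\xi(|U|/n)+c_\eta\,\xi(\theta|V|/n)$, and independence of $U$ and $V$ factorizes the moment.
\begin{itemize}
\item The delicate bulk factor is inflated only by $1+\eta$; choose $(1+\eta)\lambda<\beta_{\mathrm{2nd}}^2$. It is then bounded by the untilted second-moment estimate with $m=n-b$.
\item The $b$ tilted coordinates absorb the large constant $c_\eta$. Since $\theta|V|/n\le1$, you may use $\xi(q)\le\xi(1)q^2$ there. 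So all orders of the $S$-contribution are handled by one Gaussian integration, giving $(1-2\lambda c_\eta\xi(1)\theta^2 b/n)^{-1/2}=O(1)$ for $\theta^2=1+c_0n/b$ with $c_0$ small.
\end{itemize}
Also, your SK-type factor should read $(1-2\lambda\gamma_2^2(1+c'))^{-1/2}$.

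The rest of your plan is sound. The mass bound and edge tail match the paper. Your nonpair tail, which tilts non-pair interactions by $\theta=c\sqrt{n/K}$ and keeps moments of order $k\le K$, is a valid reorganization of the paper's fixed-$r(F)=h$ computation. It closes via $(X+\theta Y)^k\le(1+\eta)^kX^k+(1+\eta^{-1})^k\theta^kY^k$ together with Gaussian-type moments of $S_n$ and $2\lambda\gamma_2^2<1$, not via $\|P\|_w^2=O(1)$.
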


\begin{proof}
We follow the proof of \Cref{lem:sk-truncation}, with one additional
tail for $r(F)$. As in \eqref{eq:truncation_ub_two_terms},
\begin{equation}\label{eq:mixed-three-tails}
 \|P-P_{K,b}\|_w^2
 \le\sum_{\substack{F\text{ even}\\|F|>K}}w_F
   +\sum_{\substack{F\text{ even}\\s(F)\ge b}}w_F
   +\sum_{\substack{F\text{ even}\\|F|\le K,\ r(F)\ge b}}w_F.
\end{equation}

First, we bound the total weighted mass, $\|P\|_w^2$, replacing the SK overlap
bound in \eqref{eq:graph-mass} by its mixed-model analog.
For independent uniform signs $\rho_1,\ldots,\rho_n$, let
$S_n\coloneqq\sum_{i=1}^n\rho_i$. \begin{equation}\label{eq:mixed-mass}
 \begin{aligned}
 \|P\|_w^2
 =\EE_\rho\prod_{A\in E}(1+w_A\rho_A)\le\EE_\rho\exp\left\{\sum_{A\in E}w_A\rho_A\right\}
 =e^{-\lambda v_\varnothing}
       \EE_\rho\exp\{n\lambda\xi(S_n/n)\}\,,
 \end{aligned}
\end{equation}
where we denote $\rho_A\coloneqq\prod_{i\in A}\rho_i$ and use $\sum_{A\in E}v_A\rho_A=n\xi(S_n/n)-v_\varnothing$. The second-moment estimate of
\cite[Proposition~3.4]{BHL25} gives, for every fixed
$0\le u<\beta_{\mathrm{2nd}}^2$,
\begin{equation}\label{eq:mixed-overlap-bound}
 \sup_{m\ge1}\EE_\rho\exp\{mu\xi(S_m/m)\}\le C_u.
\end{equation}
The proposition gives the bound for all sufficiently large $m$;
the remaining finitely many values can be included by increasing
$C_u$. By symmetry, the same bound, with an additional factor of
two, holds with $|S_m|$ in place of $S_m$, which gives
\begin{equation}\label{eq:mixed-mass-final}
\|P\|_w^2 \leq C
\end{equation}
for some constant $C$ depending only on $\xi$ and $\lambda$.

Choose $\theta>1$ such that $\theta\lambda<\beta_{\mathrm{2nd}}^2$.
Applying \eqref{eq:mixed-mass-final} with weights $\theta w_A$, exactly as
in \eqref{eq:graph-edge-tail}, yields
\begin{equation}\label{eq:mixed-edge-tail}
 \sum_{\substack{F\text{ even}\\|F|>K}}w_F
 \le\theta^{-K}\sum_{F\text{ even}}\theta^{|F|}w_F
 \le Ce^{-cK}\,.
\end{equation}

To bound the second term, fix a particular set
$S\subseteq[n]$ of $b$ vertices, as in the proof of
\eqref{eq:graph-branching_tail}. Set
\[
 M_S\coloneqq\sum_{\substack{F\text{ even}\\
                       \deg_F(i)\ge4\ \forall i\in S}}w_F\,,
\]
so that $\sum_{\substack{F\text{ even}\\s(F)\ge b}}w_F \leq \sum_{S \in \binom{[n]}{b}} M_S$. Let $\theta > 1$ be chosen below and let $a_i\coloneqq\theta^{\ind\{i\in S\}}$. Marking each incidence at a
vertex of $S$ gives
\begin{equation}\label{eq:mixed-marked-mass}
 \begin{aligned}
 \theta^{4b}M_S \le\sum_{F\text{ even}}w_F
             \theta^{\sum_{i\in S}\deg_F(i)} \le\EE_\rho\exp\left\{\sum_{A\in E}
                     w_A\rho_A\prod_{i\in A}a_i\right\}
 \le\EE_\rho\exp\left\{n\lambda\xi\left(
                         \frac1n\sum_i a_i\rho_i\right)\right\}\,,
 \end{aligned}
\end{equation}
where the last inequality follows from $a_i \ge 1$.
Let $U\coloneqq\sum_{i\notin S}\rho_i$ and
$V\coloneqq\sum_{i\in S}\rho_i$, so that 
$\sum_i a_i \rho_i = U + \theta V$. Choose $\eta>0$ with
$(1+\eta)\lambda<\beta_{\mathrm{2nd}}^2$. Independence and convexity give
\begin{equation}\label{eq:mixed-marked-split}
 \EE_\rho e^{n \lambda \xi((U+\theta V)/n)}
 \le \EE_\rho e^{n \lambda \xi(|U+\theta V|/n)} \le \EE_\rho e^{n\lambda(1+\eta)\xi(|U|/n)}
       \EE_\rho e^{n\lambda c_\eta \xi(\theta|V|/n)}\,,
\end{equation}
where $c_\eta > 0$ is some constant. \eqref{eq:mixed-overlap-bound} upper bounds the first factor by taking $m=n-b$.
For the second factor, note that $\xi(q) \leq \xi(1) q^2$ for $q\in [0,1]$, so the Gaussian
integration argument in the proof of \Cref{fact:overlap-moment}
yields
\[
 \EE_\rho e^{n\lambda c_{\eta}\xi(\theta|V|/n)}
 \le\EE_\rho\exp\{\lambda c_{\eta}\xi(1)\theta^2 V^2/n\}
 \le\left(1-2\lambda c_\eta \xi(1)\theta^2b/n\right)^{-1/2}\,,
\]
provided $\theta^2 < \frac n {2 b \lambda c_\eta \xi(1)}$ and $\theta |V| / n \leq 1$. Since $b \leq \kappa_0 n$, for sufficiently small constants $\kappa_0$ and $c_0$, we can set
\[
 \theta^2\coloneqq1+c_0n/b\,,
\]
satisfying $2\lambda c_\eta \xi(1)\theta^2b/n \leq 2\lambda c_\eta \xi(1)(\kappa_0 + c_0) < 1$ and $\theta |V| / n \leq \kappa_0 + c_0 \leq 1$. This upper bounds the second factor by a constant and gives 
$
M_S \leq C \theta^{-4b}\,.
$
Summing over the choices of $S$, proves, similarly to \Cref{lem:sk-truncation},
\begin{equation}\label{eq:mixed-branching-tail}
 \sum_{\substack{F\text{ even}\\s(F)\ge b}}w_F
 \le C\binom nb\theta^{-4b}
 \le C(Cb/n)^b\,.
\end{equation}

It remains to bound the third term. Fix an integer $h\ge1$. Again letting $\rho_1, \dots, \rho_n$ to be independent signs and $S_n = \sum_i \rho_i$, we have 
\begin{equation}\label{eq:term3_exp}
\sum_{\substack{F\text{ even}\\r(F)=h}}w_F \le \frac 1 {h!} \EE_\rho \left[ \exp\left( \sum_{|A| = 2} w_A \rho_A \right)\left( \sum_{|A| \neq 2} w_A \rho_A \right)^h\right] 
\end{equation}
We now upper bound the weights $w_A$. For a fixed $A \subseteq [n]$ and $p \geq |A|$, there are $O(n^{(p-|A|)/2})$ $p$-tuples $\boldsymbol i$ for which $A(\boldsymbol i) = A$. Then, the variances $v_A$ depend only on $|A|$ and satisfy
\begin{equation}\label{eq:mixed-variance-bounds}
 v_\varnothing=O(1),\qquad
 v_{\{i\}}=O(n^{-1}),\qquad
 v_{\{i,j\}}=\frac{2\gamma_2^2}{n}+O(n^{-2}),\qquad
 v_A=O(n^{1-|A|})\quad(|A|\ge3).
\end{equation}
Since $I(q)=q^2/2+O(q^4)$, \eqref{eq:second-moment-threshold}
and $\lambda<\beta_{\mathrm{2nd}}^2$ imply
$2\lambda\gamma_2^2<1$. Consequently, there is a fixed
$0<a<1/2$ such that, for all sufficiently large $n$,
\[
 w_{\{i,j\}}\le\frac{2a}{n},\qquad
 w_{\{i\}}\le\frac Cn,\qquad
 w_A\le Cn^{1-|A|}\quad(|A|\ge3).
\]
With these bounds we further upper bound \eqref{eq:term3_exp} as\footnote{After expanding products and exponentials, every resulting spin monomial has
expectation zero or one, so increasing nonnegative coefficients
can only increase the expectation.}
\[
 \begin{aligned}
 \sum_{\substack{F\text{ even}\\r(F)=h}}w_F
 &\le\frac{C^h}{h!}\EE_\rho\left[
 e^{aS_n^2/n}
 \left(\frac{S_n}{n}
       +\sum_{j=3}^{p_{\max}}\frac{S_n^j}{n^{j-1}}\right)^h\right]\\
 &\le\frac{C^h}{h!}\EE_\rho\left[
 e^{aS_n^2/n}
 \left(\frac{|S_n|}{n}+\frac{|S_n|^3}{n^2}\right)^h\right].
 \end{aligned}
\]
For a fixed $a<a'<1/2$, the Gaussian integration bound from the
proof of \Cref{fact:overlap-moment} gives
$\EE_\rho e^{a'S_n^2/n}\le(1-2a')^{-1/2}$. Hence, for every
integer $m\ge1$,
\[
 \EE_\rho\bigl[|S_n|^m e^{aS_n^2/n}\bigr]
 \le (1-2a')^{-1/2}
       \sup_{x\ge0}x^m e^{-(a'-a)x^2/n}
 \le C(Cnm)^{m/2}.
\]
Apply this with $m=h,3h$ and use
$(x+y)^h\le2^{h-1}(x^h+y^h)$ for $x,y\ge0$ and
$h!\ge(h/e)^h$ to obtain
\[
 \sum_{\substack{F\text{ even}\\r(F)=h}}w_F
 \le C\left[
       \left(\frac{C}{nh}\right)^{h/2}
       +\left(\frac{Ch}{n}\right)^{h/2}\right]
 \le C(Ch/n)^{h/2}.
\]
Decreasing $\kappa_0$ so that $CK/n\le1/4$ whenever
$K\le\kappa_0n$, we can sum a geometric series to get
\begin{equation}\label{eq:mixed-nonpair-tail}
 \sum_{\substack{F\text{ even}\\|F|\le K,\ r(F)\ge b}}w_F
 \le C\sum_{h=b}^K(CK/n)^{h/2}
 \le C(CK/n)^{b/2}.
\end{equation}

Plugging the term-wise bounds
\eqref{eq:mixed-edge-tail}, \eqref{eq:mixed-branching-tail}, and
\eqref{eq:mixed-nonpair-tail} into  \eqref{eq:mixed-three-tails}, we obtain
\[
 \|P-P_{K,b}\|_w^2
 \le Ce^{-cK}+C(Cb/n)^b+C(CK/n)^{b/2}.
\]
Choose sufficiently large constants $A,A'$ and set
\[
 K\coloneqq\lceil AL\rceil,\qquad
 b\coloneqq\left\lceil\frac{A'L}{\log(n/L)}\right\rceil.
\]
For sufficiently small $\kappa>0$, the assumption
$\log n\le L\le\kappa n$ ensures $K,b\le\kappa_0n$ and
\[
 \log\frac{n}{CK}\ge\frac12\log\frac nL,\qquad
 \log\frac{n}{Cb}\ge\frac12\log\frac nL.
\]
Thus the three terms are at most $Ce^{-cAL}$,
$Ce^{-A'L/2}$, and $Ce^{-A'L/4}$, respectively. Taking $A,A'$
sufficiently large makes each at most $e^{-L}/3$.
Since $L/\log(n/L)\ge1$, the cutoffs have the asserted orders.
This proves the lemma.
\end{proof}

\subsection{Deterministic estimator via color coding}
\label{app:mixed-color-coding}

In this section, we extend the
color-coding construction of \Cref{sec:estimator} to a more general setting. Write $\mathcal F\coloneqq\mathcal F_{K,b}$ and
$P_{\mathcal F}\coloneqq P_{K,b}$ for the retained family and its
polynomial from \eqref{eq:mixed-retained-family}. 
The key observation is that every non-pair interaction lies in the
small set
\[
 B_*(F): =B(F)\cup\bigcup_{{A\in F, \; |A|\ne2}}A.
\]
Outside this set, every supported vertex has degree two and is involved only in pair interactions. Thus, after fixing $B_*(F)$, we can reuse the dynamic program developed in \Cref{subsec:color-dp} and incorporate the non-pair interactions within the terminal set. Our ultimate goal is to prove the following extension of \Cref{prop:sk-approximation}.

\begin{proposition}\label{prop:mixed-approximation}
Let $n\ge n_0$ and $\log n\le L\le\kappa n$, and choose $K,b$ as
in \Cref{lem:mixed-truncation}. There is a deterministic multilinear
polynomial $\widehat P_{\mathcal F}$, supported on $\mathcal F$,
such that
\[
 \|\widehat P_{\mathcal F}-P_{\mathcal F}\|_w^2\le e^{-L}.
\]
Its arithmetic circuit can be constructed and evaluated in
$\poly(n)e^{O(L)}$ operations and is independent of the disorder
and temperature.
\end{proposition}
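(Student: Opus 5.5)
We follow the SK argument of \Cref{sec:estimator} verbatim at the level of the reduction, and change only the dynamic program. Let $h:[n]\to[K']$ be a balanced coloring, where $K'\coloneqq p_{\max}K$ bounds $|V(F)|$ for $F\in\mathcal F$. Define
\[
V_h\coloneqq\sum_{F\in\mathcal F}\frac{\ind\{F\text{ colorful under }h\}}{p_{|V(F)|}}t_F .
\]
As before, $\langle V_h,P_{\mathcal F}\rangle_w=\|P_{\mathcal F}\|_w^2$ for every balanced $h$. Since $p_v\ge(2e)^{-K'}$, we also get
\[
\|V_h\|_w^2\le e^{O(K)}\|P_{\mathcal F}\|_w^2\le e^{O(K)},
\]
where the last step uses \eqref{eq:mixed-mass-final}. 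The greedy argument of \Cref{lem:color-overlap} (method of conditional expectations) then produces deterministic colorings $h_1,\dots,h_T$. With $\widehat P_{\mathcal F}\coloneqq T^{-1}\sum_t V_{h_t}$, the same expansion gives
\[
\|\widehat P_{\mathcal F}-P_{\mathcal F}\|_w^2\le e^{O(K)}/T .
\]
Taking $T=e^{O(L)}$ makes this at most $e^{-L}$. The polynomial is supported on $\mathcal F$, and its coefficients depend only on the colorings, so it is independent of $G$ and $\beta$. Everything therefore reduces to a mixed analog of \Cref{prop:colorful-coefficients}. That is, for a coloring $h$, partial coloring $\chi$ on $U$, and coefficient table $\alpha$, we must evaluate
\[
\sum_{F\in\mathcal F,\ F\text{ colorful}}\alpha\bigl(|V(F)|,|V(F)\cap U|,\chi(V(F)\cap U)\bigr)\,t_F
\]
in $\poly(n)e^{O(K)}\sum_{j<b'}\binom nj$ operations, where $b'=O(p_{\max}b)$. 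With the edge inputs set to $t_A=w_A$, the same statement computes the conditional overlaps needed for derandomization.

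For the dynamic program, we enumerate the terminal set $B_*=B_*(F)$. This set has size less than $b+p_{\max}b=O(b)$, so there are $\sum_{j<O(b)}\binom nj=e^{O(L)}$ choices. With $B_*$ fixed, we also enumerate the non-pair interactions $N\subseteq\{A\in E:A\subseteq B_*,|A|\ne2\}$ with $|N|<b$ and $\bigcup N\subseteq B_*$. There are at most $\binom{|B_*|^{p_{\max}}}{<b}=e^{O(b\log b)}=e^{O(L)}$ such sets. For each $N$, we record its weight $t_N$ and its degree vector on $B_*$. Consistency of $B_*$ is checked at the end using the final degree vector $\boldsymbol d$. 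Each $a\in B_*$ must either have $d_a\ge4$, or lie in some $A\in N$. It must also satisfy $d_a\ge1$, and every vertex of degree at least $4$ must lie in $B_*$. The remaining degree-2 vertices outside $B_*$ form paths and cycles made of pair interactions. We handle them exactly as in \eqref{eq:piece-polynomial} and \eqref{eq:piece-assembly}, with terminals $B_*$, pair weights $t_{\{i,j\}}$, and direct pair edges inside $B_*$. The terms we want are the even $F$: every terminal has even total degree (pieces plus $N$), and interior vertices have degree $2$ automatically. We then read off the coefficients $[x^e y_A v^{\boldsymbol d}\tau^j z_C]$ of the product of the formal polynomial with the $N$-monomial, keep the even and consistent $\boldsymbol d$ with $e\le K$, and apply $\alpha$.

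The main obstacle is bookkeeping uniqueness. Each $F$ must be counted exactly once, i.e.\ by its true $B_*(F)$ and $N(F)$. This is ensured by the end-of-computation check above, which is determined by $\boldsymbol d$ and $N$. Since we only ever enumerate exactly the $B_*$ that equal $B(F)\cup\bigcup N$, the decomposition is unique given $(B_*,N)$, so no double counting occurs. A second, minor point concerns singleton interactions $\{i\}\in E$, which count as non-pairs and force $i\in B_*$; they are included in $N$, so no issue arises. Finally, we confirm that the operation count $e^{O(K)}\cdot e^{O(b\log n)}\cdot\poly(n,T)$ is $\poly(n)e^{O(L)}$, using $K=O(L)$ and $b\log n=O(L)$ from \Cref{lem:mixed-truncation}.
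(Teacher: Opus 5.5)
Your reduction, derandomization, and terminal-set dynamic program are essentially the paper's. The one real deviation is that you enumerate the non-pair set $N$ explicitly for each $B_*$. The paper instead multiplies the pair table $\mathcal G_B$ by $\prod_{A\subseteq B,\ |A|\ne2}\bigl(1+xu\,t_A\prod_{a\in A}v_a\eta_a\bigr)$, reading $|N|$ off the exponent of $u$ and $\bigcup N$ off $\eta_M$ (with $\eta_a^2=\eta_a$). Your deviation is where the running-time claim breaks. Suppose some $\gamma_p\ne0$ with $p\ge3$. Then a terminal set of size $\Theta(b)$ contains $\Theta(b^{p})$ candidate hyperedges of size $p$, so the number of sets $N$ with $|N|<b$ is $e^{\Theta(b\log b)}$.

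\Cref{lem:mixed-truncation} only gives $b=O(L/\log(n/L))$. Hence $b\log b=O(L)$ holds only when $\log b=O(\log(n/L))$, roughly when $L\le n^{1-c}$. Take $L=n/\log n$, which $L\le\kappa n$ allows. Then $b\log b\approx L\log n/\log\log n$, so your enumeration costs $e^{\omega(L)}$. Your closing justification ``$b\log n=O(L)$'' fails for the same reason.

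The paper's phrase $n^{q_*}=e^{O(L)}$ is equally loose. There, however, the only enumerated object is the vertex set $B$, and the correct count $\sum_{j\le q_*}\binom nj\le e^{O(b\log(en/b))}=e^{O(L)}$ rescues it; you use this same count for $B_*$. No such saving exists for $N$, which ranges over hypergraphs on $O(b)$ labelled vertices rather than over vertex subsets of $[n]$. In the main regime $L=O(\log n)$ one has $b=O(1)$ and nothing goes wrong. The proposition, though, is asserted for all $\log n\le L\le\kappa n$.

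The fix is to replace the enumeration of $N$ by the paper's formal product. It has only $O(b^{p_{\max}})=\poly(n)$ factors, and each multiplication costs $e^{O(K)}$ on the table. Your consistency test then becomes a test on the recorded data $(\boldsymbol d,M,r)$:
\[
B_*=\{a:d_a\ge4\}\cup M,\qquad r<b,\qquad \text{all } d_a \text{ even}.
\]
Two smaller omissions also need repair. First, you must discard data with $|\{a\in B_*:d_a\ge4\}|\ge b$. Your terminal set can have up to $(1+p_{\max})(b-1)$ vertices, so $s(F)<b$ does not follow from your other checks. Second, the degree truncation $|\boldsymbol d|\le 2K$ inherited from the SK table must be relaxed to $|\boldsymbol d|\le p_{\max}K$, since terminal degrees now sum to at most $\sum_{A\in F}|A|$.
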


We start by setting up the color coding. Since every supported vertex of an even hypergraph has degree at
least two, we have
\[
 2|V(F)|\le\sum_i\deg_F(i)
 =\sum_{A\in F}|A|\le p_{\max}K.
\]
We therefore use $Q\coloneqq\lfloor p_{\max}K/2\rfloor$ colors. Fix color-class sizes
$n_1,\ldots,n_Q$ differing by at most one and summing to $n$.
All balanced colorings below have these prescribed sizes.

Let $p_v$ be the probability that a fixed set of $v$ vertices is
colorful under a uniformly random balanced coloring, with $p_0=1$.
For $h:[n]\to[Q]$, define
\begin{equation}\label{eq:mixed-color-estimator}
 V_h(t)\coloneqq\sum_{F\in\mathcal F}
 \frac{\ind\{h\text{ is injective on }V(F)\}}{p_{|V(F)|}}t_F.
\end{equation}
Then $\EE_hV_h=P_{\mathcal F}$.

As in the pure SK case, we need an evaluator that also allows
coefficients depending on a partial coloring. This will let us
both evaluate $V_h$ and compute the conditional expectations used
to choose the colorings deterministically. 

\begin{proposition}\label{prop:mixed-colorful-coefficients}
Fix integers $K,b\ge1$ and $n\ge Q$, a coloring $h:[n]\to[Q]$,
a set $U\subseteq[n]$, a partial coloring $\chi:U\to[Q]$, and
coefficients $\alpha:[n]^2\times2^{[Q]}\to\mathbb R$. Then the polynomial
\begin{equation}\label{eq:mixed-colorful-alpha}
 \sum_{\substack{F\in\mathcal F\\F\text{ colorful under }h}}
 \alpha\bigl(|V(F)|,|V(F)\cap U|,\chi(V(F)\cap U)\bigr)t_F
\end{equation}
can be evaluated exactly for arbitrary interaction inputs $t=(t_A)_{A\in E}$ using $\poly(n)e^{O(K)}n^{q_*}$ arithmetic operations.
\end{proposition}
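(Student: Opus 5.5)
We will follow the proof of \Cref{prop:colorful-coefficients}. The only new feature is the enlarged terminal set $B_*(F)$. Here $q_*$ is an upper bound on $|B_*(F)|$ over $F\in\mathcal F$; since $s(F)<b$ and $r(F)<b$, we may take $q_*<b+p_{\max}b$. First we enumerate every candidate terminal set $B_*\subseteq[n]$ with $|B_*|\le q_*$, which costs at most $\sum_{j\le q_*}\binom nj\le n^{q_*}$ choices. For each $B_*$ we must sum over $F\in\mathcal F$ that are colorful under $h$ and satisfy $B_*(F)=B_*$. We may assume $h$ is injective on $B_*$, since otherwise the sum vanishes.

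Fix $B_*$. We split $F$ into two parts. The first part $F_{\mathrm{in}}$ consists of the interactions $A\in F$ with $A\subseteq B_*$. It contains all non-pair interactions and all direct pair edges inside $B_*$. The second part consists of pair edges with at least one endpoint outside $B_*$. Every vertex outside $B_*$ has degree exactly two in $F$ and lies only in pair edges. As in the SK case, these edges therefore decompose uniquely into colorful paths between terminals, with at least one interior vertex, and cycles disjoint from $B_*$. Interior vertices are drawn from $I=\{i\notin B_*: h(i)\notin h(B_*)\}$.

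We reuse $\mathcal H_B$ from \eqref{eq:piece-polynomial} with $B=B_*$ and the same formal variables $x,y,v,\tau,z$. We then replace the factor $\prod_{a<a'}(1+xt_{aa'}v_av_{a'})$ in \eqref{eq:piece-assembly} by
\[
 \prod_{\substack{A\in E\\ A\subseteq B_*}}\Bigl(1+x\,t_A\prod_{a\in A}v_a\Bigr).
\]
We also adjoin one more formal variable $u$, marking non-pair interactions, with $u^b=0$; this enforces $r(F)<b$. The product has at most $2^{q_*}\cdot$(a polynomial in $q_*$) factors, and each multiplication acts on an $e^{O(K)}$-size table. We must also check that the vertex-degree exponent $|\boldsymbol d|$ stays bounded. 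The degree sum is now at most $p_{\max}K$, so the monomial count is still $e^{O(K)}$.

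After building $\mathcal G_{B_*}$, we keep only coefficients whose degree vector $\boldsymbol d$ is even at every terminal. We also require that the set of terminals with $d_a\ge4$, together with the vertices of the non-pair interactions, is exactly $B_*$. This condition is the one step that is not purely mechanical. The multiplier $\prod v_a$ records degrees, but not which terminals lie in non-pair sets. We handle this with a second family of idempotent markers $w_a$, with $w_a^2=w_a$, attached to each vertex of each non-pair $A$. A surviving monomial then determines $B(F)\cup\bigcup_{|A|\ne2}A$, and we keep only those monomials for which this set equals $B_*$. This guarantees that each $F$ is counted exactly once across the enumeration. These markers add a factor of $2^{q_*}$ to the table size, which is absorbed into $n^{q_*}$.

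Finally, we multiply each retained coefficient by $\alpha\bigl(|B_*|+|A|,\,|B_*\cap U|+j,\,\chi(B_*\cap U)\cup C\bigr)$ and sum over all retained coefficients. We expect the main obstacle to be the bookkeeping above: ensuring that the decomposition is unique and that every vertex of $B_*$ has an even degree, while accounting correctly for the non-pair interactions. The cost estimate is routine. It is $n^{q_*}$ choices of $B_*$, times $\poly(n)e^{O(K)}$ for the matrix-power and trace computations, times $2^{O(q_*)}$ for the markers.
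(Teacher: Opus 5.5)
Your proposal is correct and follows the paper's proof: enumerate the enlarged terminal set $B_*(F)$, reuse the path/cycle table $\mathcal G_B$ built from \eqref{eq:piece-polynomial}, multiply by factors for the interactions inside $B_*$ together with a counter $u$ for non-pair interactions and idempotent markers (the paper's $\eta_a$), then filter and apply $\alpha$. One condition is missing from your final filter: $|\{a\in B_*: d_a\ge4\}|<b$, i.e.\ $s(F)<b$. It does not follow from $|B_*|\le q_*$ (take $r(F)=0$, $s(F)=b$), but it can be read off directly from $\boldsymbol d$.
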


\begin{proof}
Fix $h, U, \chi$ and the coefficients $\alpha$. For every $F \in \mathcal{F}$, we have $|B_*(F)| \le b + b p_{\max} =: q_*$. Thus, we can enumerate all possible sets $B = B_*(F)$, and it suffices to show that, for each fixed $B$, we can evaluate
\begin{equation*}
    \sum_{\substack{F \in \mathcal F \\ F \text{ colorful under } h \\ B_*(F) = B }}  \alpha\bigl(|V(F)|,|V(F)\cap U|,\chi(V(F)\cap U)\bigr)t_F
\end{equation*}
using $\poly(n) e^{O(K)}$ operations. We now fix $B$ and write $q = |B|$. We assume that $h$ is injective on $B$; otherwise, the sum is empty. The key observation is that all non-pair interactions are contained in $B$, while every supported vertex outside $B$ has
degree two and is incident only to pairs. For this reason, most of the work has already been done in the proof of \Cref{prop:colorful-coefficients}, and the only remaining step is to deal with non-uniform pair interactions within $B$.

We now adopt the notation used in the proof of \Cref{prop:colorful-coefficients}. First, construct the table $\mathcal G_B$ from \eqref{eq:piece-assembly}, using $Q$ colors and inputs
$t_{ij}=t_{\{i,j\}}$ (with $t_{ij}=0$ for absent pairs). To keep track of the non-pair interactions, introduce a formal variable
$u$ to record the number of non-pair interactions within $B$, and formal variables $\eta_a$ for $a \in B$, with
$\eta_a^2=\eta_a$, to record the terminal vertices involved in non-pair interactions. Set
\begin{equation}\label{eq:mixed-piece-assembly}
 \mathcal G_B^{\mathrm{mix}}\coloneqq
 \mathcal G_B
 \prod_{\substack{A\in E\\A\subseteq B,\ |A|\ne2}}
 \left(1+xu\,t_A\prod_{a\in A}v_a\eta_a\right).
\end{equation}
The table given by $\mathcal{G}_B^{\mathrm{mix}}$ is indexed by monomials
\[
 x^e u^r y_Sv^{\boldsymbol d}\tau^jz_C\eta_M.
\]
Here $e$ is the total number of interactions, $r$ is the number
of non-pair interactions, $S \subset [Q] \setminus h(B)$ is the set of $h$-colors of the non-terminal vertices, and $\boldsymbol d$ is the vector of total terminal degrees. The indices $j$ and $C$ record, respectively, the number of non-terminal vertices in $U$ and their set of $\chi$-colors; $M\subseteq B$ records the terminals met by
non-pair interactions. As before, the relations $y_c^2=0$ enforce
colorfulness, while the relations $z_c^2=z_c$ allow us to record sets of partial colors.

Finally, these data fully determine the value of $\alpha$ and whether the configuration is valid. For each fixed $B$, this augmented table has at most
\[
 \poly(K)\,2^{2Q+q}\binom{p_{\max}K+q}{q}=e^{O(K)}
\]
entries, and its construction still takes $\poly(n)e^{O(K)}$ operations. Summing over the possible sets $B$ proves the stated bound.
\end{proof}

We now use this evaluator to extend the derandomization procedure from \Cref{subsec:color-derandomization} to the mixed model here. The following lemma gives the colorings needed to prove
\Cref{prop:mixed-approximation}.

\begin{lemma}
\label{lem:mixed-color-overlap}
Under the setup above, let $q_*\coloneqq(1+p_{\max})b$.
For every integer $T\ge1$, we can construct
balanced colorings $h_1,\ldots,h_T$ such that, for each $s \in [T]$
\begin{equation}\label{eq:mixed-greedy-overlap}
 \sum_{i<s}\langle V_{h_i},V_{h_s}\rangle_w
 \le(s-1)\|P_{\mathcal F}\|_w^2.
\end{equation}
The construction uses $\poly(n,T)e^{O(K)}n^{q_*}$ arithmetic operations.
\end{lemma}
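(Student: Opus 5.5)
We follow the proof of \Cref{lem:color-overlap} essentially verbatim, replacing the SK evaluator by \Cref{prop:mixed-colorful-coefficients}. Fix color-class sizes $n_1,\ldots,n_Q$ as above and choose the colorings sequentially. Suppose $h_1,\ldots,h_{s-1}$ have been chosen, and set $S(h)\coloneqq\sum_{i<s}\langle V_{h_i},V_h\rangle_w$. Let $H$ be a uniformly random balanced coloring with the prescribed class sizes. Since $\EE_HV_H=P_{\mathcal F}$ and the inner product is linear in its second argument,
\[
\EE_H S(H)=\sum_{i<s}\langle V_{h_i},P_{\mathcal F}\rangle_w .
\]
We then verify the identity $\langle V_h,P_{\mathcal F}\rangle_w=\|P_{\mathcal F}\|_w^2$ for every balanced $h$ in the hypergraph setting. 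It holds because $V_h$ and $P_{\mathcal F}$ are supported on $\mathcal F$ and all coefficients of $P_{\mathcal F}$ equal one. Grouping $F\in\mathcal F$ by $V(F)$, the weight $w_F$ enters the sum only through $F$ itself. Every balanced coloring with these fixed class sizes makes exactly a $p_v$ fraction of the $v$-subsets of $[n]$ colorful. However, this gives the identity only after averaging over supports with the same weight profile. Exactness for each $h$ would require the weights $\sum_{F:V(F)=X}w_F$ to be invariant under permutations of $[n]$. This holds because $v_A$ depends only on $|A|$, as recorded in \eqref{eq:mixed-variance-bounds}. Hence $\EE_HS(H)=(s-1)\|P_{\mathcal F}\|_w^2$.

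Next we run the method of conditional expectations over vertices $1,\ldots,n$. We maintain a partial coloring $\chi_k:[k]\to[Q]$ that admits a balanced completion, and at each step pick the color $c$ for vertex $k+1$ minimizing $\EE_H[S(H)\mid H|_{[k]}=\chi_k,\,H(k+1)=c]$. Only colors keeping the extension completable have positive conditional probability. The law of total expectation makes $\Phi(\chi_k)$ nonincreasing, so the final coloring satisfies \eqref{eq:mixed-greedy-overlap}. For this we must compute, for each $i<s$ and each candidate extension $\chi$ on $U=[k+1]$,
\[
\EE_H[\langle V_{h_i},V_H\rangle_w\mid H|_U=\chi]=\sum_{\substack{F\in\mathcal F\\ F\text{ colorful under }h_i}}\frac{w_F}{p_{|V(F)|}^2}\,\PP_H[H\text{ injective on }V(F)\mid H|_U=\chi].
\]
Under the conditional uniform law on completions, the conditional probability depends only on $|V(F)|$, $|V(F)\cap U|$ and the color set $\chi(V(F)\cap U)$. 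It is zero unless $\chi$ is injective on $V(F)\cap U$. Otherwise it equals a ratio of counts of completions placing the remaining $|V(F)\setminus U|$ vertices in distinct unused colors with the residual class capacities, which is an explicit product formula. Thus the sum has the form \eqref{eq:mixed-colorful-alpha}, with an explicit table $\alpha$ of size $n^2 2^Q=\poly(n)e^{O(K)}$ and interaction inputs $t_A=w_A$.

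\Cref{prop:mixed-colorful-coefficients} then evaluates each conditional overlap in $\poly(n)e^{O(K)}n^{q_*}$ operations. The total count is then $T$ colorings, times $n$ vertices, times $Q$ candidate colors, times $s-1\le T$ previous colorings, times the cost of one evaluation, which gives $\poly(n,T)e^{O(K)}n^{q_*}$ as claimed. Computing the values $p_v$ and the table $\alpha$ is polynomial.

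The main obstacle I expect is justifying exactly the per-coloring identity $\langle V_h,P_{\mathcal F}\rangle_w=\|P_{\mathcal F}\|_w^2$ for arbitrary balanced $h$, rather than only on average. Strictly speaking, the lemma only needs $\EE_HS(H)=\sum_{i<s}\langle V_{h_i},P_{\mathcal F}\rangle_w$, and this sum must be bounded by $(s-1)\|P_{\mathcal F}\|_w^2$. If weight invariance fails, for instance because $E$ excludes some sets with $v_A=0$ in a nonsymmetric way, I would try first to restate the bound as $(s-1)\|P_{\mathcal F}\|_w^2$ by relying on the permutation symmetry of the retained family $\mathcal F$. The fallback is to note that the proof of \Cref{prop:mixed-approximation} only uses $\sum_{i<s}\langle V_{h_i},P_{\mathcal F}\rangle_w$, which can equally serve as the right-hand side.
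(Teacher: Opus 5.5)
Your proposal is correct and follows the paper's proof. The common route is the sequential method of conditional expectations, with each conditional overlap written in the form \eqref{eq:mixed-colorful-alpha} with $t_A=w_A$ and evaluated by \Cref{prop:mixed-colorful-coefficients}, for a total of $nQT^2$ evaluations. As in the paper, the identity $\langle V_h,P_{\mathcal F}\rangle_w=\|P_{\mathcal F}\|_w^2$ holds for every balanced $h$ because $v_A$, and hence $E$, $\mathcal F$ and $w_F$, depends only on $|A|$, so your fallback is unnecessary.

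One small correction: with residual capacities $r_c$ and $N=n-|U|$, the conditional colorfulness probability is $\ell!\,e_\ell(\{r_c\}_{c\notin C})/(N)_\ell$ (when $|C|=|V(F)\cap U|$). This is an elementary symmetric sum rather than a product formula, but it is still computable in polynomial time for each table entry.
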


\begin{proof}
Let $H$ be a uniformly random balanced coloring with the prescribed color-class sizes $n_1,\ldots,n_Q$.  Fix a coloring $h:[n]\to[Q]$ and a partial coloring
$\chi:U\to[Q]$ admitting a balanced completion. For a support
$S\subseteq[n]$ of size at most $Q$, its conditional probability of being colorful depends only on $ v=|S|$, $m=|S\cap U|$, and $C=\chi(S\cap U).$ Define
\begin{equation}\label{eq:mixed-conditional-coefficient}
 \alpha_\chi(v,m,C)\coloneqq
 \frac{\PP[H\text{ is injective on }S\mid H|_U=\chi]}{p_v^2},
\end{equation}
This expectation is explicit, so the whole table $\alpha_\chi$ can be computed in $\poly(n)e^{O(Q)}$ operations. Now note that for the fixed coloring $h$,
\begin{equation}\label{eq:mixed-conditional-overlap}
 \EE_H\!\left[\langle V_h,V_H\rangle_w\mid H|_U=\chi\right]
 =\sum_{\substack{F\in\mathcal F\\F\text{ colorful under }h}}
 \alpha_\chi\bigl(|V(F)|,|V(F)\cap U|,\chi(V(F)\cap U)\bigr)w_F.
\end{equation}
\Cref{prop:mixed-colorful-coefficients}, with $t_A=w_A$, therefore
computes each conditional overlap exactly in $\poly(n)e^{O(K)}n^{q_*}$ time. Suppose now that $h_1,\ldots,h_{s-1}$ have already been selected.
For a partial coloring $\chi:U\to[Q]$ admitting a balanced
completion, define
\[
 \Phi_s(\chi)\coloneqq
 \EE_H\!\left[\sum_{i<s}\langle V_{h_i},V_H\rangle_w
          \mid H|_U=\chi\right].
\]
Since the weights depend only on interaction sizes,
$\langle V_h,P_{\mathcal F}\rangle_w$ is the same for every balanced
coloring $h$; averaging over $h$ shows that it equals
$\|P_{\mathcal F}\|_w^2$. Unbiasedness therefore gives
\[
 \Phi_s(\varnothing)
 =(s-1)\|P_{\mathcal F}\|_w^2.
\]
Assign color to vertices in the order $1,\ldots,n$. At each step, evaluate $\Phi_s$ for every extension by a color and choose an extension that minimizes it. By the law of total expectation, the current value is a weighted average of the values of these extensions, so the chosen
extension does not increase $\Phi_s$. The resulting coloring $h_s$
is balanced and satisfies
\[
 \sum_{i<s}\langle V_{h_i},V_{h_s}\rangle_w
 =\Phi_s(h_s)
 \le\Phi_s(\varnothing)
 =(s-1)\|P_{\mathcal F}\|_w^2,
\]
as required by \eqref{eq:mixed-greedy-overlap}.

At each of the $n$ steps, we consider at most $Q$ colors and
compute at most $s-1$ conditional overlaps per color. Constructing
all $T$ colorings therefore requires at most $nQT^2$ conditional
overlap evaluations. The cost established above gives the stated
running time.
\end{proof}

We are now ready to prove \Cref{prop:mixed-approximation}.

\begin{proof}[Proof of \Cref{prop:mixed-approximation}]
The calculation in the proof of \Cref{prop:sk-approximation},
with $Q$ in place of $K$, gives $ p_v>8^{-Q}.$ Moreover, since $w_A$ depends only on the size $|A|$, for every
balanced coloring $h$ we have
\begin{equation}\label{eq:mixed-color-mass}
 \langle V_h,P_{\mathcal F}\rangle_w=\|P_{\mathcal F}\|_w^2,
 \quad \text{ and } \quad 
 \|V_h\|_w^2\le8^Q\|P_{\mathcal F}\|_w^2.
\end{equation}
Take the deterministic colorings supplied by
\Cref{lem:mixed-color-overlap} and set
\[
 \widehat P_{\mathcal F}\coloneqq\frac1T\sum_{s=1}^T V_{h_s}.
\]
By \eqref{eq:mixed-color-mass} and
\eqref{eq:mixed-greedy-overlap}, we have that 
\[
 \begin{aligned}
 \|\widehat P_{\mathcal F}-P_{\mathcal F}\|_w^2
 &=\frac1{T^2}\sum_{s=1}^T\|V_{h_s}\|_w^2
   +\frac2{T^2}\sum_{s=1}^T\sum_{i<s}
       \langle V_{h_i},V_{h_s}\rangle_w
   -\|P_{\mathcal F}\|_w^2\\
 &\le\frac{8^Q-1}{T}\|P_{\mathcal F}\|_w^2
 \le\frac{C8^Q}{T},
 \end{aligned}
\]
where the last inequality follows from \eqref{eq:mixed-mass}.
Choosing $T=\lceil C8^Qe^L\rceil$ gives the required error bound.
To evaluate each $V_h$, apply
\Cref{prop:mixed-colorful-coefficients} with $U=\varnothing$ and
$\alpha(v,m,C)=1/p_v$.

It remains to bound the running time. Since $Q=O(K)=O(L)$, we
have $\log T=O(L)$. Also, $q_*=O(b)$, and the cutoffs in
\Cref{lem:mixed-truncation} give, after decreasing $\kappa$ if
necessary, $n^{q_*}=e^{O(L)}.$ Thus both construction and evaluation take
$\poly(n)e^{O(L)}$ operations. The colorings depend only on
$\mathcal F$ and the weights $w_A$, so the resulting circuit is
independent of $G$ and $\beta$.
\end{proof}

Combining this approximation with the truncation bound yields a
single statement valid for every accuracy parameter $L\ge\log n$.

\begin{corollary}\label{cor:mixed-coefficient-approximation}
For every $n\ge2$ and $L\ge\log n$, one can deterministically
construct a multilinear polynomial $\widehat P_L$ satisfying
\[
 \|\widehat P_L-P\|_w^2\le e^{-L}.
\]
Its arithmetic circuit is independent of the disorder and
temperature and can be constructed and evaluated in
$\poly(n)e^{O(L)}$ operations.
\end{corollary}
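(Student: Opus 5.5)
The plan is to combine \Cref{lem:mixed-truncation} and \Cref{prop:mixed-approximation} via the triangle inequality in the norm $\|\cdot\|_w$, and to handle separately the ranges of $L$ and $n$ that those results do not cover. The main case is $n\ge n_0$ and $\log n\le L\le\kappa n$. Here I apply \Cref{lem:mixed-truncation} with accuracy parameter $L+2$ to get cutoffs $K,b$ with $\|P-P_{K,b}\|_w^2\le e^{-L-2}$. I then apply \Cref{prop:mixed-approximation} with the same $K,b$ and parameter $L+2$ to get $\widehat P_{\mathcal F}$ with $\|\widehat P_{\mathcal F}-P_{K,b}\|_w^2\le e^{-L-2}$. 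Setting $\widehat P_L\coloneqq\widehat P_{\mathcal F}$ gives
\[
\|\widehat P_L-P\|_w^2\le 2\|\widehat P_L-P_{K,b}\|_w^2+2\|P_{K,b}-P\|_w^2\le 4e^{-L-2}\le e^{-L}.
\]
If $L+2>\kappa n$ at the upper end, I fall into the next case instead. Independence of the circuit from disorder and temperature, and the running time $\poly(n)e^{O(L)}$, are inherited directly from \Cref{prop:mixed-approximation}.

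In the remaining regimes, $L>\kappa n$ (or $L+2>\kappa n$) or $n<n_0$, I use exact enumeration. I take $\widehat P_L\coloneqq P_K\coloneqq\sum_{F\text{ even},\,|F|\le K}t_F$, with no branching or non-pair restriction, for some $K=O(L)$. The argument of \eqref{eq:mixed-edge-tail} bounds the error by $Ce^{-cK}$. This needs only $\|P\|_w^2\le C$ with $\theta$-inflated weights, which holds for every $n$ after enlarging constants, since \eqref{eq:mixed-overlap-bound} holds for all $m\ge1$. So choosing $K=\lceil A L\rceil$ with $A$ large makes the error at most $e^{-L}$.

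Computing $P_K$ by brute force means enumerating subsets of $E$ of size at most $K$. Since $|E|\le\sum_{p\le p_{\max}} n^p\le n^{p_{\max}+1}$, this takes at most $\sum_{k\le K}\binom{|E|}{k}\le |E|^{K}\cdot(K+1)=e^{O(K\log n)}$ operations. When $L>\kappa n$ we have $\log n\le n\le L/\kappa$, so this is $e^{O(L\log n)}$, which is not yet $e^{O(L)}$. To fix this, I instead sum over all subsets of the $2^{|E|}$ hyperedges directly, or evaluate $P$ exactly as $\EE_\sigma\prod_A(1+t_A\sigma_A)$ using $2^n\cdot|E|$ operations. Here I take $\widehat P_L=P$ itself, as a multilinear polynomial whose circuit is the explicit sum over $\sigma$ and is independent of disorder. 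The cost is $2^n n^{p_{\max}}\le e^{O(n)}\poly(n)=\poly(n)e^{O(L)}$, since $n<L/\kappa$. The same exact evaluation handles $n<n_0$, where $2^{n_0}$ is a constant.

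The main obstacle is bookkeeping rather than mathematics: the hypothesis ranges of the cited results ($n\ge n_0$ and $L\le\kappa n$) have to be patched by the exact $2^n$ evaluation without violating the $\poly(n)e^{O(L)}$ bound. This is exactly why the corollary can be stated for every $L\ge\log n$.
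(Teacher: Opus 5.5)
Your proposal is correct and matches the paper's proof. In the main range you combine \Cref{lem:mixed-truncation} and \Cref{prop:mixed-approximation} at a slightly inflated accuracy: you use $\|a+b\|_w^2\le 2\|a\|_w^2+2\|b\|_w^2$ with $L+2$, whereas the paper notes that $\widehat P_{\mathcal F}-P_{\mathcal F}$ and $P-P_{\mathcal F}$ have disjoint supports, so their squared norms simply add. Otherwise you fall back, as the paper does, on the exact spin average $2^{-n}\sum_\sigma\prod_A(1+t_A\sigma_A)$ at cost $2^n n^{p_{\max}}=\poly(n)e^{O(L)}$. The detour through $P_K$ is unnecessary, and the alternative of summing over all $2^{|E|}$ subsets would not meet the running-time bound. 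The option you actually commit to, the spin average, is the right one.
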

\begin{proof}
If $n\ge n_0$ and $L+\log2\le\kappa n$, apply
\Cref{lem:mixed-truncation,prop:mixed-approximation} with
$L' \ge  L + 1$. The polynomial
$\widehat P_{\mathcal F}-P_{\mathcal F}$ is supported on
$\mathcal F$, whereas $P-P_{\mathcal F}$ is supported outside it.
Their squared coefficient norms therefore add, giving
\[
 \|\widehat P_{\mathcal F}-P\|_w^2
 =\|\widehat P_{\mathcal F}-P_{\mathcal F}\|_w^2
   +\|P-P_{\mathcal F}\|_w^2
 \le2e^{-L'}=e^{-L}.
\]
For the remaining values of $n,L$, use the exact spin average
from \eqref{eq:mixed-alr}:
\[
 \widehat P_L(t)\coloneqq P(t)
 =2^{-n}\sum_{\sigma\in\{\pm1\}^n}
       \prod_{A\in E}(1+t_A\sigma_A).
\]
This takes $O(2^n n^{p_{\max}})$ operations. Since either
$n<n_0$ or $\kappa n<L+1$, this cost is also
$\poly(n)e^{O(L)}$.
\end{proof}

\subsection{Lower bound for the normalized partition function}
\label{app:mixed-lower-bound}
In this section, we will prove analogue of \Cref{lem:lower-bound} in general setting. 

\begin{lemma}\label{lem:mixed-lower-bound}
There are constants $c>0$, $C\ge1$, and a sequence
$\zeta_n=o(1)$ such that, for every $0<\alpha<1/2$,
\begin{equation}\label{eq:mixed-lower-bound}
 \PP_G\!\left[
   \inf_{|z|\le\beta_*}|Y_G(z)|\ge C^{-1}\alpha^c
 \right]\ge1-\zeta_n-\alpha.
\end{equation}
The constants and the sequence depend only on the mixture and
the fixed radii.
\end{lemma}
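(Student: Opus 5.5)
We follow the proof of \Cref{lem:lower-bound} step by step. Fix a radius $R_1$ with $\beta_*<R_1<\beta_{\mathrm{2nd}}$. The argument has three ingredients: zero-freeness on $|z|<R_1$, a uniform upper bound on $|Y_G|$ on the closed disk $|z|\le R_1$ with probability $1-\alpha$, and Harnack's inequality together with $Y_G(0)=1$.

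\textbf{Zero-freeness.} Theorem~1.2 of \cite{BHL25} says that $Z_G$ has no zeros in $|z|<R_1$ with probability at least $1-\zeta_n$. Here $Y_G(z)=e^{-n\xi(1)z^2/2}Z_G(z)$. The Hamiltonian \eqref{eq:model} is exactly the ordered-tuple form used there, so no renormalization is needed beyond this nonvanishing Gaussian factor. If their normalization differs by a constant multiple of $\gamma_p$, or by a diagonal term, the discrepancy is again a nonvanishing exponential factor, as in \eqref{eq:bhl-partition-normalization}.

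\textbf{Complex second moment.} Next we bound $\EE_G|Y_G(z)|^2$ uniformly for $|z|\le T$, where $T=(R_1+\beta_{\mathrm{2nd}})/2$. Since $H_G$ is a centered Gaussian field with covariance $n\xi(\langle\sigma,\tau\rangle/n)$, we have $\EE_G e^{zH(\sigma)+\bar zH(\tau)}=\exp\{\frac{z^2+\bar z^2}{2}n\xi(1)+|z|^2n\xi(R)\}$, where $R=\langle\sigma,\tau\rangle/n$. Dividing by $|\EE Z|^2=|e^{n\xi(1)z^2/2}|^2=e^{n\xi(1)\mathrm{Re}(z^2)}$ cancels the first factor exactly. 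This gives
\[
\EE_G|Y_G(z)|^2=\EE_\rho\exp\{n|z|^2\xi(S_n/n)\}.
\]
(In terms of the couplings, this is $e^{-|z|^2 v_\varnothing}$ times a similar expression, but the overlap form is cleaner.) By \eqref{eq:mixed-overlap-bound} with $u=T^2<\beta_{\mathrm{2nd}}^2$, together with monotonicity in $|z|^2$, this is at most $C_T$ uniformly in $n$ and $|z|\le T$. For the monotonicity step, note that $\xi(q)$ may be negative for negative $q$. This is harmless: we bound the integrand by $\exp\{nT^2\xi(|S_n|/n)\}$ and pay the factor two from symmetry, as in the text.

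\textbf{Sup bound and Harnack.} Applying \eqref{eq:an-cauchy-sup} on the circle of radius $T$, then Tonelli and Markov, gives $\sup_{|z|\le R_1}|Y_G|\le\sqrt{C_U/\alpha}$ with probability at least $1-\alpha$. On the intersection with the zero-free event, $u=\log M-\log|Y_G|$ is a nonnegative harmonic function on $|z|<R_1$ with $u(0)=\log M$. Harnack's inequality then gives $\inf_{|z|\le\beta_*}|Y_G|\ge M^{-2\beta_*/(R_1-\beta_*)}$, which yields $c=\beta_*/(R_1-\beta_*)$ and $C=C_U^c$.

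The main obstacle is the complex second-moment step. One must check that the normalization cancels exactly for complex $z$, which relies on $\EE Z_G(z)=e^{n\xi(1)z^2/2}$ being entire. One must also check that \cite[Proposition~3.4]{BHL25} applies at $u=T^2$ uniformly in $n$, handling the small-$n$ cases by enlarging the constant as before.
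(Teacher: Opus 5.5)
Your proposal is correct and follows the paper's proof essentially step for step. You use zero-freeness from \cite[Theorem~1.2]{BHL25}, bound the exact identity $\EE_G|Y_G(z)|^2=\EE_\rho\exp\{n|z|^2\xi(S_n/n)\}$ via \eqref{eq:mixed-overlap-bound}, obtain the Cauchy--Markov sup bound, and apply Harnack's inequality with $Y_G(0)=1$ to get $c=\beta_*/(R_1-\beta_*)$. The differences are cosmetic: you use radii $R_1$ and $T=(R_1+\beta_{\mathrm{2nd}})/2$ where the paper uses $R_0$ and $\sqrt\lambda$, and since the Cauchy bound only needs the moment estimate on the circle $|z|=T$, your monotonicity and sign discussion is not actually needed.
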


\begin{proof}
We follow the proof of \Cref{lem:lower-bound}. Recall that 
\cite[Theorem~1.2]{BHL25} gives that 
\begin{equation}\label{eq:mixed-zero-free}
 \PP_G\!\left[Y_G(z)\ne0\text{ for all }|z|<R_0\right]
 \ge1-\zeta_n
\end{equation}
for some $\zeta_n=o(1)$. The Gaussian calculation in the proof of
\Cref{fact:overlap-moment}  gives
\begin{equation}\label{eq:mixed-complex-second-moment}
 \EE_G|Y_G(z)|^2
 =\EE_\rho\exp\{n|z|^2\xi(S_n/n)\},
\end{equation}
where $S_n=\sum_i\rho_i$ for independent uniform signs $\rho_i$.
Thus \eqref{eq:mixed-overlap-bound} bounds this second moment by
$C_\lambda$ on $|z|=\sqrt\lambda$. Applying
\eqref{eq:an-cauchy-sup} as in the proof of
\Cref{lem:upper-bound}, we obtain
\begin{equation}\label{eq:mixed-uniform-upper-moment}
 \EE_G\sup_{|z|\le R_0}|Y_G(z)|^2
 \le(1-R_0/\sqrt\lambda)^{-2}C_\lambda
 \le C_U,
\end{equation}
where $C_U\ge1$ is fixed. Hence, by Markov's inequality,
$\sup_{|z|\le R_0}|Y_G(z)|\le M\coloneqq\sqrt{C_U/\alpha}$
with probability at least $1-\alpha$.

On the intersection with the event in \eqref{eq:mixed-zero-free},
the Harnack argument in the proof of \Cref{lem:lower-bound}, with
$R_0$ in place of $R_1$, gives
\[
 \inf_{|z|\le\beta_*}|Y_G(z)|
 \ge M^{-2\beta_*/(R_0-\beta_*)}
 =C_U^{-c}\alpha^c,
\]
for $c = b \eta_* / (R_0-\beta_*)$,  so the lemma follows with $C=C_U^c$.
\end{proof}

\subsection{Proof of the main theorem}
\label{app:mixed-completion}

We first extend \Cref{lem:entire-lift,lem:cauchy-sup} to the mixed
model. For a deterministic multilinear polynomial
$f(t)=\sum_{F\subseteq E}f_Ft_F$, define, as in \eqref{eq:tr-lift},
\begin{equation}\label{eq:mixed-entire-lift}
 \mathcal L_Gf(z)\coloneqq
 e^{zJ_\varnothing-n\xi(1)z^2/2}
 \sum_{F\subseteq E}f_F
       \prod_{A\in F}\sinh(zJ_A)
       \prod_{A\in E\setminus F}\cosh(zJ_A).
\end{equation}
This function is entire. For any real $\beta$, it is $f(t(\beta))$
times the normalized prefactor from \Cref{app:mixed-expansion}.
In particular, \eqref{eq:mixed-alr} gives $\mathcal L_GP=Y_G$.

\begin{lemma}\label{lem:mixed-entire-lift}
There are constants $C_0,C_{\mathrm{err}}\ge1$ such that every
deterministic multilinear polynomial $f$ satisfies
\begin{equation}\label{eq:mixed-lift-bound}
 \sup_{|z|\le R_0}\EE_G|\mathcal L_Gf(z)|^2
 \le C_0\|f\|_w^2.
\end{equation}
Consequently, for every deterministic multilinear approximation
$\widehat P$,
\begin{equation}\label{eq:mixed-uniform-error}
 \EE_G\sup_{|z|\le\beta_*}
       |\mathcal L_G\widehat P(z)-Y_G(z)|^2
 \le C_{\mathrm{err}}\|\widehat P-P\|_w^2.
\end{equation}
\end{lemma}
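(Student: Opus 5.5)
We follow the proofs of \Cref{lem:entire-lift} and \Cref{lem:cauchy-sup}, replacing the single edge variance $1/n$ by the interaction variances $v_A$. Write $z=s+it$. For a centered Gaussian $J$ with variance $v$, a direct Gaussian computation gives
\[
 \EE|\cosh(zJ)|^2=e^{(s^2-t^2)v}\cosh(|z|^2v),\qquad
 \EE|\sinh(zJ)|^2=e^{(s^2-t^2)v}\sinh(|z|^2v),
\]
and also $\EE|e^{zJ_\varnothing}|^2=e^{2s^2v_\varnothing}$. The lift mixes $\sinh$ and $\cosh$ factors, and $\sinh$ is odd while $\cosh$ is even in $J_A$. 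So for $F\ne F'$ the cross term $\EE\bigl[\prod\cdots\overline{\prod\cdots}\bigr]$ contains an odd function of some independent $J_A$ with $A\in F\triangle F'$, and it vanishes. The prefactor $e^{zJ_\varnothing}$ depends only on $J_\varnothing$, which is independent of $(J_A)_{A\in E}$, so it factors out. Hence
\[
 \EE_G|\mathcal L_Gf(z)|^2
 = \bigl|e^{-n\xi(1)z^2/2}\bigr|^2 e^{2s^2v_\varnothing}
 \prod_{A\in E}e^{(s^2-t^2)v_A}\cosh(|z|^2v_A)
 \sum_F|f_F|^2\prod_{A\in F}\tanh(|z|^2v_A).
\]

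The first step is to collapse the phase factors. Using $\sum_{A\in E}v_A+v_\varnothing=\EE H_G(\sigma)^2=n\xi(1)$ (this is the identity $\sum_{A}v_A\rho_A=n\xi(S_n/n)$ at $\rho\equiv1$), and $|e^{-n\xi(1)z^2/2}|^2=e^{-n\xi(1)(s^2-t^2)}$, the exponentials combine into $e^{(s^2+t^2)v_\varnothing}=e^{|z|^2v_\varnothing}$. This is $O(1)$ by \eqref{eq:mixed-variance-bounds}. Next, $\tanh(|z|^2v_A)\le R_0^2v_A\le\lambda v_A=w_A$ for $|z|\le R_0<\sqrt\lambda$, so the sum is at most $\|f\|_w^2$.

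The step needing the most care is the remaining product $\prod_{A\in E}\cosh(|z|^2v_A)\le\exp\bigl(\tfrac{R_0^4}{2}\sum_{A\in E}v_A^2\bigr)$. I would bound $\sum_A v_A^2$ by grouping by $|A|=k$ and using \eqref{eq:mixed-variance-bounds}. There are $O(n^k)$ sets of size $k$ and $v_A=O(n^{1-k})$ for $k\ge3$, giving $O(n^{2-k})=O(1)$ in total. For $k=2$ we get $O(n^2\cdot n^{-2})=O(1)$, and for $k=1$ we get $O(n\cdot n^{-2})$. Thus $\sum_A v_A^2=O(1)$, the product is bounded by a constant, and \eqref{eq:mixed-lift-bound} follows with $C_0$ depending only on $\xi$ and $R_0$. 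Uniformity in $|z|\le R_0$ holds because every bound used depends only on $|z|$.

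For \eqref{eq:mixed-uniform-error}, $\mathcal L_Gf$ is entire, being a finite sum of products of entire functions. We apply \eqref{eq:an-cauchy-sup1} to $f=\widehat P-P$ with $R=\beta_*$ and $T=R_0$, then take expectations and use Tonelli together with \eqref{eq:mixed-lift-bound} on the circle $|z|=R_0$. Since $\mathcal L_G$ is linear and $\mathcal L_GP=Y_G$, this gives the claim with $C_{\mathrm{err}}=C_0(1-\beta_*/R_0)^{-2}$, exactly as in \Cref{lem:cauchy-sup}.
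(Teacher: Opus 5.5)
Your proposal is correct and follows the paper's proof step by step. You use the same orthogonal second-moment identity, the same collapse of the Gaussian phase factors to $e^{|z|^2 v_\varnothing}$ via $v_\varnothing+\sum_{A\in E}v_A=n\xi(1)$, and the same bound $\cosh x\le e^{x^2/2}$ with $\sum_A v_A^2=O(1)$. The final Cauchy-integral step with $C_{\mathrm{err}}=C_0(1-\beta_*/R_0)^{-2}$ is also the paper's; your grouping by $|A|$ just spells out the $\sum_A v_A^2=O(1)$ claim that the paper reads off directly from the variance bounds.
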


\begin{proof}
The calculation in the proof of \Cref{lem:entire-lift}, using
$v_\varnothing+\sum_{A\in E}v_A=n\xi(1)$, gives
\begin{equation}\label{eq:mixed-lift-norm}
 \EE_G|\mathcal L_Gf(z)|^2
 =e^{|z|^2v_\varnothing}
       \prod_{A\in E}\cosh(|z|^2v_A)
       \sum_{F\subseteq E}|f_F|^2
            \prod_{A\in F}\tanh(|z|^2v_A).
\end{equation}
Here the factor $e^{|z|^2v_\varnothing}$ comes from the constant
interaction. By \eqref{eq:mixed-variance-bounds},
$v_\varnothing=O(1)$ and $\sum_{A\in E}v_A^2=O(1)$. Hence, for
$|z|\le R_0$,
\[
 e^{|z|^2v_\varnothing}\prod_{A\in E}\cosh(|z|^2v_A)
 \le\exp\left\{R_0^2v_\varnothing+
             \frac{R_0^4}{2}\sum_{A\in E}v_A^2\right\}
 \le C_0.
\]
Together with
$\tanh(|z|^2v_A)\le R_0^2v_A\le\lambda v_A=w_A$,
this proves \eqref{eq:mixed-lift-bound}. Applying
\eqref{eq:an-cauchy-sup1} to $\mathcal L_G(\widehat P-P)$,
exactly as in the proof of \Cref{lem:cauchy-sup}, proves
\eqref{eq:mixed-uniform-error} with
$C_{\mathrm{err}}=C_0(1-\beta_*/R_0)^{-2}$.
\end{proof}

\begin{proof}[Proof of \Cref{thm:counting-intro}]
Let $c,C$ be the constants in \Cref{lem:mixed-lower-bound},
and put $a\coloneqq C^{-1}(\delta/2)^c$ and  $C_{\mathrm{err}}$ be the
constant in \Cref{lem:mixed-entire-lift}.. Choose $L=\Theta(\log(n/(\varepsilon\delta)))$ so that
\[
 C_{\mathrm{err}}e^{-L}\le\frac{\delta}{2}\varepsilon^2a^2.
\]
Take $\widehat P_L$ from
\Cref{cor:mixed-coefficient-approximation} and set
$\widehat Y_G\coloneqq\mathcal L_G\widehat P_L$.
Then, from \Cref{lem:mixed-lower-bound,lem:mixed-entire-lift} and Markov's
inequality, we can conclude that
\begin{equation}\label{eq:mixed-final-relative}
 \PP_G\!\left[
 \sup_{|\beta|\le\beta_*}
 \left|\frac{\widehat Y_G(\beta)}{Y_G(\beta)}-1\right|
 >\varepsilon
 \right]
 \le\zeta_n+\frac{\delta}{2}
       +\frac{C_{\mathrm{err}}e^{-L}}{\varepsilon^2a^2}
 \le\zeta_n+\delta.
\end{equation}
For real $\beta$, return
\begin{equation}\label{eq:mixed-output}
 \widehat Z_G(\beta)\coloneqq
 e^{\beta J_\varnothing}
 \prod_{A\in E}\cosh(\beta J_A)\,
 \widehat P_L(t(\beta)).
\end{equation}
Note that by \eqref{eq:mixed-alr} and \eqref{eq:mixed-entire-lift}, this has the same relative error as $\widehat Y_G$. Moreover, construction and evaluation of the estimator take
$\poly(n)e^{O(L)}+O(n^{p_{\max}})
=(n/(\varepsilon\delta))^{O(1)}$ operations, which concludes the proof.
\end{proof}

\bibliographystyle{alpha}
\bibliography{references}

\end{document}